\documentclass[11pt]{article}
\usepackage[letterpaper, margin=1in]{geometry}
\usepackage[utf8]{inputenc}
\usepackage{newtxtext}
\usepackage{xcolor}
\def\anonymous{0}      % 1 = anonymous
\def\PrintMode{0}%normally PrintMode = 0. If PrintMode = 1, extreme methods will be used to reduce the number of pages. Can also set fontsize = 10pt to further shrink pages...

\usepackage{tikz}
\usepackage{booktabs}
\usetikzlibrary{positioning}
\usepackage{pgffor}
\usetikzlibrary{decorations.pathreplacing}
\usepackage{expl3}
\usepackage[T1]{fontenc}
\usepackage{graphicx}
\usepackage{float}
\usepackage[lined,ruled,linesnumbered]{algorithm2e}
\usepackage{xspace}
\usepackage[utf8]{inputenc}
\usepackage[noadjust]{cite}
\usepackage{amsmath, amssymb, amsfonts, amsthm}
\usepackage{enumerate}
\usepackage{xcolor}
\usepackage[all]{xy}
\usepackage[linktocpage=true,pagebackref=true,colorlinks,linkcolor=magenta,citecolor=blue,bookmarks,bookmarksopen,bookmarksnumbered]{hyperref}
\usepackage{bm}
\usepackage{bbm}
\usepackage[normalem]{ulem}
\usepackage{paralist}
\usepackage{gensymb}
\usepackage{thmtools}
\usepackage{rotating}
\usepackage{mdframed}
\usepackage{multirow}
\usepackage{appendix}
\usepackage{tabularx}
\usepackage{verbatim}
\usepackage{mathrsfs}
\usepackage{indentfirst}
\usepackage{mleftright}
\usepackage[nottoc]{tocbibind}%add bib in toc
\usepackage{complexity}
\usepackage{tablefootnote}
\usepackage{epigraph}
\usepackage{autobreak}

\ifnum\PrintMode=1

\usepackage{savetrees}
\else
\fi

\graphicspath{{./figs/}}

\usepackage[capitalize]{cleveref}%cleveref must be loaded *last*

\newtheorem{theorem}{Theorem}[section]
\newtheorem{lemma}[theorem]{Lemma}
\newtheorem{proposition}[theorem]{Proposition}
\newtheorem{corollary}[theorem]{Corollary}

\theoremstyle{definition}
\newtheorem{definition}[theorem]{Definition}

\newtheorem{remark}[theorem]{Remark}

\renewcommand*\backref[1]{\ifx#1\relax \else (cit.~on p.~#1) \fi} %http://latex.org/forum/viewtopic.php?t=3670

\makeatletter
\def\moverlay{\mathpalette\mov@rlay}
\def\mov@rlay#1#2{\leavevmode\vtop{%
		\baselineskip\z@skip \lineskiplimit-\maxdimen
		\ialign{\hfil$\m@th#1##$\hfil\cr#2\crcr}}}
\newcommand{\charfusion}[3][\mathord]{
	#1{\ifx#1\mathop\vphantom{#2}\fi
		\mathpalette\mov@rlay{#2\cr#3}
	}
	\ifx#1\mathop\expandafter\displaylimits\fi}
\makeatother

\renewcommand{\emptyset}{\varnothing}

\newlang{\MCSP}{MCSP}
\newlang{\MFSP}{MFSP}
\newlang{\MKtP}{MKtP}
\newlang{\MKTP}{MKTP}
\newlang{\itrMCSP}{itrMCSP}
\newlang{\itrMKTP}{itrMKTP}
\newlang{\itrMINKT}{itrMINKT}
\newlang{\MINKT}{MINKT}
\newlang{\MINK}{MINK}
\newlang{\MINcKT}{MINcKT}
\newlang{\CMD}{CMD}
\newlang{\DCMD}{DCMD}
\newlang{\CGL}{CGL}
\newlang{\PARITY}{PARITY}
\newlang{\Empty}{\textsc{Empty}}
\newlang{\Avoid}{\textsc{Avoid}}
\newlang{\Sparsification}{\textsc{Sparsification}}
\newlang{\HamEst}{\mathsf{HammingEst}}
\newlang{\HamHit}{\mathsf{HammingHit}}
\newlang{\CktEval}{\textsc{Circuit-Eval}}
\newlang{\Hard}{\textsc{Hard}}
\newlang{\cHard}{\textsc{cHard}}
\newlang{\CAPP}{CAPP}
\newlang{\GapUNSAT}{GapUNSAT}
\newlang{\OV}{OV}
\renewlang{\PCP}{PCP}
\newlang{\PCPP}{PCPP}
\newclass{\Avg}{Avg}
\newclass{\ZPEXP}{ZPEXP}
\newclass{\DLOGTIME}{DLOGTIME}
\newclass{\ALOGTIME}{ALOGTIME}
\newclass{\ATIME}{ATIME}%alternating time
\newclass{\SZKA}{SZKA}
\newclass{\Laconic}{Laconic\text{-}}
\newclass{\APEPP}{APEPP}
\newclass{\SAPEPP}{SAPEPP}
\newclass{\TFSigma}{TF\Sigma}
\newclass{\NTIMEGUESS}{NTIMEGUESS}

\newlang{\Formula}{Formula}
\newlang{\THR}{THR}
\newlang{\EMAJ}{EMAJ}
\newlang{\MAJ}{MAJ}
\newlang{\SYM}{SYM}
\newlang{\DOR}{DOR}
\newlang{\ETHR}{ETHR}
\newlang{\Midbit}{Midbit}
\newlang{\LCS}{LCS}
\newlang{\TAUT}{TAUT}
\newlang{\Poly}{\text{-}Poly}

\newcommand{\N}{\mathbb{N}}

\newcounter{prob}
\newenvironment{prob}[2][]{%
\refstepcounter{prob}%
\mdfsetup{%
frametitle={%
\tikz[baseline=(current bounding box.east),outer sep=0pt]
\node[anchor=east,rectangle,fill=white]
{\strut Problem~\arabic{prob}:~#1};}}%
\mdfsetup{innertopmargin=0pt,linecolor=black,%
linewidth=1pt,topline=true,%
frametitleaboveskip=\dimexpr-\ht\strutbox\relax
}
\begin{mdframed}[]\relax%
\label{#2}}{\end{mdframed}}

\renewcommand{\epsilon}{\varepsilon}

\usepackage[draft, inline,marginclue,index]{fixme}  %Remove draft to erase all notes, remove in line to erase inline notes.
\usepackage{subfloat}
\usepackage{subcaption}
\fxsetup{theme=color,mode=multiuser}

\definecolor{color1}{RGB}{46,134,193}
\FXRegisterAuthor{mahdi}{amahdi}{\color{orange}Mahdi}

\newcommand\blfootnote[1]{
  \begingroup
  \renewcommand\thefootnote{}\footnote{#1}
  \addtocounter{footnote}{-1}
  \endgroup
}

\title{Subset Balancing and Generalized Subset Sum via Lattices}
\ifnum\anonymous=0
\author{
Yiming Gao$^{1}$ \quad
Yansong Feng$^{2,3}$ \quad
Honggang Hu$^{1,4}$ \quad 
Yanbin Pan$^{2,3}$}

\date{
}
\else
\author{Anonymous}
\date{}
\fi
\begin{document}

\maketitle

\blfootnote{\textsuperscript{1}School of Cyber Science and Technology, University of Science and Technology of China, and Anhui Province Key Laboratory of Digital Security, Hefei, China, \texttt{qw1234567@mail.ustc.edu.cn, hghu2005@ustc.edu.cn}}

\blfootnote{\textsuperscript{2}KLMM, Academy of Mathematics and Systems Science, Chinese Academy of Sciences, Beijing, China}
\blfootnote{\textsuperscript{3}School of Mathematical Sciences, University of Chinese Academy of Sciences, Beijing, China, \texttt{\{fengyansong,panyanbin\}@amss.ac.cn}}
\blfootnote{\textsuperscript{4}Hefei National Laboratory, Hefei, China}

\begin{abstract}
We study the \emph{Subset Balancing} problem: given $\mathbf{x} \in \mathbb{Z}^n$ and a coefficient set $C \subseteq \mathbb{Z}$, find a nonzero vector $\mathbf{c} \in C^n$ such that $\mathbf{c}\cdot\mathbf{x} = 0$. The standard meet-in-the-middle algorithm runs in time $\tilde{O}(|C|^{n/2})=\tilde{O}(2^{n\log |C|/2})$, and recent improvements (SODA~2022, Chen, Jin, Randolph, and Servedio; STOC~2026, Randolph and Węgrzycki) beyond this barrier apply mainly when $d$ is constant.

We give a reduction from Subset Balancing with $C = \{-d, \dots, d\}$ to a single instance of $\mathrm{SVP}_{\infty}$ in dimension $n+1$. Instantiating this reduction with the best currently known $\ell_\infty$-SVP algorithms yields a deterministic algorithm with running time $\tilde{O}((6\sqrt{2\pi e})^n) \approx \tilde{O}(2^{4.632n})$, and a randomized algorithm with running time $\tilde{O}(2^{2.443n})$ (here $\tilde{O}$ suppresses $\operatorname{poly}(n)$ factors). Crucially, our running times depend only on $n$ and are independent of $d$, thereby removing the $\log d$ factor from the exponent entirely. Even for moderate $d$, the randomized bound already beats meet-in-the-middle for all $d\ge 15$. We also show that for sufficiently large $d$, Subset Balancing is solvable in polynomial time. More generally, we extend the box constraint $[-d,d]^n$ to an arbitrary centrally symmetric convex body $K \subseteq \mathbb{R}^n$, obtaining a deterministic $\tilde{O}(2^{c_K n})$-time algorithm, where $c_K$ depends only on the shape of $K$.

We further study the \emph{Generalized Subset Sum} problem of finding $\mathbf{c} \in C^n$ such that $\mathbf{c} \cdot \mathbf{x} = \tau$.  For $C = \{-d, \dots, d\}$ or $C = \{-d,\dots,d\}\setminus\{0\}$, we reduce the worst-case problem to $\mathrm{CVP}_{\infty}$ in dimension $n+1$.  Although no general single-exponential time algorithm is known for exact $\mathrm{CVP}_{\infty}$, we observe that distances in our lattice take only integer values, so an approximate $\mathrm{CVP}_{\infty}$ oracle still suffices. Combining this with the approximate $\mathrm{CVP}_\infty$ algorithm yields a deterministic worst-case algorithm running in time $2^{O(n\log\log d)}$, already asymptotically faster than meet-in-the-middle. 

In the average-case setting, we demonstrate that for both coefficient sets the embedded $\mathrm{CVP}_{\infty}$ instance satisfies a bounded-distance promise with high probability, which allows us to remove the $\log\log d$ factor altogether. This yields a deterministic algorithm running in time $\tilde{O}((18\sqrt{2\pi e})^n) \approx \tilde{O}(2^{6.217n})$.
\end{abstract}

\newpage

\section{Introduction}
\label{sec:introduction}
The Subset Sum problem is a classic NP-complete problem, and whether it admits an exact algorithm faster than $\tilde{O}(2^{0.5n})$ (where $\tilde{O}$ suppresses $\operatorname{poly}(n)$ factors) remains a central open question. Despite extensive efforts, the Meet-in-the-Middle algorithm of Horowitz and Sahni, with running time $\tilde{O}(2^{0.5n})$, remains the benchmark for worst-case instances~\cite{horowitz1974computing}.

In the average-case setting, a breakthrough result by Howgrave-Graham and Joux~\cite{howgrave2010new} introduced the \emph{representation technique}, enabling algorithms that beat Meet-in-the-Middle for several variants. In particular, they obtained an $\tilde{O}(2^{0.337n})$-time algorithm under reasonable heuristic assumptions, which was later improved to $\tilde{O}(2^{0.283n})$~\cite{becker2011improved,bohme2011verbesserte,bonnetain2020improved}. Following this line of work, a sequence of results has focused on subset sum and related variants. In 2019, the representation technique was further applied to the Equal Subset Sum problem, yielding an $\tilde{O}(3^{0.488n})$ worst-case algorithm, improving upon the previous $\tilde{O}(3^{0.5n})$ Meet-in-the-Middle bound~\cite{mucha2019equal}. This was recently further improved to $\tilde{O}(3^{0.487n})$~\cite{randolph2025beating}.

\subsection{Generalized Subset Sum, Subset Balancing, and the Meet-in-the-Middle Barrier}

Chen, Jin, Randolph, and Servedio~\cite{chen2022average} (SODA~2022) studied a generalized subset sum problem in which $\mathbf{x}$ is sampled uniformly from $[0, M-1]^n$, and one seeks $\mathbf{c} \in C^n$ satisfying $\mathbf{c}\cdot\mathbf{x} = \tau$.

\begin{prob}[Generalized Subset Sum (GSS)]{prob:gss}
\textbf{Input.} An input bound $M$, a vector $\mathbf{x} = (x_1, \dots, x_n) \in [0, M-1]^n$, a set $C \subset \mathbb{Z}$ of allowed coefficients, and a target integer $\tau$. \\
\textbf{Output.} A vector $\mathbf{c} \in C^n$ such that $\mathbf{c}\cdot\mathbf{x} = \tau$, if one exists.
\end{prob}

This formulation captures, for example, Equal Subset Sum (ESS) when $C = \{-1,0,1\}$, and Partition when $C = \{-1,1\}$ with $\tau=0$. For constant-size coefficient sets $C$, for example $C=[-d:d] := \{-d,-d+1,\ldots,d\}$ or $C=[\pm d] := \{-d,\ldots,-1,1,\ldots,d\}$ where $d$ is a fixed integer, Chen et al.~\cite{chen2022average} obtained running times of the form $|C|^{(1/2 - \Omega(1/|C|))n}$ (with high success probability), together with sharp structural results about when solutions exist in the average-case setting. Rewriting this in base $2$, the improvement over meet-in-the-middle lies in reducing the constant multiplying $n \log d$ from $1 / 2$ to $1 / 2-\Omega(1 |C|)$, but the $\log d$ factor in the exponent is preserved. Moreover, the savings $\Omega(1/|C|)$ vanish as $d$ grows, so the improvement becomes negligible for large coefficient sets.

More recently, Randolph and W\k{e}grzycki~\cite{randolph2025beating} (STOC~2026) focused on the case $\tau = 0$, referred to as the Subset Balancing problem:

\begin{prob}[Subset Balancing]{prob:sb}
\textbf{Input.} A vector $\mathbf{x} \in \mathbb{Z}^n$ and a set $C \subset \mathbb{Z}$ of allowed coefficients. \\
\textbf{Output.} A vector $\mathbf{c} \in C^n \setminus \{\mathbf{0}\}$ such that $\mathbf{c}\cdot\mathbf{x} = 0$, if one exists.
\end{prob}

They obtained \emph{worst-case} algorithms that break the Meet-in-the-Middle barrier for a wide range of constant-size coefficient sets, $[-d:d]$ for $d\ge1$ and $[\pm d]$ for $d>2$. Their work extends representation-based methods to worst-case inputs via new tools such as coefficient shifting and compatibility certificates, achieving running time $\tilde{O}(|C|^{(1/2 - \varepsilon)n})$ for some constant $\varepsilon > 0$ depending on $C$. Also, their techniques are designed for fixed $d$ , as $d$ grows, both the improvement $\varepsilon$ and the algorithmic framework become increasingly unwieldy. For the specific cases computed in~\cite{randolph2025beating}, the savings $\varepsilon$ decrease from $ 0.022$ for $C=[-2: 2]$ to $0.005$ for $C=[\pm 3]$, consistent with the interpretation that the exponent remains $\Theta(n \log d)$.

These barrier-breaking results leave open a complementary algorithmic question: \emph{what happens when $d$ is large?} Intuitively, larger $d$ should make subset balancing “easier” because the feasible set $C^n$ grows rapidly. At the same time, Meet-in-the-Middle becomes less attractive, as its running time scales roughly as $(2d)^{n/2}$.

This motivates the following questions that guide our work:
\begin{enumerate}
  \item Can we formalize the intuition that large $d$ leads to easy (or even polynomial-time) solvability?
  \item Can we design algorithms that outperform Meet-in-the-Middle when $d$ is large, ideally with running time depending primarily on the dimension $n$ rather than on $|C|$? In particular, can this be achieved \emph{deterministically}?
\end{enumerate}

We answer these questions by taking a lattice-theoretic viewpoint.

\subsection{Our Contributions}

\paragraph{Subset Balancing Problem.} In this work, we first focus on the Subset Balancing Problem with
\(C = [-d:d]\) in the worst case. We consider the lattice consisting of all integer solutions $\mathbf{c}$ such that $\mathbf{c} \cdot \mathbf{x} = 0$, thereby reducing the task of finding $\mathbf{c} \in C^n$ to an instance of $\mathrm{SVP}_{\infty}$ (the Shortest Vector Problem in the $\ell_{\infty}$ norm). However, this lattice is not full rank. To leverage the state-of-the-art deterministic algorithms for $\mathrm{SVP}_{\infty}$ based on the framework of Dadush, Peikert, and Vempala~\cite{dadush2011enumerative} (FOCS~2011), we construct a full-rank lattice by a standard embedding. We further refine the analysis in the $\ell_\infty$ setting and obtain the explicit base \(6\sqrt{2\pi e}\) appearing in Theorem~\ref{thm:mainworstcase}. For the randomized setting, we rely on the SVP algorithm of Mukhopadhyay~\cite{mukhopadhyay2019faster}, which yields Theorem~\ref{thm:mainworstcase}.

\begin{theorem}[Algorithms for Worst-Case SBP with \texorpdfstring{$C=[-d:d]$}{C=[-d:d]}]
\label{thm:mainworstcase}
Given any $d,n \in \mathbb{N}$ and let $C = [-d:d]$, there is a deterministic algorithm for Subset Balancing Problems in running time
\[
\tilde{O}\!\left(\bigl(6\sqrt{2\pi e}\,\bigr)^n\right)\approx \tilde{O}(2^{4.632n}),
\]    
and a randomized algorithm in time
\[
\tilde{O}(2^{2.443n}).
\]
\end{theorem}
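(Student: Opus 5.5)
We reduce Subset Balancing with $C=[-d:d]$ to one instance of $\mathrm{SVP}_\infty$ in a full-rank lattice of dimension $n+1$, and then call a known $\ell_\infty$-SVP solver.

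\textbf{The lattice.} Fix a large integer weight $W$, say $W = 2d\,n\,\max_i|x_i|+1$, or any $W>d\sum_i|x_i|$. Let $\mathcal{L}\subseteq\mathbb{Z}^{n+1}$ be generated by the columns of
\[
B=\begin{pmatrix} I_n & \mathbf{0}\\ W\mathbf{x}^{\top} & W\end{pmatrix},
\]
so that $\mathcal{L}=\{(\mathbf{c},\,W(\mathbf{c}\cdot\mathbf{x}+t)) : \mathbf{c}\in\mathbb{Z}^n,\ t\in\mathbb{Z}\}$. This is too permissive, because $t$ can cancel $\mathbf{c}\cdot\mathbf{x}$ for every $\mathbf{c}$. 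The fix is to drop the free $t$ and enlarge the weight. Take
\[
\mathcal{L}=\{(\mathbf{c},\,W\,\mathbf{c}\cdot\mathbf{x})\}+\mathbb{Z}\cdot(\mathbf{0},W'),
\]
where $W'$ exceeds the range of any relevant $W\mathbf{c}\cdot\mathbf{x}$. The cleanest route is to take $W'$ huge and $W$ moderately large, and argue as follows.

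\textbf{Correctness.} Every nonzero lattice vector has the form $(\mathbf{c}, W\mathbf{c}\cdot\mathbf{x}+tW')$.
\begin{itemize}
\item If $\mathbf{c}=0$, then $t\neq0$ and the $\ell_\infty$ norm is at least $W'>d$.
\item If $\mathbf{c}\cdot\mathbf{x}\neq 0$ and $\|\mathbf{c}\|_\infty\le d$, then $|W\mathbf{c}\cdot\mathbf{x}|\le Wdn\max|x_i|<W'/2$. Hence the last coordinate has absolute value at least $W$ for every $t$, and we choose $W>d$.
\item Otherwise the norm is at least $\|\mathbf{c}\|_\infty$.
\end{itemize}
So a solution $\mathbf{c}\in C^n\setminus\{0\}$ exists if and only if $\lambda_1^{(\infty)}(\mathcal{L})\le d$, and any shortest vector then has the form $(\mathbf{c},0)$ with $\mathbf{c}\cdot\mathbf{x}=0$ and $\|\mathbf{c}\|_\infty\le d$. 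The algorithm computes a shortest vector $\mathbf{v}$ and outputs $\mathbf{c}$ if $\|\mathbf{v}\|_\infty\le d$; otherwise it reports that no solution exists. The lattice is full rank with a basis of polynomial bit length, since $W,W'$ have $\mathrm{poly}(n,\log d,\log\max|x_i|)$ bits. The SVP algorithms we use have running time depending only polynomially on the input size, so the total cost is independent of $d$ up to polynomial factors.

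\textbf{Plugging in SVP algorithms.} For the randomized bound we invoke Mukhopadhyay's $\ell_\infty$-SVP algorithm, which runs in time $\tilde O(2^{2.443m})$ in dimension $m$. Applying it with $m=n+1$ changes only a constant factor. For the deterministic bound we use the Dadush--Peikert--Vempala framework for general norms:
\begin{itemize}
\item Compute an $M$-ellipsoid covering of $K=B_\infty^{m}$ deterministically; for the cube this can be done explicitly, essentially with the John ellipsoid / scaled $\ell_2$ ball.
\item Enumerate lattice points in $\lambda K$ by covering it with translates of the ellipsoid and running the Micciancio--Voulgaris enumeration inside each translate.
\end{itemize}
The main obstacle is the explicit constant $6\sqrt{2\pi e}$. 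The plan is to bound the covering number $N(2K,E)$ for the ellipsoid $E=\sqrt{m}\,B_2^m$, or a well-chosen multiple of it, by a volume ratio. This uses $\mathrm{vol}(B_\infty^m)=2^m$ and $\mathrm{vol}(\sqrt m B_2^m)\approx(2\pi e)^{m/2}/\sqrt{\pi m}$. We then multiply by the $\tilde O(2^{m})$ per-ellipsoid cost of Micciancio--Voulgaris-type enumeration and the number of scales tried. The task is to choose the scaling so that the product of the packing/covering factor, roughly $3^m$ from the standard net bound for $2K$ against $K$, with these volume terms comes out to $(6\sqrt{2\pi e})^m$. We would verify this bookkeeping carefully, since that is where the stated base is determined.
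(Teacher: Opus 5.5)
Your reduction is correct and is the paper's up to cosmetics. Your lattice $\{(\mathbf{c},W\,\mathbf{c}\cdot\mathbf{x})\}+\mathbb{Z}(\mathbf{0},W')$ is $\mathcal{L}_{\alpha,q}$ with the heavy coordinate moved last and more generous parameters; the paper takes $W=\alpha=d+1$ and $W'=\alpha q$ with $q=d\sum_i|x_i|+1$. The randomized bound also follows exactly as in the paper, from one call to Mukhopadhyay's algorithm in dimension $n+1$.

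The gap is the deterministic bound $\tilde O((6\sqrt{2\pi e})^n)$. This is the actual quantitative content of the statement, and you leave it as bookkeeping to be verified later. The bookkeeping you outline would not produce it:
\begin{itemize}
\item Enumerating $L\cap(E+\mathbf{t})$ with the Micciancio--Voulgaris Voronoi graph is output-sensitive. It costs $\tilde O(2^{2m}+2^m\,|L\cap(E+\mathbf{t})|)$, not $\tilde O(2^m)$ per ellipsoid. So everything hinges on bounding the number of lattice points in the ellipsoid, which your plan does not pin down.
\item A bound on $N(2K,E)$, i.e.\ covering the cube by ellipsoids, only counts how many ellipsoids you enumerate. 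It says nothing about how many lattice points each one contains.
\item The factor $3$ does not come from a net of $2K$ by translates of $K$.
\end{itemize}

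What is needed is the argument of Theorem~\ref{thm:svp-infty}.
\begin{itemize}
\item \emph{Radius search.} Start from $r_0=\lambda_1^{(2)}(L)/\sqrt m\le\lambda_1^{(\infty)}(L)$, computed deterministically by Micciancio--Voulgaris, and multiply the radius by $1+1/m$ each round. After $\operatorname{poly}(m)$ rounds, the first radius $r^\star$ with $L\cap r^\star B_\infty^m\neq\{\mathbf{0}\}$ satisfies $r^\star<(1+1/m)\lambda_1^{(\infty)}(L)$.
\item \emph{Enumeration.} In each round, enumerate the single ball $r\sqrt m\,B_2^m\supseteq rB_\infty^m$ and filter by $\|\cdot\|_\infty$. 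No M-ellipsoid machinery is needed for the cube.
\item \emph{Counting.} Bound the output using the opposite covering direction:
\[
|L\cap r^\star\sqrt m\,B_2^m|\le N(\sqrt m\,B_2^m,B_\infty^m)\cdot G(r^\star B_\infty^m,L).
\]
The Rogers--Zong bound gives $N(\sqrt m\,B_2^m,B_\infty^m)\le\operatorname{poly}(m)\cdot\operatorname{vol}(2\sqrt m\,B_2^m)/\operatorname{vol}(B_\infty^m)=\tilde O((\sqrt{2\pi e})^m)$. The packing bound $G(r^\star B_\infty^m,L)\le(1+2r^\star/\lambda_1^{(\infty)}(L))^m\le(3+2/m)^m=O(3^m)$ is where the $3$ comes from. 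This is why the radius must be within a $1+O(1/m)$ factor of $\lambda_1^{(\infty)}$.
\end{itemize}
Multiplying by the $2^m$ per-point cost of the Voronoi-graph search gives $(2\cdot3\cdot\sqrt{2\pi e})^m$, and taking $m=n+1$ costs only a constant factor. Without the radius search and this counting bound, your plan yields only an unspecified $2^{O(n)}$ (e.g.\ via Dadush--Vempala), not the stated base.
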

Our randomized algorithm is asymptotically faster than the Meet-in-the-Middle algorithm, which runs in time \(\tilde{O}((2d+1)^{n/2})\), whenever \(d \ge 15\). Thus, already for moderately large coefficient bounds, the lattice-based approach yields a genuine improvement over meet-in-the-middle, while also providing a clean deterministic alternative.

We further obtain two additional results. First, for sufficiently large $d$, we derive a polynomial-time algorithm by replacing the SVP oracle with the LLL algorithm~\cite{lenstra1982factoring}. Second, we generalize the coefficient set from the box $[-d:d]^n$ to an arbitrary centrally symmetric convex body $K \subseteq \mathbb{R}^n$ by employing deterministic single-exponential SVP algorithms in general norms~\cite{DadushVempala2013Mellipsoids}. We obtain a deterministic algorithm running in time $\tilde{O}(2^{c_K n})$, where the exponent $c_K$ depends only on the \emph{shape} of $K$ and not on its scale. This abstraction suggests that the coefficient set $C$ can be interpreted as a geometric constraint, which may be useful for studying other variants of subset balancing.

\paragraph{Generalized Subset Sum.} Using an analysis analogous to the one above, we reduce Generalized Subset Sum with $C=[-d:d]$ to $\mathrm{CVP}_{\infty}$ in the worst case. However, unlike $\mathrm{SVP}_{\infty}$, no general single exponential time algorithm is known for exact $\mathrm{CVP}_{\infty}$, and when $C=[\pm d]$ the short vector in the lattice may correspond to an invalid solution in the worst case.

We overcome these by two key observations.

The first observation is that in our lattice, distance from the target to any lattice point can take only integer values, which lets us replace an exact $\mathrm{CVP}_\infty$ oracle by an approximate one. Specifically, for the $C=[-d:d]$ cases, we only need an approximate $\text{CVP}_{\infty}$ oracle with approximation factor $\left(1+\frac{1}{d}-\varepsilon\right)$ for arbitrarily small $\varepsilon>0$. We apply the oracle proposed by Eisenbrand, H\"ahnle, and Niemeier~\cite{eisenbrand2011covering}, yielding the overall complexity $2^{O(n \log\log d)}$ in~\cref{thm:worstcasegss}.

When \(C=[\pm d]=[-d,-1]\cup[1,d]\), the second observation is that \(C^n\) can be viewed as a union of \(2^n\) boxes. For each box, we shift the target vector then call the approximation $\text{CVP}_{\infty}$ oracle. Multiplying by the \(2^n\) factor yields the same overall time $2^{O(n \log\log d)}$ in~\cref{thm:worstcasegss}.

\begin{theorem}[Algorithms for Worst-Case GSS with \texorpdfstring{$C=[-d:d]$ or $[\pm d]$}{C=[-d:d] or [\pm d]}]
\label{thm:worstcasegss}
Given any $d>2,n \in \mathbb{N}$ and let $C = [-d:d]$ or $[\pm d]$, there is a deterministic algorithm for Generalized Subset Sum Problem in running time
\[
2^{O(n\log\log d)}.
\]    
\end{theorem}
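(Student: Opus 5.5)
The plan is to embed the affine constraint $\mathbf{c}\cdot\mathbf{x}=\tau$ into a full-rank lattice in dimension $n+1$, so that solutions $\mathbf{c}\in[-d:d]^n$ correspond exactly to lattice points within $\ell_\infty$-distance $d$ of a fixed target. We then show that an approximate $\mathrm{CVP}_\infty$ oracle with factor $\gamma<1+1/d$ already decides and finds such points, and finally invoke the Eisenbrand--H\"ahnle--Niemeier algorithm~\cite{eisenbrand2011covering}.

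\textbf{Construction.} Take the lattice $L$ generated by the columns of
\[
B=\begin{pmatrix} I_n & \mathbf{0}\\ N\mathbf{x}^{\top} & N\end{pmatrix},
\]
with $N>nd\max_i x_i+|\tau|$ a large integer, and take the target $\mathbf{t}=(\mathbf{0},N\tau)$. A lattice vector has the form $(\mathbf{c},N(\mathbf{c}\cdot\mathbf{x}+k))$ with $\mathbf{c}\in\mathbb{Z}^n$ and $k\in\mathbb{Z}$. Its $\ell_\infty$ distance to $\mathbf{t}$ is $\max(\|\mathbf{c}\|_\infty,\,N|\mathbf{c}\cdot\mathbf{x}+k-\tau|)$, which is an integer. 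If this distance is at most $d<N$, the last coordinate must vanish, so $\mathbf{c}\cdot\mathbf{x}+k=\tau$.

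The lattice in this form still allows a shift by $k$. To remove it, I would instead use the non-full-rank lattice $\{\mathbf{c}:\mathbf{c}\cdot\mathbf{x}\equiv\tau\}$ restricted to exact equality, so that the last coordinate is $N(\mathbf{c}\cdot\mathbf{x}-\tau)$. That is, use the lattice generated by $(\mathbf{e}_i,Nx_i)$ together with an extra vector $(\mathbf{0},NM')$, where $M'>nd\max x_i+|\tau|$. Then $|\mathbf{c}\cdot\mathbf{x}-\tau|<M'$ for $\|\mathbf{c}\|_\infty\le d$, which forces $k=0$. This gives the equivalence: a GSS solution with $C=[-d:d]$ exists if and only if $\mathrm{dist}_\infty(\mathbf{t},L)\le d$.

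\textbf{Approximation suffices.} Run a $\gamma$-approximate $\mathrm{CVP}_\infty$ solver with $\gamma=1+\tfrac1{2d}$. If a solution exists, the returned point $\mathbf{v}$ satisfies $\|\mathbf{v}-\mathbf{t}\|_\infty\le\gamma d<d+1$. By integrality the distance is then at most $d$, so $\mathbf{v}$ encodes a valid $\mathbf{c}$. Checking this output answers the problem. The algorithm of~\cite{eisenbrand2011covering} runs in time $(2+1/\epsilon)^{O(n)}$ for factor $1+\epsilon$. With $\epsilon=\Theta(1/d)$ this would give $d^{O(n)}$, which is too slow.

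\textbf{Main obstacle.} The bound $2^{O(n\log\log d)}$ must come from the exact complexity dependence in~\cite{eisenbrand2011covering}. Their covering technique achieves approximation factor $1+\epsilon$ in time $O(\log(1/\epsilon))^{O(n)}$, and $\epsilon=1/(2d)$ gives $(\log d)^{O(n)}=2^{O(n\log\log d)}$. I would verify that statement carefully, together with its polynomial dependence on the input size of $B$ and $\mathbf{t}$, which is polynomial in $\log N$.

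\textbf{The case $C=[\pm d]$.} Decompose $C^n$ into $2^n$ boxes indexed by $\sigma\in\{\pm1\}^n$, where each coordinate lies in $[1,d]$ or $[-d,-1]$. Each box is a translate of $[-h,h]^n$ with $h=(d-1)/2$, centered at $\sigma(d+1)/2$. When $d$ is odd the center is integral. Substitute $\mathbf{c}=\sigma\tfrac{d+1}{2}+\mathbf{c}'$ and set $\tau'=\tau-\tfrac{d+1}{2}\sigma\cdot\mathbf{x}$, which reduces to the previous case with bound $h$. When $d$ is even, scale all coordinates by $2$ so that the centers become integral, and keep integrality of distances in the scaled lattice, giving a gap of $2$ against radius $2h$. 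This needs $d>2$ so that $h>0$ and the boxes are nondegenerate, which matches the hypothesis. Each box costs $2^{O(n\log\log d)}$, and the factor $2^n$ is absorbed into this bound.
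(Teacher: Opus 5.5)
Your reduction is the paper's. Your lattice generated by $(\mathbf{e}_i,Nx_i)$ and $(\mathbf{0},NM')$ is $\mathcal{L}_{\alpha,q}$ with $\alpha=N$ and $q=M'$, with the first coordinate moved to the end. The integrality gap ($\mathrm{dist}_\infty\le d$ versus $\ge d+1$), the factor $1+\frac{1}{2d}$, and the $2^n$ sign boxes of radius $\frac{d-1}{2}$ are also what the paper uses. For $[\pm d]$, you shift $\tau$ when $d$ is odd and double all coordinates when $d$ is even. The paper instead uses the shifted targets $\mathbf{t}_{\mathbf{s}}$ together with a $\mathbb{Z}$ versus $\mathbb{Z}+\frac12$ parity argument. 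Both routes give the same gap and the factor $\frac{d+1}{d-1}$.

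The gap is the word \emph{deterministic}. The algorithm of \cite{eisenbrand2011covering} is randomized. It boosts a $2$-approximate $\mathrm{CVP}_\infty$ (gap) oracle to a $(1+\epsilon)$-approximation, using an explicit covering of the cube by parallelepipeds followed by a binary search. The $2$-approximate oracle it calls is the sieve-based algorithm of \cite{blomer2009sampling}, which is randomized. Used as a black box, your argument therefore only gives a randomized $2^{O(n\log\log d)}$ algorithm.

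What is missing, and what the paper supplies in its appendix, is the observation that the covering and the binary search are themselves deterministic. So it suffices to replace the $2$-approximate oracle with the deterministic $(1+\delta)$-approximate CVP algorithm of Dadush and Kun \cite{dadush2013lattice}, which runs in time $2^{O(n)}(1+\frac{1}{\delta})^{n}$ in any norm. Taking $\delta=1$ costs $2^{O(n)}$ per call, so the total remains $2^{O(n)}(\log d)^{O(n)}$. Applying \cite{dadush2013lattice} directly with $\delta=\Theta(1/d)$ would give $d^{O(n)}$, so the boosting step is essential.

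Separately, the $(2+1/\epsilon)^{O(n)}$ bound you first attribute to \cite{eisenbrand2011covering} is the earlier Bl\"omer--Naewe bound that they improve on. Your later use of the $\log(1/\epsilon)$ dependence is the correct one.
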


Finally we study the average cases as~\cite{chen2022average}, in which we remove the $\log\log d$ factor altogether. 

We show a bounded distance condition holds with high probability. Specifically, when $\mathbf{x}\sim[0:M-1]^n$ and $|\tau|=o(Mn)$, we prove that, with probability $1-e^{-\Omega(n)}$, the distance between the target vector and the lattice is at most $4$ times the shortest vector length. Notably, for $C=[\pm d]$, we have an analogous result, but with a larger failure probability $1-o_n(1)$ in the following \cref{thm:mainaveragecase}.

We obtain the following result.

\begin{theorem}[Algorithms for Average-Case GSS with \texorpdfstring{$C=[-d:d]$ or $[\pm d]$}{C=[-d:d] or [\pm d]}]
\label{thm:mainaveragecase}
Fix $d\in\mathbb{N}$ and let $C=[-d:d]$ or $[\pm d]$. There is a deterministic algorithm for average-case Generalized Subset Sum running in time
$$
\tilde{O}\!\left(\bigl(18\sqrt{2\pi e}\,\bigr)^n\right)\approx \tilde{O}(2^{6.217n}) .
$$
For any $M$ and $\tau$ with $|\tau|=o(nM)$, the algorithm succeeds on $(M,\tau,\mathbf{x})$ with probability at least $1-e^{-\Omega(n)}$ when $C=[-d:d]$ and with probability at least $1-o_n(1)$ when $C=[\pm d]$ (over $\mathbf{x}\sim[0:M-1]^n$).
\end{theorem}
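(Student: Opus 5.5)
We prove \cref{thm:mainaveragecase} by embedding the instance into a bounded-distance $\mathrm{CVP}_\infty$ instance in dimension $n+1$, exactly as in the worst-case reduction behind \cref{thm:worstcasegss}, and then solving it with a DPV-style enumeration. Take the lattice
\[
L=\{(\mathbf{c},\,N(\mathbf{c}\cdot\mathbf{x})) : \mathbf{c}\in\mathbb{Z}^n\}
\]
with a large weight $N$ (say $N>2dn$), made full rank by the standard embedding. Take the target $\mathbf{t}=(\mathbf{0},N\tau)$ when $C=[-d:d]$.

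When $C=[\pm d]$, enumerate the $2^n$ sign patterns $\mathbf{s}\in\{\pm1\}^n$. For each pattern, recenter the box $\prod_i s_i[1,d]$ at $s_i(d+1)/2$, i.e.\ shift the target by that center. The half-width is then $(d-1)/2$, and the factor $2^n$ is absorbed into the constant. A lattice vector within $\ell_\infty$ distance $d$ (respectively $(d-1)/2$) of $\mathbf{t}$ whose last coordinate matches exactly is precisely a solution. Because all distances are integers or half-integers on a fixed grid, this is a decision with an exact threshold.

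The core of the proof is a probabilistic bound. We show that with probability $1-e^{-\Omega(n)}$ over $\mathbf{x}\sim[0:M-1]^n$, we have $\lambda_1^{(\infty)}(L)\ge r/4$, where $r$ is the solution radius $d$. Equivalently, there is no nonzero $\mathbf{c}\in\mathbb{Z}^n$ with $\|\mathbf{c}\|_\infty<d/4$ and $\mathbf{c}\cdot\mathbf{x}=0$.

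Argue by a union bound. For a fixed nonzero $\mathbf{c}$ with $c_j\ne0$, condition on the other coordinates. Then $\Pr[\mathbf{c}\cdot\mathbf{x}=0]\le 1/M$, and summing over the at most $(d/2)^n$ such vectors gives a failure probability of $(d/2)^n/M$.

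This bound alone is not enough, since $M$ may be small. Here the hypothesis $|\tau|=o(nM)$ and the regime where solutions exist enter. I would split into two cases:
\begin{itemize}
\item $M\ge (d/2)^n e^{\Omega(n)}$: the union bound above suffices.
\item $M$ small: short kernel vectors exist in abundance, but then the ratio $\mathrm{dist}(\mathbf{t},L)/\lambda_1$ is controlled by a different argument. Using Chen et al.-style concentration, $\mathbf{c}\cdot\mathbf{x}$ over random $\mathbf{c}$ in the box hits every value near $\tau$; the condition $|\tau|=o(nM)$ keeps $\tau$ inside the bulk. This shows that if a solution exists, a short kernel vector exists only at comparable scale, so distance $\le 4\lambda_1$ holds in either case.
\end{itemize}

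For $[\pm d]$ the same argument runs per sign pattern. The relevant sign pattern has a typical $\mathbf{x}$ only with probability $1-o_n(1)$, because of anti-concentration near the box boundary, which explains the weaker guarantee.

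Given the promise $\mathrm{dist}(\mathbf{t},L)\le 4\lambda_1$, we solve $\mathrm{CVP}_\infty$ deterministically as follows:
\begin{itemize}
\item Run the deterministic $\mathrm{SVP}_\infty$ routine behind \cref{thm:mainworstcase} to get $\lambda_1$.
\item Enumerate all lattice points in $\mathbf{t}+4\lambda_1 B_\infty$ by covering this body with translates of $\tfrac{\lambda_1}{2}B_\infty$, each containing at most one lattice point.
\item Find these translates with the M-ellipsoid / Dadush–Peikert–Vempala enumeration.
\end{itemize}
The number of translates is $\mathrm{vol}(9\lambda_1 B)/\mathrm{vol}(\lambda_1 B/2)$-type ratios times the $\ell_\infty$-to-$\ell_2$ M-ellipsoid factor. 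Redoing the explicit count from \cref{thm:mainworstcase} with radius $(1+4+4)\lambda_1$ in place of $3\lambda_1$ turns the base $6\sqrt{2\pi e}$ into $18\sqrt{2\pi e}$, giving $\tilde{O}((18\sqrt{2\pi e})^n)$.

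The main obstacle is the small-$M$ regime of the bounded-distance estimate. I would first try to handle it by fixing $\lambda_1$ at its typical value $\approx (M)^{1/n}$-scale up to constants and directly bounding the covering radius of the kernel lattice by a constant multiple of $\lambda_1$, using the randomness of $\mathbf{x}$. I would fall back on the union-bound case whenever this fails.
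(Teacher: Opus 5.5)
\textbf{There is a genuine gap: the bounded-distance promise.} The heart of the proof is showing $\mathrm{dist}_\infty(\mathbf{t},L)\le 4\lambda_1^{(\infty)}(L)$ with high probability, and you calibrate it at the scale $d$. In the regime $4^n\le M\ll (d/2)^n$, your core claim $\lambda_1^{(\infty)}(L)\ge d/4$ is not merely unproved but false. Minkowski applied to $\mathcal{L}^{\perp}_{\mathbf{x}}$, whose determinant is at most $\sqrt{n}M$, gives $\lambda_1^{(\infty)}(L)\le(\sqrt{n}M)^{1/(n-1)}=O(M^{1/n})$ deterministically. So a solution at distance close to $d$ need not lie within $4\lambda_1$.

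Your large-$M$ case is fine, since there $\mathrm{dist}\le d\le 4\lambda_1$. The small-$M$ bullet, however, is not an argument. A covering-radius bound would be much stronger than needed (only the one target $\mathbf{t}$ matters), and you do not supply it.

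The missing idea is to pin \emph{both} quantities to the scale $M^{1/n}$, independently of $d$.
\begin{enumerate}
\item \emph{Lower bound on $\lambda_1$.} Take a union bound over $[-d_0:d_0]^n$ with $d_0=\lfloor \frac{1}{4}M^{1/n}\rfloor$. This gives $\lambda_1^{(\infty)}>\frac{1}{4}M^{1/n}$ with failure probability at most $(2d_0+1)^n/M\le(3/4)^n$ once $M\ge 4^n$, provided the embedding parameters exceed this scale (\cref{lem:lambda1lowerb}).
\item \emph{Upper bound on the distance.} Apply Chen et al.'s existence theorem (\cref{thm:solprobC0}) not to $C$ but to the auxiliary set $C_1=[-d_1:d_1]$ with $d_1=\lceil\frac{3}{4}M^{1/n}\rceil$. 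Since $|C_1|^n\ge(3/2)^nM$, we have $M\le|C_1|^{(1-\varepsilon')n}$. Hence with probability $1-e^{-\Omega(n)}$ some $\mathbf{c}$ with $\|\mathbf{c}\|_\infty\le d_1$ satisfies $\mathbf{c}\cdot\mathbf{x}=\tau$. Therefore $\mathrm{dist}_\infty(\mathbf{t},L)\le\frac{3}{4}M^{1/n}+1\le 4\lambda_1^{(\infty)}$ (\cref{lem:disupperb}). This is exactly where $|\tau|=o(nM)$ enters.
\end{enumerate}
That theorem is stated for fixed coefficient sets, so $d_1$ must be a constant. This is why the paper first answers ``no solution'' when $M>|C|^{2n}$, which is correct with probability $1-|C|^n/M$ and leaves $M=2^{O(n)}$.

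The close vector produced this way need not lie in $C^n$, but that is harmless. Any lattice point within distance $d$ of $\mathbf{t}$ encodes a solution, and a closest point is within $d$ whenever a solution exists. With this promise, your enumeration of $\mathbf{t}+4\lambda_1\mathbf{B}_\infty^{n+1}$ and the count $(1+2\cdot 4)^{n+1}$ go through and match the paper's base $18\sqrt{2\pi e}$.

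\textbf{The $[\pm d]$ case.} The $2^n$ sign patterns cannot be absorbed: $\tilde{O}$ hides only polynomial factors. So $2^n$ enumerations of cost $(18\sqrt{2\pi e})^n$ give base $36\sqrt{2\pi e}$, not the stated one. The shifted targets $\mathbf{t}_{\mathbf{s}}$ would also each need their own bounded-distance promise. The argument above does not provide one, because the short solutions coming from the auxiliary set lie near the origin rather than near $s_i(d+1)/2$.

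The paper avoids both problems. It enumerates once around the unshifted $\mathbf{t}$ and discards vectors with some $v_i=0$ for $1\le i\le n$. It gets the promise by rerunning the distance argument with the $[\pm\,\cdot\,]$ existence theorem (\cref{thm:solprobC}) in place of \cref{thm:solprobC0}. The weaker $1-o_n(1)$ guarantee is inherited from that theorem, not from anti-concentration at the box boundary. The case $d=1$ is handled by brute force.
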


\begin{table}[t]
\centering
\caption{Comparison of algorithms for Subset Balancing (SBP) and Generalized
Subset Sum (GSS) with $C = [-d:d]$.  The column ``Exp.\ in $d$'' summarizes how
the \emph{exponent} of the running time scales with~$d$.  Here $\tilde{O}$
suppresses $\operatorname{poly}(n)$ factors.  Analogous results for
$C = [\pm d]$ are stated in the text.}
\label{tab:comparison}
\renewcommand{\arraystretch}{1.4}
\small
\begin{tabular}{@{} cl cc c c @{}}
\toprule
\textbf{Problem}
  & \textbf{Reference}
  & \textbf{Setting}
  & \textbf{Type}
  & \textbf{Running Time}
  & \textbf{Exp.\ in $d$} \\
\midrule
\multirow{5}{*}{\textbf{SBP}}
  & MitM~\cite{horowitz1974computing}
  & worst & det
  & $\tilde{O}\!\bigl(|C|^{n/2}\bigr)$
  & $\Theta(n\log d)$ \\[2pt]
  & \cite{randolph2025beating}
  & worst & rand
  & $\tilde{O}\!\bigl(|C|^{(1/2-\varepsilon)n}\bigr)^{\dagger}$
  & $\Theta(n\log d)$ \\[2pt]
  & \textbf{Ours}
  & worst & det
  & $\tilde{O}(2^{4.632n})$
  & $O(n)$ \\[2pt]
  & \textbf{Ours}
  & worst & rand
  & $\tilde{O}(2^{2.443n})$
  & $O(n)$ \\[2pt]
  & \textbf{Ours}
  & worst & det
  & $\operatorname{poly}(n,\log d)^{\ddagger}$
  & --- \\
\midrule
\multirow{4}{*}{\textbf{GSS}}
  & MitM~\cite{horowitz1974computing}
  & worst & det
  & $\tilde{O}\!\bigl(|C|^{n/2}\bigr)$
  & $\Theta(n\log d)$ \\[2pt]
  & \cite{chen2022average}
  & avg & rand
  & $\tilde{O}\!\bigl(|C|^{(1/2-\varepsilon)n}\bigr)^{\dagger}$
  & $\Theta(n\log d)$ \\[2pt]
  & \textbf{Ours}
  & worst & det
  & $2^{O(n\log\log d)}$
  & $O(n\log\log d)$ \\[2pt]
  & \textbf{Ours}
  & avg & det
  & $\tilde{O}(2^{6.217n})$
  & $O(n)$ \\
\bottomrule
\end{tabular}

\medskip
\raggedright\footnotesize
$^{\dagger}$\;$\varepsilon = \varepsilon(|C|) > 0$ depends on $|C|$ and
satisfies $\varepsilon \to 0$ as $d \to \infty$.\\[1pt]
$^{\ddagger}$\;Requires $d$ to be sufficiently large;
see Theorem~\ref{thm:sbppoly}.
\end{table}

A key conceptual difference from representation-based algorithms (including~\cite{chen2022average} and \cite{randolph2025beating}) is that our algorithm is deterministic and randomness appears only in the input distribution $\mathbf{x}\sim[0:M-1]^n$.

Our results are complementary to the barrier-breaking framework of
\cite{chen2022average,randolph2025beating}.
Representation-based algorithms excel in the regime where $d$ is a small constant, achieving exponents of the form $(1 / 2-\varepsilon) \cdot n \log |C|$. In contrast, our lattice-based approach operates in a fundamentally different regime: by reducing the problem to lattice problems, we remove the $\log d$ factor from the exponent altogether. 
This makes our algorithms particularly well-suited to the parameter regime where \(d\) is large, and also enables clean deterministic guarantees in settings where randomized hashing is undesirable.

\section{Preliminaries}
\label{sec:preliminaries}
We use $\mathbb{Z}$, $\mathbb{Q}$, and $\mathbb{R}$ to denote the ring of integers, the field of rationals, and the field of real numbers, respectively. Vectors are denoted by lowercase bold letters (e.g., $\mathbf{v}$) and are treated as row vectors unless otherwise specified. Matrices are denoted by uppercase bold letters (e.g., $\mathbf{A}$). 

For any vector $\mathbf{v} \in \mathbb{R}^m$ and $p \in [1, \infty]$, the $\ell_p$ norm of $\mathbf{v}$ is defined as
\begin{equation*}
\|\mathbf{v}\|_p =
\begin{cases}
\left(\sum_{i=1}^m |v_i|^p\right)^{1/p}, & 1 \le p < \infty,\\[1mm]
\max_{1 \le i \le m} |v_i|, & p = \infty.
\end{cases}
\end{equation*}
We write $\|\mathbf{v}\|$ for the Euclidean norm $\|\mathbf{v}\|_2$. 

For asymptotic notation, we use $O(\cdot)$, $\Theta(\cdot)$, and $\operatorname{polylog}(\cdot)$ in the standard way. We write $\tilde{O}(x)$ to denote $O(x \cdot \operatorname{polylog} x)$, suppressing factors polynomial in the logarithm of the input.

\paragraph{Parameters Assumptions}

We assume that all integer inputs ($d,x_i,M$) are bounded by $2^{n^{O(1)}}$ as in~\cite{randolph2025beating}. 
This is without loss of generality: integers of magnitude $2^{n^{\omega(1)}}$ are already infeasible to read or operate on in polynomial time in any realistic model. 
Notably, in~\cref{sec:worstcasealg,sec:worstcasegss} we can suppose $d$ as any integer even a function of $n$.

For the Generalized Subset Sum in Section~\ref{sec:averagecasealg}, we consider the same setting as~\cite{chen2022average} that $|\tau| = o(Mn)$, $d$ is a constant, and $M\geq 4^{n}$ otherwise we could apply standard dynamic programming technique.

\subsection{Lattices}
\label{sec:lattice}

A \emph{lattice} is a discrete additive subgroup of $\mathbb{R}^m$, where $m \in \mathbb{N}$. An equivalent definition is given below.

\begin{definition}[Lattice]
Let $\mathbf{v}_1, \mathbf{v}_2, \dots, \mathbf{v}_n \in \mathbb{R}^m$ be $n$ linearly independent vectors with $n \le m$. The \emph{lattice} $\mathcal{L}$ spanned by $\{\mathbf{v}_1, \mathbf{v}_2, \dots, \mathbf{v}_n\}$ is
\[
\mathcal{L} = \left\{ \mathbf{v} \in \mathbb{R}^m \ \middle| \ \mathbf{v} = \sum_{i=1}^{n} a_i \mathbf{v}_i,\; a_i \in \mathbb{Z} \right\}.
\]
\end{definition}

The integer $n$ is called the \emph{rank} of $\mathcal{L}$, while $m$ is its \emph{dimension}. The lattice $\mathcal{L}$ is \emph{full-rank} if $n = m$.

A lattice can be represented by a basis matrix $\mathbf{B} \in \mathbb{R}^{n \times m}$ whose rows are the basis vectors $\mathbf{v}_i$. The \emph{determinant} of $\mathcal{L}$ is
\[
\det(\mathcal{L}) = \sqrt{\det(\mathbf{B}\mathbf{B}^t)},
\]
where $\mathbf{B}^t$ denotes the transpose of $\mathbf{B}$. If $\mathcal{L}$ is full-rank, this simplifies to
\[
\det(\mathcal{L}) = \left| \det(\mathbf{B}) \right|.
\]

\begin{definition}[Successive Minima]
Let $p \in [1, \infty]$. Let $\mathcal{L} \subseteq \mathbb{R}^m$ be a lattice of rank $n$. For $1 \le i \le n$, the $i$-th \emph{successive minimum} of $\mathcal{L}$ in the $\ell_p$ norm, denoted $\lambda_i^{(p)}(\mathcal{L})$, is
\[
\lambda_i^{(p)}(\mathcal{L}) = \inf \left\{ r > 0 \ \middle| \ \dim\big(\mathrm{span}(\mathcal{L} \cap B_p(\mathbf{0}, r))\big) \ge i \right\},
\]
where $B_p(\mathbf{0}, r) = \{ \mathbf{x} \in \mathbb{R}^m : \|\mathbf{x}\|_p \le r \}$.
\end{definition}

\begin{lemma}[Minkowski's Theorem]
\label{lemma:minkowski}
Let $\mathcal{L} \subseteq \mathbb{R}^m$ be a lattice of rank $n$. Then
\[
\lambda_1^{(\infty)}(\mathcal{L}) \le \det(\mathcal{L})^{1/n}.
\]
\end{lemma}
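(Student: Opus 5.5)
This follows from Minkowski's convex body theorem applied inside the span of $\mathcal{L}$. Let $V=\mathrm{span}(\mathcal{L})$, an $n$-dimensional subspace of $\mathbb{R}^m$. Fix a linear isometry $\phi: V\to\mathbb{R}^n$ with respect to the Euclidean norm. Then $\phi(\mathcal{L})$ is a full-rank lattice in $\mathbb{R}^n$ with covolume $\det(\mathcal{L})$, because the Gram matrix $\mathbf{B}\mathbf{B}^t$ is preserved by isometries. Write $\lambda=\det(\mathcal{L})^{1/n}$. It suffices to show that for every $r>\lambda$ the set $K_r=\{\mathbf{v}\in V:\|\mathbf{v}\|_\infty\le r\}$ contains a nonzero lattice vector. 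Letting $r\downarrow\lambda$ and using discreteness (only finitely many lattice points lie in $K_{2\lambda}$) then gives a nonzero $\mathbf{v}\in\mathcal{L}$ with $\|\mathbf{v}\|_\infty\le\lambda$, hence $\lambda_1^{(\infty)}(\mathcal{L})\le\lambda$.

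The set $K_r$ is convex, centrally symmetric and compact in $V$. By Minkowski's convex body theorem applied to $\phi(K_r)$ and $\phi(\mathcal{L})$, it is enough to check that the $n$-dimensional volume satisfies $\mathrm{vol}_n(K_r)> 2^n\det(\mathcal{L})$.

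The main step is a volume lower bound for the central slice of a cube: $\mathrm{vol}_n(V\cap[-r,r]^m)\ge(2r)^n$ for every $n$-dimensional subspace $V$. Vaaler's theorem states that central sections of the unit cube $[-\tfrac12,\tfrac12]^m$ have $n$-volume at least $1$, and scaling by $2r$ gives the bound. With it, $\mathrm{vol}_n(K_r)\ge(2r)^n>(2\lambda)^n=2^n\det(\mathcal{L})$ for $r>\lambda$, as required.

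If I wanted to avoid citing Vaaler, I would instead compare with the Euclidean ball, using $\{\|\mathbf{v}\|_2\le r\}\subseteq K_r$. That only yields $\lambda_1^{(\infty)}\le\lambda_1^{(2)}\le\sqrt{n}\,\det(\mathcal{L})^{1/n}$ in general, so it recovers the stated bound only when $n=m$, where $K_r$ is simply the cube. Vaaler's inequality is therefore the key ingredient and the step I expect to be hardest in the non-full-rank case. In the full-rank case the argument reduces to Minkowski's theorem for the cube $[-r,r]^n$, whose volume is exactly $(2r)^n$. The paper's embedded lattices are full rank, so the simple case already covers the intended use.
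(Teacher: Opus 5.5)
Your proof is correct. The paper gives no proof of this lemma; it simply quotes it as the classical Minkowski theorem. For $n=m$ that suffices, since the relevant body is the cube $[-r,r]^n$ of volume $(2r)^n$. For $n<m$, however, the body $\mathrm{span}(\mathcal{L})\cap[-r,r]^m$ is a central slice of a cube, not a cube, and Minkowski's theorem says nothing until its $n$-volume is bounded from below. Your route supplies exactly what the paper leaves implicit. You combine Vaaler's cube-slicing inequality $\mathrm{vol}_n\bigl(V\cap[-\tfrac12,\tfrac12]^m\bigr)\ge 1$ with an isometry $\phi$ that preserves the Gram determinant, and hence $\det(\mathcal{L})$. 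This is the standard ingredient behind the Bombieri--Vaaler form of Siegel's lemma. The limiting step $r\downarrow\lambda$ is harmless but unnecessary, because $\lambda_1^{(\infty)}$ is defined as an infimum, so a nonzero lattice vector in $K_r$ for every $r>\lambda$ already gives $\lambda_1^{(\infty)}(\mathcal{L})\le\lambda$.

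Your last paragraph is wrong, though: the paper does use the non-full-rank case. In Theorem~\ref{thm:solexist} the lemma is applied to $\mathcal{L}^{\perp}_{\mathbf{x}}$, which has rank $n-1$ inside $\mathbb{R}^n$, to get $\lambda_1^{(\infty)}(\mathcal{L}^{\perp}_{\mathbf{x}})\le\bigl(\|\mathbf{x}\|_2/\gcd(x_1,\ldots,x_n)\bigr)^{1/(n-1)}$. That theorem is precisely the one-equation case of Siegel's lemma. The full-rank cube argument does not apply there. Your Euclidean-ball fallback would only give the weaker hypothesis $\|\mathbf{x}\|_2/\gcd(x_1,\ldots,x_n)<\bigl((d+1)/\sqrt{n-1}\bigr)^{n-1}$. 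So Vaaler's inequality is not an optional generalization: it is what makes the paper's existence criterion correct as stated.
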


As a corollary, we obtain the following bound for the $\ell_2$ norm.

\begin{corollary}[Minkowski's Theorem]
Let $\mathcal{L} \subseteq \mathbb{R}^m$ be a lattice of rank $n$. Then
\[
\lambda_1^{(2)}(\mathcal{L}) \le \sqrt{n} \cdot \det(\mathcal{L})^{1/n}.
\]
\end{corollary}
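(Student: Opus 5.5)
The plan is to get this corollary from Lemma~\ref{lemma:minkowski} using only the comparison between the $\ell_2$ and $\ell_\infty$ norms on the $n$-dimensional subspace spanned by $\mathcal{L}$. The inequality $\|\mathbf{v}\|_2 \le \sqrt{m}\,\|\mathbf{v}\|_\infty$ in the ambient space $\mathbb{R}^m$ is too weak when $m>n$, since it gives $\sqrt{m}$ rather than $\sqrt{n}$. So the first step is to remove the ambient dimension.

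Let $V=\mathrm{span}(\mathcal{L})$. Choose an orthonormal basis of $V$ and the corresponding linear isometry $\phi: V \to \mathbb{R}^n$ for the Euclidean norm. Then $\mathcal{L}'=\phi(\mathcal{L})$ is a full-rank lattice in $\mathbb{R}^n$. It has the same determinant, because $\det(\mathbf{B}\mathbf{B}^t)$ is invariant under right multiplication of $\mathbf{B}$ by a matrix with orthonormal columns. It also has the same $\ell_2$ successive minima, since $\phi$ preserves Euclidean length.

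Next, I will apply Lemma~\ref{lemma:minkowski} to $\mathcal{L}'$, which has rank $n$ and dimension $n$. This gives a nonzero $\mathbf{w}\in\mathcal{L}'$ with $\|\mathbf{w}\|_\infty \le \det(\mathcal{L}')^{1/n}$. Since $\mathbf{w}\in\mathbb{R}^n$, we have $\|\mathbf{w}\|_2\le \sqrt{n}\,\|\mathbf{w}\|_\infty$. Pulling back, $\mathbf{v}=\phi^{-1}(\mathbf{w})$ is a nonzero vector of $\mathcal{L}$ with $\|\mathbf{v}\|_2=\|\mathbf{w}\|_2\le \sqrt{n}\det(\mathcal{L})^{1/n}$. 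Hence $\lambda_1^{(2)}(\mathcal{L})\le\sqrt{n}\det(\mathcal{L})^{1/n}$.

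The main obstacle is this dimension reduction, and specifically the check that the determinant is preserved. The point is that the $\ell_\infty$ norm is not rotation invariant, so Lemma~\ref{lemma:minkowski} cannot be applied directly in $\mathbb{R}^m$ while still hoping for the factor $\sqrt{n}$. Once we pass to $\mathbb{R}^n$ by an isometry, only the $\ell_2$ quantities need to be preserved, and they are.

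If Lemma~\ref{lemma:minkowski} is meant only for full-rank lattices, the argument is unchanged, since it is applied only to $\mathcal{L}'$, which is full-rank. Alternatively, the statement follows directly from Minkowski's convex body theorem applied to the Euclidean ball of radius $\sqrt{n}\det(\mathcal{L})^{1/n}$ in $V$. That ball contains the cube $[-\det(\mathcal{L})^{1/n},\det(\mathcal{L})^{1/n}]^n$, which has volume $2^n\det(\mathcal{L})$.
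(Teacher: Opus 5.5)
Your argument is correct. It is the derivation the paper intends: the paper states this as an immediate corollary of Lemma~\ref{lemma:minkowski} via $\|\mathbf{v}\|_2\le\sqrt{n}\,\|\mathbf{v}\|_\infty$ and gives no proof. Your isometric reduction to a full-rank lattice in $\mathbb{R}^n$ is exactly the step that justifies the factor $\sqrt{n}$ rather than $\sqrt{m}$ when $m>n$, and it also means you only need Lemma~\ref{lemma:minkowski} in the classical full-rank case.

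One justification should be tightened. $\det(\mathbf{B}\mathbf{B}^t)$ is preserved because $\mathbf{B}\mathbf{B}^t$ is the Gram matrix of the basis and $\phi$ preserves inner products. Equivalently, $\mathbf{B}QQ^t=\mathbf{B}$ when the columns of $Q$ form an orthonormal basis of $V$. Right multiplication by an arbitrary matrix with orthonormal columns does not preserve it.
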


\begin{definition}[Shortest Vector Problem (SVP$_p$)]
Let $p \in [1, \infty]$. Given a lattice $\mathcal{L}$, the \emph{Shortest Vector Problem in the $\ell_p$ norm (SVP$_p$)} asks to find a nonzero vector $\mathbf{v} \in \mathcal{L}$ such that
\[
\|\mathbf{v}\|_p = \min_{\mathbf{w} \in \mathcal{L} \setminus \{\mathbf{0}\}} \|\mathbf{w}\|_p.
\]
\end{definition}

It was first shown to be NP-hard in the $\ell_\infty$ norm by van Emde Boas~\cite{van1981another}, who also conjectured that the same hardness should hold for the $\ell_2$ norm. Nearly two decades later, Ajtai proved that SVP in the $\ell_2$ norm is NP-hard under randomized reductions, unless $\mathrm{NP} \subseteq \mathrm{RP}$~\cite{ajtai1998shortest}. More recently, Hair and Sahai showed that $\mathrm{SVP}_p$ is NP-hard to approximate within a factor of $2^{\log^{1-\varepsilon} n}$ for all constants $\varepsilon > 0$ and $p > 2$, under standard deterministic Karp reductions~\cite{hair2025svp}. Nevertheless, the LLL algorithm~\cite{lenstra1982factoring} computes a relatively short vector in polynomial time.

\begin{lemma}[LLL Basis Reduction]
\label{lemma:LLL}
Let $\mathcal{L} \subseteq \mathbb{R}^m$ be a lattice of rank $n$. The LLL algorithm returns, in polynomial time, a nonzero vector $\mathbf{v} \in \mathcal{L}$ satisfying
\[
\|\mathbf{v}\|_2 \le 2^{\frac{n-1}{4}} \det(\mathcal{L})^{1/n}.
\]
\end{lemma}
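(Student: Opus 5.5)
The statement just before this point is the LLL guarantee (Lemma~\ref{lemma:LLL}). It is cited from~\cite{lenstra1982factoring}, so the plan below follows the standard argument for it.

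We run LLL with parameter $\delta=3/4$ on a basis $\mathbf{b}_1,\dots,\mathbf{b}_n$ of $\mathcal{L}$ and output $\mathbf{v}=\mathbf{b}_1$ of the reduced basis. Write $\mathbf{b}_i^*$ for the Gram--Schmidt vectors. A basis is LLL-reduced when it satisfies two conditions:
\begin{itemize}
\item size-reduction: $|\mu_{i,j}|\le 1/2$ for all $j<i$;
\item the Lov\'asz condition: $\tfrac34\|\mathbf{b}_i^*\|^2\le \|\mathbf{b}_{i+1}^*\|^2+\mu_{i+1,i}^2\|\mathbf{b}_i^*\|^2$.
\end{itemize}
Combining the two gives $\|\mathbf{b}_{i+1}^*\|^2\ge \tfrac12\|\mathbf{b}_i^*\|^2$. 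Iterating, $\|\mathbf{b}_i^*\|^2\ge 2^{-(i-1)}\|\mathbf{b}_1\|^2$, where we use $\mathbf{b}_1^*=\mathbf{b}_1$.

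For the bound, we multiply these $n$ inequalities. Since $\det(\mathcal{L})=\prod_i\|\mathbf{b}_i^*\|$ (Gram--Schmidt is a triangular change of basis), this yields
\[
\det(\mathcal{L})^2\ge \|\mathbf{b}_1\|^{2n}\,2^{-n(n-1)/2}.
\]
Taking $2n$-th roots gives $\|\mathbf{b}_1\|\le 2^{(n-1)/4}\det(\mathcal{L})^{1/n}$, which is the claim. The vector $\mathbf{b}_1$ is nonzero because it belongs to a basis.

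The remaining and main step is the polynomial running time. We use the standard potential
\[
D=\prod_{i=1}^{n}\det(\mathcal{L}_i)^2,\qquad \mathcal{L}_i=\mathcal{L}(\mathbf{b}_1,\dots,\mathbf{b}_i).
\]
\begin{itemize}
\item Size-reduction steps do not change the $\mathbf{b}_i^*$, so they leave $D$ unchanged.
\item A swap is performed only when the Lov\'asz condition fails at some index $i$. In that case the new $\|\mathbf{b}_i^*\|^2$ is less than $\tfrac34$ of the old one, and only $\det(\mathcal{L}_i)$ changes, so $D$ shrinks by a factor of at least $3/4$.
\end{itemize}
For integer (or, after scaling, rational) input, each $\det(\mathcal{L}_i)^2$ is a positive integer, so $D\ge 1$. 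Initially $D\le \prod_i\|\mathbf{b}_i\|^{2i}$, so the number of swaps is $O(n^2\log\max_i\|\mathbf{b}_i\|)$, which is polynomial in the input size.

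It also remains to show that the bit lengths of the intermediate numbers stay polynomial. Here we use that the Gram--Schmidt data are rationals with denominators dividing $\det(\mathcal{L}_i)^2$, together with the bound $|\mu_{i,j}|\le 1/2$ after size-reduction. This bit-size bookkeeping is the most tedious part of the argument; we would cite~\cite{lenstra1982factoring} for it rather than redo it.
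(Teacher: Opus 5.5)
Your argument is correct and is the standard Lenstra--Lenstra--Lov\'asz proof; the paper gives no proof of its own here and simply cites \cite{lenstra1982factoring}, so your route matches the intended one. One small slip: the initial potential is $D=\prod_{j=1}^n\|\mathbf{b}_j^*\|^{2(n-j+1)}\le\prod_{j=1}^n\|\mathbf{b}_j\|^{2(n-j+1)}$, with the exponents running in the opposite order from what you wrote, but this is still at most $\max_i\|\mathbf{b}_i\|^{n(n+1)}$, so your $O(n^2\log\max_i\|\mathbf{b}_i\|)$ bound on the number of swaps is unaffected.
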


Next, we will introduce the Closest Vector Problem, which was proven to be NP-hard for every $1 \leq p \leq \infty$~\cite{van1981another}.
\begin{definition}[Closest Vector Problem (CVP$_p$)]
Let $p \in [1, \infty]$. Given a lattice $\mathcal{L} \subset \mathbb{R}^m$ and a target vector $\mathbf{t} \in \mathbb{R}^m$, the \emph{Closest Vector Problem in the $\ell_p$ norm (CVP$_p$)} asks to find
\[
\mathbf{v} \in \arg\min_{\mathbf{w} \in \mathcal{L}} \|\mathbf{t} - \mathbf{w}\|_p.
\]
\end{definition}

\begin{definition}[$\gamma$-Approximate Closest Vector Problem ($\gamma$-CVP$_p$)]
Let $p \in [1, \infty], \gamma>1$. Given a lattice $\mathcal{L} \subset \mathbb{R}^m$ and a target vector $\mathbf{t} \in \mathbb{R}^m$, the \emph{$\gamma$-Approximate Closest Vector Problem in the $\ell_p$ norm ($\gamma$-CVP$_p$)} asks to find $\mathbf{v}\in \mathcal{L}$ satisfying
\[
\|\mathbf{v} - \mathbf{t}\|_p\le \gamma \min_{\mathbf{w} \in \mathcal{L}} \|\mathbf{t} - \mathbf{w}\|_p.
\]
\end{definition}

We are also interested in lattices associated with an integer vector $\mathbf{x} \in \mathbb{Z}^n$. 
\begin{proposition}
For $\mathbf{x}=(x_1,\ldots,x_n)\ne \mathbf{0}$, the set of integer vectors 
$\mathbf{c}=(c_1,\ldots,c_n)$ satisfying
\[
\sum_{i=1}^n c_i x_i = 0
\]
forms a sublattice of $\mathbb{Z}^n$, denoted $\mathcal{L}^{\perp}_{\mathbf{x}}$. The lattice $\mathcal{L}^{\perp}_{\mathbf{x}}$ has rank $n-1$, and its determinant is
\[
\det(\mathcal{L}^{\perp}_{\mathbf{x}}) = \frac{\|\mathbf{x}\|_2}{\gcd(x_1,\ldots,x_n)},
\]
where $\|\mathbf{x}\|_2$ is the Euclidean norm of $\mathbf{x}$.
    
\end{proposition}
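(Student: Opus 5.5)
To prove the proposition, I would reduce to the primitive case and then compute the determinant of the kernel lattice through a unimodular completion.

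\emph{Sublattice and rank.} The set $\mathcal{L}^{\perp}_{\mathbf{x}}=\{\mathbf{c}\in\mathbb{Z}^n:\mathbf{c}\cdot\mathbf{x}=0\}$ is the kernel of the group homomorphism $\mathbb{Z}^n\to\mathbb{Z}$, $\mathbf{c}\mapsto \mathbf{c}\cdot\mathbf{x}$. It is therefore a subgroup of $\mathbb{Z}^n$, and it is discrete, so it is a lattice. Set $g=\gcd(x_1,\ldots,x_n)$ and $\mathbf{x}'=\mathbf{x}/g$. The lattice is unchanged if $\mathbf{x}$ is replaced by $\mathbf{x}'$, and the claimed formula is invariant under this replacement, since $\|\mathbf{x}\|_2/g=\|\mathbf{x}'\|_2$. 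So I will assume $\mathbf{x}$ is primitive and aim to show $\det=\|\mathbf{x}\|_2$.

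For the rank: the homomorphism is surjective onto $\mathbb{Z}$ because $\gcd=1$, so $\mathbb{Z}^n/\mathcal{L}^{\perp}_{\mathbf{x}}\cong\mathbb{Z}$. This quotient is torsion-free, so the kernel is a direct summand of rank $n-1$. Concretely, pick $\mathbf{u}\in\mathbb{Z}^n$ with $\mathbf{u}\cdot\mathbf{x}=1$ (Bézout). Then every $\mathbf{c}\in\mathbb{Z}^n$ decomposes as $(\mathbf{c}\cdot\mathbf{x})\mathbf{u}+(\mathbf{c}-(\mathbf{c}\cdot\mathbf{x})\mathbf{u})$. Hence $\mathbb{Z}^n=\mathcal{L}^{\perp}_{\mathbf{x}}\oplus\mathbb{Z}\mathbf{u}$, and a basis $\mathbf{b}_1,\ldots,\mathbf{b}_{n-1}$ of the kernel together with $\mathbf{u}$ forms a basis of $\mathbb{Z}^n$. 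In particular the kernel has rank $n-1$, which also follows directly from its spanning the hyperplane $\mathbf{x}^\perp$.

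\emph{Determinant.} Let $\mathbf{B}$ be the $n\times n$ matrix with rows $\mathbf{b}_1,\ldots,\mathbf{b}_{n-1},\mathbf{u}$. It is unimodular, so $|\det\mathbf{B}|=1$. Write $\mathbf{u}=\mathbf{u}_\parallel+\mathbf{u}_\perp$, where $\mathbf{u}_\parallel$ lies in $\mathrm{span}(\mathbf{b}_i)=\mathbf{x}^\perp$ and $\mathbf{u}_\perp=\frac{\mathbf{u}\cdot\mathbf{x}}{\|\mathbf{x}\|^2}\mathbf{x}=\mathbf{x}/\|\mathbf{x}\|^2$. Subtracting $\mathbf{u}_\parallel$ from the last row is an elementary row operation by a combination of the other rows over $\mathbb{R}$, so it does not change the determinant. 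Afterwards the last row is orthogonal to the others, and therefore
\[
1=|\det\mathbf{B}|=\det(\mathcal{L}^{\perp}_{\mathbf{x}})\cdot\|\mathbf{u}_\perp\|_2=\det(\mathcal{L}^{\perp}_{\mathbf{x}})\cdot\frac{1}{\|\mathbf{x}\|_2}.
\]
This gives $\det(\mathcal{L}^{\perp}_{\mathbf{x}})=\|\mathbf{x}\|_2$. Undoing the normalization yields $\det(\mathcal{L}^{\perp}_{\mathbf{x}})=\|\mathbf{x}\|_2/\gcd(x_1,\ldots,x_n)$ for general $\mathbf{x}$.

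The step needing the most care is justifying $|\det\mathbf{B}|=\det(\mathcal{L})\cdot\|\mathbf{u}_\perp\|$, namely the Gram-determinant factorization $\det(\mathbf{B}\mathbf{B}^t)=\det(\mathbf{B}'\mathbf{B}'^t)\,\|\mathbf{u}_\perp\|^2$, where $\mathbf{B}'$ has rows $\mathbf{b}_1,\ldots,\mathbf{b}_{n-1}$. This holds because after the row operation the Gram matrix is block diagonal; everything else in the argument is routine.
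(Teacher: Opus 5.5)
Your proof is correct and follows essentially the same route as the paper. You reduce to the primitive vector $\mathbf{x}'=\mathbf{x}/g$ and complete to a unimodular basis of $\mathbb{Z}^n$; your matrix with rows $\mathbf{b}_1,\ldots,\mathbf{b}_{n-1},\mathbf{u}$ is precisely $(U^{-1})^t$ for a unimodular $U$ whose last row is $\mathbf{x}'$, which is the matrix the paper invokes. The one real difference is that the paper merely asserts the covolume equals $\|\mathbf{x}'\|_2$ (``scaling in the orthogonal direction''), whereas your B\'ezout vector and the block-diagonal Gram factorization $1=\det(\mathcal{L}^{\perp}_{\mathbf{x}})\cdot\|\mathbf{u}_\perp\|_2=\det(\mathcal{L}^{\perp}_{\mathbf{x}})/\|\mathbf{x}'\|_2$ actually prove that step.
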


\begin{proof}
Consider the linear map $\varphi : \mathbb{Z}^n \to \mathbb{Z}$ defined by
\[
\varphi(\mathbf{c}) = \sum_{i=1}^n c_i x_i = \mathbf{c}\cdot\mathbf{x}.
\]
Then $\mathcal{L}^{\perp}_{\mathbf{x}} = \ker(\varphi)$, so it is a sublattice of $\mathbb{Z}^n$. Since $\varphi$ has rank $1$ whenever $\mathbf{x} \neq \mathbf{0}$, it follows that $\ker(\varphi)$ has rank $n-1$.

Let $g = \gcd(x_1,\ldots,x_n)$ and write $\mathbf{x} = g \mathbf{x}'$, where $\mathbf{x}' \in \mathbb{Z}^n$ is primitive (i.e., $\gcd(x_1',\ldots,x_n') = 1$). Then
\[
\mathcal{L}^{\perp}_{\mathbf{x}} = \{ \mathbf{c} \in \mathbb{Z}^n : \mathbf{c}\cdot\mathbf{x}' = 0 \}.
\]

Since $\mathbf{x}'$ is primitive, there exists a unimodular matrix $U \in \mathrm{GL}_n(\mathbb{Z})$ whose last row is $\mathbf{x}'$. Applying $U$ yields a lattice isomorphism of $\mathbb{Z}^n$, under which $\mathcal{L}^{\perp}_{\mathbf{x}}$ is mapped to
\[
\{ \mathbf{y} \in \mathbb{Z}^n : y_n = 0 \},
\]
with scaling in the orthogonal direction determined by $\|\mathbf{x}'\|_2$.

More concretely, the covolume (determinant) of $\mathcal{L}^{\perp}_{\mathbf{x}}$ equals the Euclidean norm of the primitive normal vector $\mathbf{x}'$, namely
\[
\det(\mathcal{L}^{\perp}_{\mathbf{x}}) = \|\mathbf{x}/g\|_2.
\]
We obtain
\[
\det(\mathcal{L}^{\perp}_{\mathbf{x}}) = \left\| \frac{\mathbf{x}}{g} \right\|_2 = \frac{\|\mathbf{x}\|_2}{g}.
\]

Therefore,
\[
\det(\mathcal{L}^{\perp}_{\mathbf{x}}) = \frac{\|\mathbf{x}\|_2}{\gcd(x_1,\ldots,x_n)}.
\qedhere
\]
\end{proof}

\subsection{Convex Geometry}

Let $B_2^n := \{\mathbf{x}\in\mathbb{R}^n : \|\mathbf{x}\|_2 \le 1\}$ and $B_\infty^n := \{\mathbf{x}\in\mathbb{R}^n : \|\mathbf{x}\|_\infty \le 1\}$. For $r>0$, we have $r B_\infty^n \subseteq r\sqrt{n}\, B_2^n$ since $\|\mathbf{x}\|_2 \le \sqrt{n}\|\mathbf{x}\|_\infty$.

\begin{definition}[Convex Body]
A set $K \subseteq \mathbb{R}^n$ is a \emph{convex body} if it is convex, compact, and full-dimensional. It is \emph{centrally symmetric} if $K = -K$.
\end{definition}

For sets $A, B \subseteq \mathbb{R}^n$, their Minkowski sum is $A + B = \{\mathbf{x} + \mathbf{y} : \mathbf{x} \in A, \mathbf{y} \in B\}$. For $\mathbf{t} \in \mathbb{R}^n$, we write $\mathbf{t} + A = \{\mathbf{t}\} + A$.

For a convex body $K \subseteq \mathbb{R}^n$ and a lattice $L \subseteq \mathbb{R}^n$, define
\[
G(K,L) = \max_{\mathbf{x} \in \mathbb{R}^n} |(K+\mathbf{x}) \cap L|,
\]
the maximum number of lattice points in a translate of $K$. Define the covering number
\[
N(A,B) = \min \{|\Lambda|: \Lambda \subseteq \mathbb{R}^n, A \subseteq B + \Lambda\},
\]
which is the minimum number of translates of $B$ required to cover $A$.

\begin{definition}[Gauge Function] 
For a convex body $K$ containing the origin ($\mathbf{0} \in K$), the \emph{gauge function} (or \emph{Minkowski functional}) is defined as
\begin{equation*}
\|\mathbf{x}\|_K = \inf \{r \ge 0 : \mathbf{x} \in rK \}, \quad \mathbf{x} \in \mathbb{R}^n.
\end{equation*}
The functional $\|\cdot\|_K$ is a seminorm. If $K$ is centrally symmetric, then $\|\cdot\|_K$ defines a norm. In particular, for the $\ell_p$ unit ball $B_p^n = \{\mathbf{x} \in \mathbb{R}^n : \|\mathbf{x}\|_p \le 1\}$, we have $\|\mathbf{x}\|_{B_p^n} = \|\mathbf{x}\|_p$.
\end{definition}

For a positive definite matrix $\mathbf{A} \in \mathbb{R}^{n \times n}$, define
\begin{equation*}
\langle \mathbf{x}, \mathbf{y} \rangle_{\mathbf{A}} = \mathbf{x}\mathbf{A}\mathbf{y}^t, \quad 
\|\mathbf{x}\|_{\mathbf{A}} = \sqrt{\mathbf{x}\mathbf{A}\mathbf{x}^t}.
\end{equation*}

\begin{definition}[Ellipsoid]
For a center $\mathbf{a} \in \mathbb{R}^n$, the \emph{ellipsoid} $E(\mathbf{A}, \mathbf{a})$ is defined as
\begin{equation*}
E(\mathbf{A}, \mathbf{a}) = \{\mathbf{x} \in \mathbb{R}^n : \|\mathbf{x} - \mathbf{a}\|_{\mathbf{A}} \le 1\}.
\end{equation*}
\end{definition}

We denote $E(\mathbf{A}) = E(\mathbf{A}, \mathbf{0})$. Note that $\|\mathbf{x}\|_{\mathbf{A}} = \|\mathbf{x}\|_{E(\mathbf{A})}$. The volume of an ellipsoid satisfies
\begin{equation*}
\operatorname{vol}(E(\mathbf{A}, \mathbf{a})) = \operatorname{vol}(E(\mathbf{A})) = \operatorname{vol}(B_2^n) \cdot \sqrt{\det(\mathbf{A}^{-1})}.
\end{equation*}
Moreover, $E(\mathbf{A})^* = E(\mathbf{A}^{-1})$.

\subsection{SVP and CVP algorithms}

We restate the algorithms in \cite{dadush2011enumerative,micciancio2010deterministic} and give an explicit bound for deterministic SVP in the $\ell_{\infty}$ norm.

\begin{lemma}[Ellipsoid-Enum \cite{dadush2011enumerative}, Proposition 4.1;\cite{micciancio2010deterministic}]
\label{lem:Ellipsoid-Enum}
There is an algorithm $\mathrm{Ellipsoid\mbox{-}Enum}$ that, given any positive definite $\mathbf{A} \in \mathbb{Q}^{n \times n}$, any basis $\mathbf{B}$ of an $n$-dimensional lattice $L \subseteq \mathbb{R}^{n}$, and any $\mathbf{t} \in \mathbb{R}^n$, computes the set $L \cap (E(\mathbf{A})+\mathbf{t})$ in deterministic time
\[
\tilde{O}(2^{2n} + 2^n \cdot |L \cap (E(\mathbf{A})+\mathbf{t})|).
\]
\end{lemma}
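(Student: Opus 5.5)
The plan is to follow the enumeration scheme of Micciancio--Voulgaris as organized by Dadush--Peikert--Vempala. First reduce the ellipsoid to a ball, then enumerate the lattice points in the ball recursively by dimension, and finally bound the total work by the size of the output plus a $2^{2n}$ preprocessing term.

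\textbf{Step 1: reduce to the Euclidean ball.} Factor $\mathbf{A}=\mathbf{M}\mathbf{M}^t$ (for instance by Cholesky, with rational approximation handled in the standard way). The linear map $\mathbf{v}\mapsto\mathbf{v}\mathbf{M}$ sends $E(\mathbf{A})$ to $B_2^n$, sends $L$ to the lattice $L'$ with basis $\mathbf{B}\mathbf{M}$, and sends $\mathbf{t}$ to $\mathbf{t}\mathbf{M}$. The problem therefore becomes listing $L'\cap(B_2^n+\mathbf{t}')$, and we pull the output back afterwards.

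\textbf{Step 2: preprocess the basis.} Compute a reduced basis, e.g.\ by running the deterministic Micciancio--Voulgaris $\ell_2$-SVP algorithm~\cite{micciancio2010deterministic} inside an HKZ reduction. This costs $\tilde{O}(2^{2n})$ and yields Gram--Schmidt vectors $\mathbf{b}_1^*,\dots,\mathbf{b}_n^*$ with $\|\mathbf{b}_i^*\|=\lambda_1(\pi_i(L'))$.

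\textbf{Step 3: enumerate by projection.} Use Kannan/Fincke--Pohst style recursion: fix the coefficient of $\mathbf{b}_n$ over the admissible interval, project, and recurse. The points visited at level $i$ are exactly the points of the projected lattice $\pi_i(L')$ lying in the projected ball $\pi_i(B_2^n+\mathbf{t}')$.

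\textbf{Step 4: bound the search tree (main obstacle).} The count at each level must be charged to the output rather than to worst-case volume. I would proceed as follows.
\begin{itemize}
\item[(a)] Fix a node at level $i$, i.e.\ a point of $\pi_i(L')$ in the projected ball. I would like to lift it to an actual output point of $L'$.
\item[(b)] A direct lift does not work, because a node need not lift into the ball. This is the step I expect to be hardest.
\item[(c)] To handle it, I would fall back on the DPV route: cover $B_2^n$ by $2^{O(n)}$ translates of a ball of radius $\lambda_1(L')/2$. The exact count behind the stated $2^{n}$ factor would come from the Voronoi-cell approach of~\cite{micciancio2010deterministic}.
\end{itemize}

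\textbf{Step 5: Voronoi-cell enumeration.} Concretely, compute the Voronoi-relevant vectors of $L'$ in $\tilde{O}(2^{2n})$ time. Then run a CVP for $\mathbf{t}'$ to obtain a first point, and list all points of $L'\cap(B_2^n+\mathbf{t}')$ by graph search along Voronoi-relevant vectors.
\begin{itemize}
\item[(a)] Show the search graph is connected. The lattice points in a convex set, joined by Voronoi-relevant differences, form a connected graph; this holds because the set is convex and every lattice vector decomposes along the Voronoi-relevant vectors.
\item[(b)] Bound the cost. There are at most $2(2^n-1)$ relevant vectors, so each output point costs $\tilde{O}(2^n)$ neighbor checks.
\end{itemize}
This gives the total $\tilde{O}(2^{2n}+2^n\cdot|L\cap(E(\mathbf{A})+\mathbf{t})|)$. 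If the set is empty, the initial CVP call detects it, since then the closest vector lies at distance greater than $1$.
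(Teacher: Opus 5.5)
\textbf{Verdict: genuine gap in Step 5(a).} Once you drop the detour, your procedure is the paper's: Step 1 followed by Step 5. That means transforming to the unit ball, computing the Voronoi-relevant vectors with Micciancio--Voulgaris in $\tilde{O}(2^{2n})$, making one CVP call to get a root $c_0$ (which also detects emptiness), and running a graph search along relevant vectors at $\tilde{O}(2^n)$ cost per visited point. Steps 2--4 (HKZ reduction, Fincke--Pohst enumeration, the covering by balls of radius $\lambda_1/2$) do not give the output-sensitive bound, as you noticed in 4(b), and can be deleted.

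\textbf{The false connectivity claim.} The correctness of the search rests entirely on Step 5(a), and your justification there is false. The lattice points of a convex set need not be connected by Voronoi-relevant steps. For example, take $L=\mathbb{Z}^2$, whose relevant vectors are $\pm\mathbf{e}_1,\pm\mathbf{e}_2$. The set of points within distance $1/10$ of the segment from $(0,0)$ to $(1,1)$ is convex and contains exactly these two lattice points, not $(1,0)$ or $(0,1)$. Writing $(1,1)=\mathbf{e}_1+\mathbf{e}_2$ does not keep the intermediate point inside the set. So ``the set is convex and every lattice vector decomposes along relevant vectors'' proves nothing.

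\textbf{The missing idea: descent.} What you need is a property specific to a Euclidean ball centred at $\mathbf{t}'$, measured in the same inner product that defines the Voronoi cell; this is why Step 1 is essential rather than cosmetic. Let $x\in S=L'\cap(B_2^n+\mathbf{t}')$ be a point that is not closest to $\mathbf{t}'$. Then $\mathbf{t}'-x\notin\mathcal{V}(L')$, so some relevant $v$ violates its facet inequality: $\langle \mathbf{t}'-x,v\rangle>\|v\|^2/2$. This is equivalent to $\|\mathbf{t}'-(x+v)\|<\|\mathbf{t}'-x\|\le 1$, so $x+v\in S$. Distances strictly decrease over the finite set $S$, so iterating gives a path inside $S$ from $x$ to a closest vector. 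Reversing that path shows the search from the root reaches $x$. This is exactly the paper's argument.

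\textbf{Ties among closest vectors.} If $\mathbf{t}'$ has several closest vectors, the descent may end at one other than $c_0$. You then also need the closest vectors to be mutually reachable by relevant steps. This holds because the Voronoi cells containing $\mathbf{t}'$ are linked through common facets in the face-to-face Voronoi tiling. The paper's write-up passes over this point too, asserting the descent ends ``at $c_0$''.
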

\begin{proof}
By an invertible linear change of variables we may assume the ellipsoid is the Euclidean unit ball.

Using \cite{micciancio2010deterministic}, we can deterministically compute the Voronoi cell
of \(L\), equivalently the set \(\mathcal V\subseteq L\setminus\{0\}\) of Voronoi relevant vectors,
in time \(\tilde O(2^{2n})\), with \(|\mathcal V|=\tilde O(2^n)\). MV10 also gives a deterministic
algorithm to solve CVP in time \(\tilde O(2^{2n})\); let \(c_0\in L\) be a closest lattice vector to \(t\).
If \(\|c_0-t\|_2>1\) then \(S=\emptyset\); otherwise \(c_0\in S\).

Now define the neighbor graph \(G=(L,E)\) where \(x\) is adjacent to \(x+v\) for every
\(v\in\mathcal V\). We enumerate \(S\) by running BFS in the induced subgraph on \(S\), starting from
\(c_0\): when a node \(x\in S\) is popped, we test all \(x+v\) (\(v\in\mathcal V\)) and enqueue those
with \(\|x+v-t\|_2\le 1\) not seen before.

Correctness follows from connectivity of \(G[S]\). If \(x\in S\) is not a closest vector to \(t\), then
\(t-x\) lies outside the Voronoi cell at the origin, whose facets are defined by \(\mathcal V\). Hence
there exists \(v\in\mathcal V\) such that \(\|t-(x+v)\|_2<\|t-x\|_2\). In particular, since
\(\|t-x\|_2\le 1\), we also have \(x+v\in S\). Repeating yields a strictly decreasing sequence of distances
that must terminate at a closest vector, i.e.\ at \(c_0\). Reversing this sequence gives a path in \(G[S]\)
from \(c_0\) to \(x\). Thus BFS starting at \(c_0\) visits all of \(S\).

Running time: preprocessing (Voronoi cell) and the one CVP call cost \(\tilde O(2^{2n})\). In the BFS,
each visited point checks \(|\mathcal V|=\tilde O(2^n)\) neighbors, so the traversal costs
\(\tilde O(2^n\cdot |S|)\). Therefore the total deterministic time is
\[
\tilde O\bigl(2^{2n}+2^n\cdot |L\cap(B_2^n+t)|\bigr)
=
\tilde O\bigl(2^{2n}+2^n\cdot |L\cap(E(A)+t)|\bigr),
\]
as claimed.
\end{proof}

\begin{algorithm}[t]
\caption{$\mathrm{SVP}_\infty(L,\varepsilon)$: Deterministic Search via $\mathrm{Ellipsoid\mbox{-}Enum}$}
\label{alg:svp-infty}
\KwIn{A full-rank lattice $L\subseteq \mathbb{R}^n$ and $\varepsilon\in(0,1)$.}
\KwOut{A vector $v\in L\setminus\{0\}$ with $\|v\|_\infty=\lambda_1^{(\infty)}(L)$.}

\BlankLine
Compute deterministically $z_2\in L\setminus\{0\}$ such that $\|z_2\|_2=\lambda_1^{(2)}(L)$ (using \cite{micciancio2010deterministic}).\;
Set $d \leftarrow \|z_2\|_2/\sqrt{n}$.\;

\BlankLine
\While{true}{
Call Lemma~\ref{lem:Ellipsoid-Enum} to compute
\[
U \leftarrow L \cap \bigl(d\sqrt{n}\,B_2^n\bigr),
\]
$S \leftarrow \{\,y\in U : \|y\|_\infty \le d\,\}$\;
    
\If{$S\setminus\{0\}\neq \emptyset$}{
\Return $v \in \arg\min_{y\in S\setminus\{0\}} \|y\|_\infty$\;
}
$d \leftarrow (1+\varepsilon)\,d$\;
}
\end{algorithm}

\begin{theorem}[SVP in $\ell_{\infty}$]
\label{thm:svp-infty}
There is a deterministic algorithm for SVP in the $\ell_{\infty}$ norm on any full-rank lattice in $\mathbb{R}^{n}$ running in time
\[
\tilde{O}\!\left(\bigl(6\sqrt{2\pi e}\,\bigr)^n\right) \approx \tilde{O}(2^{4.632n}).
\]
\end{theorem}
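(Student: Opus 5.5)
We analyze Algorithm~\ref{alg:svp-infty} and prove three things: it is correct, it makes only $O(n/\varepsilon)$ iterations, and each iteration costs $\tilde O(2^{2n} + 2^n\cdot |L\cap d\sqrt n B_2^n|)$ by Lemma~\ref{lem:Ellipsoid-Enum}. The main task is to bound the enumerated set $|L\cap d\sqrt n B_2^n|$ for the values of $d$ the algorithm actually reaches.

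\textbf{Correctness.} Since $B_\infty^n\subseteq \sqrt n B_2^n$, we have $S = L\cap dB_\infty^n$ exactly. So the first $d$ with $S\setminus\{0\}\neq\emptyset$ satisfies $d\ge\lambda_1^{(\infty)}(L)$, and the returned vector attains $\min_{y\in S\setminus\{0\}}\|y\|_\infty=\lambda_1^{(\infty)}(L)$.

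\textbf{Number of iterations.} The initial value is $d_0=\lambda_1^{(2)}(L)/\sqrt n\le \lambda_1^{(\infty)}(L)$, because $\|z\|_2\le\sqrt n\|z\|_\infty$. The loop stops at the first $d\ge\lambda_1^{(\infty)}$. Hence the final $d$ satisfies $d\le(1+\varepsilon)\lambda_1^{(\infty)}$, and the number of iterations is at most $\log_{1+\varepsilon}(\lambda_1^{(\infty)}/d_0)\le \log_{1+\varepsilon}\sqrt n=O(\log n/\varepsilon)$. The preliminary $\ell_2$-SVP call from \cite{micciancio2010deterministic} costs $\tilde O(2^{2n})$.

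\textbf{Counting lattice points.} Every iteration uses $d\le (1+\varepsilon)\lambda:=(1+\varepsilon)\lambda_1^{(\infty)}(L)$, so we need an upper bound on $|L\cap r\sqrt n B_2^n|$ with $r=(1+\varepsilon)\lambda$.

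We use a packing argument in the $\ell_\infty$ norm. Distinct lattice points are at $\ell_\infty$-distance at least $\lambda$. Therefore the cubes $y+\frac{\lambda}{2}B_\infty^n$, for $y\in L$, have disjoint interiors. For $y\in r\sqrt nB_2^n$, each such cube lies inside $r\sqrt nB_2^n+\frac\lambda2 B_\infty^n\subseteq (r\sqrt n+\frac{\lambda\sqrt n}{2})B_2^n$. Comparing volumes gives
\[
|L\cap r\sqrt nB_2^n|\le \frac{\mathrm{vol}(B_2^n)\,\bigl(\sqrt n\,(r+\lambda/2)\bigr)^n}{\lambda^n}
=\mathrm{vol}(B_2^n)\,n^{n/2}\,\bigl(\tfrac32+\varepsilon\bigr)^n .
\]
Stirling's formula gives $\mathrm{vol}(B_2^n)\,n^{n/2}=\pi^{n/2}n^{n/2}/\Gamma(n/2+1)\approx(2\pi e)^{n/2}\cdot\mathrm{poly}(n)$. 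So the count is $\tilde O\bigl((\tfrac32+\varepsilon)^n(2\pi e)^{n/2}\bigr)$.

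\textbf{Total running time.} Multiplying by the $2^n$ factor from Lemma~\ref{lem:Ellipsoid-Enum} gives $\tilde O\bigl((3+2\varepsilon)^n\sqrt{2\pi e}^{\,n}\bigr)$, which is already at most the claimed $(6\sqrt{2\pi e})^n$. The stated base $6$ presumably reflects a coarser constant, for example packing with the ball of radius $\lambda$ instead of $\lambda/2$, or bounding $r\le 2\lambda$. I would match the paper's constant by choosing a fixed $\varepsilon$ (say $\varepsilon=1$ or $1/n$) and bounding crudely.

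The remaining care is in three places. First, choose $\varepsilon=1/n$ so that $(1+\varepsilon)^n=O(1)$; this keeps the iteration count $O(n\log n)$, which is polynomial. Second, check that $2^{2n}$ is dominated by $4.632^n$-type terms, which it is. Third, handle rationality: Lemma~\ref{lem:Ellipsoid-Enum} requires a rational $\mathbf A$, which is fine since we use a scaled identity, and $d$ can be rounded up slightly. I expect the main obstacle to be this lattice-point counting step, in particular making the Stirling constants explicit enough to land exactly on the stated base.
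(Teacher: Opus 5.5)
Your proof is correct, and it follows the paper except in the one step that determines the constant. The shared parts are the same algorithm, the same argument that $S=L\cap dB_\infty^n$, essentially the same $\log_{1+\varepsilon}\sqrt n$ iteration count, and the same per-call cost from Lemma~\ref{lem:Ellipsoid-Enum}. The step that differs is bounding $|L\cap d\sqrt n B_2^n|$ for $d\le(1+\varepsilon)\lambda_1^{(\infty)}(L)$.

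The paper does this in two stages, $|L\cap d^\star\sqrt n B_2^n|\le N(\sqrt n B_2^n,B_\infty^n)\cdot G(d^\star B_\infty^n,L)$:
\begin{itemize}
\item It bounds the covering number via Rogers--Zong \cite{rogers1997covering}, after relaxing $\sqrt nB_2^n-B_\infty^n\subseteq 2\sqrt nB_2^n$, which gives $\tilde O((2\pi e)^{n/2})$.
\item It bounds the count in a cube via Lemma~4.3 of \cite{dadush2011enumerative}, $(1+2d^\star/\lambda_1^{(\infty)})^n\le(3+2\varepsilon)^n$.
\end{itemize}
Together with the $2^n$ factor of Ellipsoid-Enum, this is where $6\sqrt{2\pi e}$ comes from, not from the coarsenings you guessed.

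Your single volume comparison instead packs the disjoint cubes $y+\frac{\lambda}{2}B_\infty^n$ into $(r+\frac{\lambda}{2})\sqrt nB_2^n$. It is more elementary, with no covering numbers or covering densities, and it does not pay for two separate enlargements. It therefore gives $\tilde O((3\sqrt{2\pi e})^n)\approx\tilde O(2^{3.632n})$, a factor $2^n$ better than the stated bound.

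So there is no obstacle in ``landing exactly'' on the base $6$, and no need to re-coarsen your bound to match it; an upper bound suffices. The only Stirling fact you need is $\Gamma(n/2+1)\ge(n/(2e))^{n/2}$, which gives $\operatorname{vol}(\sqrt nB_2^n)\le(2\pi e)^{n/2}$ outright.

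What the paper's route buys is modularity: it reuses the off-the-shelf covering-number and $G(K,L)$ lemmas of the Dadush--Peikert--Vempala enumeration framework, which is how such counts are handled for general norms. Your argument is self-contained and sharper for this specific pair of bodies.
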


\begin{proof}\ \vspace{-12pt}  %hack

\paragraph{Correctness of Algorithm \ref{alg:svp-infty}.}
Since \(\|y\|_\infty \ge \|y\|_2/\sqrt{n}\) for all \(y\), we have
\[
d_0=\frac{\lambda_1^{(2)}(L)}{\sqrt{n}} \;\le\; \lambda_1^{(\infty)}(L),
\]
so the loop starts below or at the target. Eventually \(d\ge \lambda_1^{(\infty)}(L)\), hence \(L\cap dB_\infty^n\) contains a nonzero vector and the algorithm terminates. On the last iteration, by construction \(S = L\cap dB_\infty^n\), so minimizing \(\|\cdot\|_\infty\) over \(S\setminus\{0\}\) returns an \(\ell_\infty\)-shortest nonzero vector.
    
\paragraph{Running time of Algorithm \ref{alg:svp-infty}.}
Because \(\lambda_1^{(\infty)}(L)\le \lambda_1^{(2)}(L)\), we have \(\lambda_1^{(\infty)}(L)/d_0 \le \sqrt{n}\), hence the number of iterations is at most
\[
I \;=\; \left\lceil \log_{1+\varepsilon} \sqrt{n} \right\rceil.
\]
Each iteration calls Ellipsoid-Enum once with output size \(|U| = |L\cap rB_2^n|\), so it costs
\[
\tilde{O}\!\left(2^{2n} + 2^n \cdot |L\cap rB_2^n|\right).
\]
Let \(d^\star\) be the first radius where the algorithm terminates, and \(r^\star=d^\star\sqrt{n}\).
Then the total time is
\[
\tilde{O}\!\left(I\cdot (2^{2n}+ 2^n \cdot |L\cap r^\star B_2^n|) \right),
\]
dominated by the final enumeration.

Moreover, one can upper bound the final output size by
\[
|L\cap r^\star B_2^n|
\;\le\; G(d^*\sqrt{n}B_2^n,L)
\;\le\;
N(\sqrt{n}B_2^n, B_\infty^n)\cdot G(d^\star B_\infty^n,L).
\]
Using the volumetric covering bound by Rogers and Zong \cite{rogers1997covering}
\[
N(\sqrt{n}B_2^n, B_\infty^n)
\;\le\;
\frac{\operatorname{vol}(\sqrt{n}B_2^n-B_\infty^n)}{\operatorname{vol}(B_\infty^n)} \vartheta(B_\infty^n).
\]
Since $B_\infty^n \subseteq\sqrt{n}B_2^n$ and $\vartheta(B_\infty^n)$ is polynomial in $n$, thus
\[
N(\sqrt{n}B_2^n, B_\infty^n)
\;\le\;
\tilde{O}\left(\frac{\operatorname{vol}(2\sqrt{n}B_2^n)}{\operatorname{vol}(B_\infty^n)}\right)=
\tilde{O}(\operatorname{vol}(\sqrt{n}B_2^n))
=\tilde{O}((\sqrt{2\pi e})^n)).
\]
By Lemma 4.3 in \cite{dadush2011enumerative}  
\[
G(d^\star B_\infty^n,L)
\;\le\;
\left(1+\frac{2d^\star}{\lambda_1^{(\infty)}(L)}\right)^n,
\]
together with \(d^\star < (1+\varepsilon)\lambda_1^{(\infty)}(L)\) (by minimality of \(d^\star\) in the \((1+\varepsilon)\)-search),
we get
\[
G(d^\star B_\infty^n,L)
\;\le\;
(3+2\varepsilon)^n,
\]
hence
\[
|L\cap r^\star B_2^n|
\;\le\;
\tilde{O}( (\sqrt{2\pi e})^n \cdot (3+2\varepsilon)^n).
\]
Plugging into Lemma~\ref{lem:Ellipsoid-Enum}, the overall time is
\[
\tilde{O}\!\left(I\cdot (2^{2n}+ 2^n \cdot \bigl(\sqrt{2\pi e}\,(3+2\varepsilon)\bigr)^n) \right)
.
\]
By setting $\varepsilon=1/n$, $I=\log_{1+\varepsilon} \sqrt{n}$ is still polynomial in $n$ and $(3+2\varepsilon)^n=\tilde{O}(3^n)$, we obtain a deterministic SVP algorithm for $\ell_{\infty}$ in
\[
\tilde{O}\!\left(\bigl(6\sqrt{2\pi e}\,\bigr)^n\right)\approx \tilde{O}(2^{4.632n}).
\qedhere
\]
\end{proof}

For the general norm cases, Dadush and Vempala \cite{DadushVempala2013Mellipsoids} presented a fully deterministic single exponential SVP algorithm in any norm.
\begin{theorem}[Theorem 1.8 in \cite{DadushVempala2013Mellipsoids}]
\label{thm:svpanynorm}
Given a basis for a full rank lattice $L$ and a well-centered convex body $K$, both in $\mathbb{R}^n$, the shortest vector in $L$ under the norm $\|\cdot\|_K$ can be found deterministically, using $2^{O(n)}$ time and space.
\end{theorem}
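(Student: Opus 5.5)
The statement above is a cited result (Theorem~1.8 of \cite{DadushVempala2013Mellipsoids}); I would prove it by generalizing the strategy of Algorithm~\ref{alg:svp-infty} and Theorem~\ref{thm:svp-infty}. There, the Euclidean ball $\sqrt{n}B_2^n$ served as an ellipsoidal proxy for $B_\infty^n$. For a general well-centered convex body $K$, a single ball can be off by a factor $\mathrm{poly}(n)^n$ in volume, which is too lossy. I would therefore replace it by an \emph{M-ellipsoid} $E$ of $K$, i.e.\ an ellipsoid with $N(K,E)\le 2^{O(n)}$ and $N(E,K)\le 2^{O(n)}$. The whole algorithm is then a covering-plus-enumeration argument, and every exponential factor is of the form $2^{O(n)}$.

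\textbf{Step 1: coarse scale.} Compute deterministically a John-type ellipsoid $E_0$ with $E_0\subseteq K\subseteq n E_0$, using the ellipsoid method with the membership oracle and the well-centeredness assumption. Then run deterministic $\ell_2$-SVP (\cite{micciancio2010deterministic}) in the norm $\|\cdot\|_{E_0}$ after a linear change of variables. This gives $\lambda_1$ in $\|\cdot\|_K$ within a factor $n$, so a geometric search over radii $r=r_0(1+1/n)^i$ needs only $\mathrm{poly}(n)$ rounds.

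\textbf{Step 2: build the M-ellipsoid deterministically.} I expect this to be the main obstacle, since the classical existence proofs of M-ellipsoids (Milman, Pisier) are non-constructive or randomized. I would first try an iterative procedure. Maintain a current ellipsoid $E$ and repeatedly symmetrize/intersect $K$ with $E$, e.g.\ replace $K$ by $K\cap E$ or $\mathrm{conv}(K\cup E)$. Re-estimate volumes deterministically to within a $2^{O(n)}$ factor (deterministic volume approximation within $2^{O(n)}$ is affordable here). Refit the ellipsoid via a Lenstra--Lovász-type rounding. At each step I would track the volume ratio $\operatorname{vol}(E)/\operatorname{vol}(K)$, with the aim of showing that $O(\log n)$ rounds bring both $\operatorname{vol}(K+E)/\operatorname{vol}(E)$ and $\operatorname{vol}(K+E)/\operatorname{vol}(K)$ down to $2^{O(n)}$. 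By the volumetric covering bound used in Theorem~\ref{thm:svp-infty} (Rogers--Zong), these two ratios imply $N(K,E),N(E,K)\le 2^{O(n)}$.

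\textbf{Step 3: enumeration at the right scale.} For each candidate radius $r$:
\begin{itemize}
\item Compute an explicit covering $rK\subseteq \bigcup_{t\in T}(t+rE)$ with $|T|\le 2^{O(n)}$. To do this, take $T$ to be the points of a scaled fine lattice (e.g.\ $\frac{r}{n}E$-adapted $\mathbb{Z}^n$ image) lying in $rK+rE$, enumerated via Lemma~\ref{lem:Ellipsoid-Enum}.
\item For each $t\in T$, call Lemma~\ref{lem:Ellipsoid-Enum} to list $L\cap(t+rE)$.
\item Keep the nonzero points with $\|y\|_K\le r$, and stop at the first $r$ for which one exists.
\end{itemize}
Correctness is exactly as in the proof of Theorem~\ref{thm:svp-infty}. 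For the running time, at the terminating radius $r<(1+1/n)\lambda_1^K(L)$ I would bound
\[
|L\cap(t+rE)|\le G(rE,L)\le N(E,K)\cdot G(rK,L)\le 2^{O(n)}\Bigl(1+\tfrac{2r}{\lambda_1^K(L)}\Bigr)^n=2^{O(n)},
\]
where the middle step is Lemma~4.3 of \cite{dadush2011enumerative} applied to $K$. Multiplying by $|T|$, by the $\tilde O(2^{2n})$ preprocessing of Lemma~\ref{lem:Ellipsoid-Enum}, and by the $\mathrm{poly}(n)$ rounds gives total time and space $2^{O(n)}$, all deterministic.
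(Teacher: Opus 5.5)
The paper gives no proof of this statement; it is imported from \cite{DadushVempala2013Mellipsoids}. Your Steps~1 and~3 are the covering-plus-enumeration framework of \cite{dadush2011enumerative}, which already gives a deterministic $2^{O(n)}$ SVP algorithm in any norm \emph{once an M-ellipsoid of $K$ is given}. The entire content of the cited theorem is a deterministic $2^{O(n)}$-time construction of that ellipsoid. That is your Step~2, and it is not actually supplied.

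\textbf{The iteration in Step~2 has no convergence argument, and its primitives cannot supply one.} A Lenstra--Lov\'asz/John-type refit pins down a body only up to a $\mathrm{poly}(n)$ dilation, i.e.\ up to $n^{\Theta(n)}$ in volume. The operations $K\mapsto K\cap E$ and $K\mapsto\mathrm{conv}(K\cup E)$ do nothing to remove that loss. For example, take $K=B_1^{n/2}\times B_\infty^{n/2}$:
\begin{itemize}
\item The John ellipsoid $E_J$ satisfies $K\cap E_J=E_J$ and $\mathrm{conv}(K\cup E_J)=K$. The L\"owner ellipsoid $E_L$ behaves symmetrically. So if the refit is John's or L\"owner's ellipsoid, both are fixed points of your procedure.
\item An M-ellipsoid of $K$ has radii of order $1/\sqrt n$ and $\sqrt n$ on the two blocks. $E_J$ has radii of order $1/\sqrt n$ and $1$, and $E_L$ has radii of order $1$ and $\sqrt n$.
\item Each is therefore off by a factor of order $\sqrt n$ on one block, so $N(K,E_J)$ and $N(E_L,K)$ are $n^{\Theta(n)}$.
\end{itemize}

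\textbf{The volume subroutine is circular.} Deterministic $2^{O(n)}$-factor volume estimation, which you call affordable, is not available independently. Rounding-based deterministic methods give only $n^{O(n)}$ accuracy. The $2^{O(n)}$-accurate deterministic volume algorithm is one of the \emph{corollaries} Dadush--Vempala derive from their M-ellipsoid, so invoking it in dimension $n$ presupposes Step~2.

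\textbf{What is actually needed.} Roughly, Dadush--Vempala close this step by making the Milman--Pisier existence proof of M-ellipsoids algorithmic.
\begin{itemize}
\item The refitting primitive is an approximate $\ell$-ellipsoid. It is computed deterministically by a convex program whose Gaussian-integral objective they show can be approximated deterministically.
\item The analysis bounds covering numbers directly, via Pisier's logarithmic bound on $\ell(K)\ell(K^\circ)$ together with the Sudakov and dual Sudakov inequalities. This alone gives $(\log n)^{O(n)}$.
\item That bound is then boosted to $2^{O(n)}$ by iterating on bodies built from $K$ and the current ellipsoid.
\end{itemize}
Tracking $\operatorname{vol}(K+E)/\operatorname{vol}(E)$ under John-type refits offers no substitute for this covering-number control.

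\textbf{Smaller, fixable issues in Step~3.}
\begin{itemize}
\item A grid of spacing $r/n$ in $E$-coordinates has about $(2\pi e n)^{n/2}$ points per translate of $rE$, which is superexponential. The spacing must be $\Theta(r/\sqrt n)$.
\item Lemma~\ref{lem:Ellipsoid-Enum} enumerates lattice points in an ellipsoid, not in $rK+rE$. Listing $T$ therefore needs its own argument, e.g.\ a graph search using a membership test for $rK+rE$.
\item For a well-centered but non-symmetric $K$, Lemma~4.3 of \cite{dadush2011enumerative} must be applied to $K\cap(-K)$, at a further $2^{O(n)}$ cost.
\end{itemize}
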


They also present a exact $2^{O(n)}$ CVP algorithm when the target point is sufficiently close to the lattice.

\begin{theorem}[Theorem 1.9 in \cite{DadushVempala2013Mellipsoids}]
\label{thm:cvpanynorm}
Given a basis for a lattice $L$, any well-centered $n$ dimensional convex body $K$, and a query point $x$ in $\mathbb{R}^n$, the closest vector in $L$ to $x$ in the norm $\|.\|_K$ defined by $K$ can be computed deterministically, using $(2+\gamma)^{O(n)}$ time and space, provided that the minimum distance is at most $\gamma$ times the length of the shortest nonzero vector of $L$ under $\|\cdot\|_K$.

\end{theorem}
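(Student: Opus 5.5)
The plan is to reduce the problem to enumerating all lattice points in a translate of a scaled copy of $K$, and to carry out that enumeration with Lemma~\ref{lem:Ellipsoid-Enum} applied to finitely many translates of an ellipsoid that approximates $K$ well. First, I would call the algorithm of Theorem~\ref{thm:svpanynorm} to compute $\lambda := \lambda_1^{K}(L)$, the length of a shortest nonzero vector of $L$ in $\|\cdot\|_K$. This costs $2^{O(n)}$ time. By the promise, the distance $\mu := \min_{\mathbf{w}\in L}\|\mathbf{x}-\mathbf{w}\|_K$ satisfies $\mu \le \gamma\lambda$. It therefore suffices to enumerate the set $S := L \cap (\mathbf{x} + \gamma\lambda K)$ and output an element of $S$ that minimizes $\|\mathbf{x}-\cdot\|_K$. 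The set $S$ is nonempty and contains a closest vector.

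The counting step bounds $|S|$ by the packing argument already used in the proof of Theorem~\ref{thm:svp-infty}. Translates of $\tfrac{\lambda}{2}\,\mathrm{int}(K)$ centered at distinct lattice points are pairwise disjoint. Each such translate therefore contains at most one lattice point, which gives
\[
|S| \le G(\gamma\lambda K, L) \le N\!\left(\gamma\lambda K, \tfrac{\lambda}{2} K\right) \le \left(1+\tfrac{2\gamma\lambda}{\lambda}\right)^{n} \cdot \mathrm{poly}(n) = (1+2\gamma)^{n}\,\mathrm{poly}(n).
\]
The last covering estimate is the standard volumetric bound $N(aK, bK) \le \mathrm{vol}(aK + \tfrac b2 K)/\mathrm{vol}(\tfrac b2 K)$. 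This matches Lemma~4.3 of \cite{dadush2011enumerative} in the $\ell_\infty$ case.

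For the enumeration, I would use an $M$-ellipsoid $E$ of $K$, which satisfies $N(K,E),\,N(E,K) \le 2^{O(n)}$. Dadush and Vempala construct such an ellipsoid deterministically in $2^{O(n)}$ time, together with an explicit covering set $\Lambda$, $|\Lambda| \le 2^{O(n)}$, such that $K \subseteq E + \Lambda$. Scaling gives $\mathbf{x} + \gamma\lambda K \subseteq \bigcup_{\mathbf{c}\in\Lambda} (\mathbf{x} + \gamma\lambda\mathbf{c} + \gamma\lambda E)$. For each $\mathbf{c}\in\Lambda$, I run Lemma~\ref{lem:Ellipsoid-Enum} on the ellipsoid $\gamma\lambda E$ translated by $\mathbf{x}+\gamma\lambda\mathbf{c}$, and I keep only the returned points that lie in $\mathbf{x}+\gamma\lambda K$; membership is tested with a membership oracle for $K$. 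The output size of each call is bounded by
\[
G(\gamma\lambda E, L) \le N(\gamma\lambda E, \gamma\lambda K)\cdot G(\gamma\lambda K, L) \le 2^{O(n)}(1+2\gamma)^{n}.
\]
The total time is therefore
\[
|\Lambda|\cdot \tilde O\!\left(2^{2n} + 2^n\cdot 2^{O(n)}(1+2\gamma)^n\right) = (2+\gamma)^{O(n)}.
\]
The union of the surviving points is exactly $S$, and the minimizer over $S$ is a closest vector.

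The main obstacle is the deterministic construction of the $M$-ellipsoid together with an explicit covering $\Lambda$ of $K$ by $2^{O(n)}$ translates of $E$. Existence follows from Milman's theorem, but making it algorithmic and deterministic is the real content of \cite{DadushVempala2013Mellipsoids}. To get this, I would first compute, deterministically, an approximate John/Löwner ellipsoid (for instance via the ellipsoid method) to put $K$ in near-isotropic position. I would then refine it to an $M$-ellipsoid by the iterative volume-reduction procedure that uses approximate volume computation. Finally, I would produce $\Lambda$ by enumerating the points of a fine lattice inside $K$ with Ellipsoid-Enum and keeping a maximal separated subset. A secondary detail is that $\lambda$ is computed only up to rational precision, so I would enlarge the radius slightly, to $(1+1/n)\gamma\lambda$; this costs only a $\mathrm{poly}(n)$ factor.
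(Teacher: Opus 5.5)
The paper gives no proof of this statement. It quotes it from \cite{DadushVempala2013Mellipsoids}, and in Section~\ref{sec:averagecasealg} it explicitly bypasses it in favour of Algorithm~\ref{alg:gssbyellenum}. Your reconstruction is the standard Dadush--Peikert--Vempala/Dadush--Vempala argument, and the outline is sound:
\begin{itemize}
\item compute $\lambda=\lambda_1^K(L)$ with the SVP solver;
\item cover $\mathbf{x}+\gamma\lambda K$ by $2^{O(n)}$ translates of a scaled M-ellipsoid;
\item run Micciancio--Voulgaris enumeration on each piece and filter by membership;
\item control all output sizes with a packing bound.
\end{itemize}
Algorithm~\ref{alg:gssbyellenum}, with the proof of Theorem~\ref{thm:cvpenumalg}, is in essence your argument specialized to $K=\mathbf{B}_\infty^{m}$. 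There the single ball $\sqrt{m}\,\mathbf{B}_2^{m}$ acts as an M-ellipsoid of the cube with trivial covering set $\Lambda=\{\mathbf{0}\}$: it contains the cube, and $N(\sqrt{m}\,\mathbf{B}_2^{m},\mathbf{B}_\infty^{m})=2^{O(m)}$ by Rogers--Zong. The guard $\lambda_1^{(\infty)}>\tfrac14 M^{1/n}$ plays the role of your promise, bounding the ratio of enumeration radius to $\lambda_1$ by $4$, which gives the factor $9^m=(1+2\cdot 4)^m$. The paper's specialization buys explicit constants with no M-ellipsoid machinery. Your route buys arbitrary norms, at the price of a deterministic M-ellipsoid, which is the whole difficulty.

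A few repairs are needed.

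\textbf{The bound on $|S|$.} It does not follow as written. With $a=\gamma\lambda$ and $b=\lambda/2$, your own covering estimate gives $(1+4\gamma)^n$, and the $\mathrm{poly}(n)$ factor is unexplained. Also, a translate of the \emph{closed} body $\tfrac{\lambda}{2}K$ can contain $2^n$ lattice points, e.g.\ $\{0,1\}^n\subseteq(\tfrac12,\dots,\tfrac12)+\tfrac12\mathbf{B}_\infty^n$ for $L=\mathbb{Z}^n$. None of this hurts $(2+\gamma)^{O(n)}$, but direct packing is cleaner: the disjoint sets $\mathbf{v}+\tfrac{\lambda}{2}\operatorname{int}K$, $\mathbf{v}\in S$, all lie in $\mathbf{x}+(\gamma+\tfrac12)\lambda K$, so $|S|\le(1+2\gamma)^n$.

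\textbf{Symmetry.} That packing argument uses $K=-K$. If $K$ is only assumed well-centered, pack translates of $\tfrac{\lambda}{2}(K\cap -K)$ instead. You then pay a factor $\operatorname{vol}(K)/\operatorname{vol}(K\cap -K)$, which the well-centeredness hypothesis is meant to keep at $2^{O(n)}$.

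\textbf{Unknown $\gamma$.} If $\gamma$ is not part of the input, double the radius starting from $\lambda$ until the enumerated set is nonempty.

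\textbf{The M-ellipsoid sketch.} Your last paragraph is not an argument for a deterministic M-ellipsoid.
\begin{itemize}
\item Deterministic $2^{O(n)}$-factor volume approximation is something \cite{DadushVempala2013Mellipsoids} obtain \emph{from} M-ellipsoids. Using it inside the construction needs at least a bootstrapping argument you have not given.
\item Enumerating net points of $K$ ``with Ellipsoid-Enum'' presupposes exactly the cheap covering of $K$ by ellipsoids that you are trying to build.
\end{itemize}
Cite the deterministic M-ellipsoid theorem of \cite{DadushVempala2013Mellipsoids} and the covering-based enumeration of \cite{dadush2011enumerative} as black boxes. With those, your proof is complete.
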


However, it remains open to solve the CVP in time $2^{O(n)}$ with no assumptions on the minimum distance. Even the special case of the CVP under any norm other than the Euclidean norm is open.

\section{Worst Case Algorithms for Subset Balancing Problems}
\label{sec:worstcasealg}
\subsection{Subset Balancing Problems with $C=[-d:d]$}

Given $\mathbf{x}=(x_1,\ldots,x_n)$, the goal is to find a nonzero vector $\mathbf{c} \in C^n$ such that $\mathbf{c} \cdot \mathbf{x} = 0$. In the following discussion, we assume by default $\mathbf{x}\ne\mathbf{0}$, otherwise the problem is trivial. In this section, we treat $d$ as a parameter rather than a constant in the setting of \cite{randolph2025beating,chen2022average}. As noted previously, all integer solutions $\mathbf{c}$ satisfying $\mathbf{c} \cdot \mathbf{x} = 0$ form a lattice $\mathcal{L}_{\mathbf{x}}^\perp$. Thus, finding a nonzero vector $\mathbf{c} \in C^n$ with $\mathbf{c} \cdot \mathbf{x} = 0$ and $\|\mathbf{c}\|_\infty \le d$ reduces to finding a short vector in the $\ell_\infty$ norm, i.e., an instance of SVP in the $\ell_\infty$ norm.

We first give a sufficient condition under which a solution exists.

\begin{theorem}
\label{thm:solexist}
Let $C=[-d:d]$ and $\mathbf{x}=(x_1,\ldots,x_n)$. If 
\[
\frac{\|\mathbf{x}\|_2}{\gcd(x_1,\ldots,x_n)} < (d+1)^{n-1},
\]
then there exists $\mathbf{c} \in C^n \setminus \{\mathbf{0}\}$ such that $\mathbf{c} \cdot \mathbf{x} = 0$.
\end{theorem}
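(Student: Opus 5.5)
The plan is to apply Minkowski's theorem (Lemma~\ref{lemma:minkowski}) to the rank-$(n-1)$ lattice $\mathcal{L}^{\perp}_{\mathbf{x}}$, using the determinant formula from the Proposition on $\mathcal{L}^{\perp}_{\mathbf{x}}$. Since $\mathbf{x}\neq\mathbf{0}$ and (we may assume) $n\ge 2$, the lattice $\mathcal{L}^{\perp}_{\mathbf{x}}\subseteq\mathbb{Z}^n$ has rank $n-1$. Its determinant is
\[
\det(\mathcal{L}^{\perp}_{\mathbf{x}})=\frac{\|\mathbf{x}\|_2}{\gcd(x_1,\ldots,x_n)}.
\]
Lemma~\ref{lemma:minkowski} therefore gives
\[
\lambda_1^{(\infty)}(\mathcal{L}^{\perp}_{\mathbf{x}})\le \left(\frac{\|\mathbf{x}\|_2}{\gcd(x_1,\ldots,x_n)}\right)^{1/(n-1)} < d+1
\]
under the hypothesis of Theorem~\ref{thm:solexist}.

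Next we use integrality. The lattice is discrete, so the infimum defining $\lambda_1^{(\infty)}$ is attained by some nonzero $\mathbf{c}\in\mathcal{L}^{\perp}_{\mathbf{x}}$. This gives $\|\mathbf{c}\|_\infty<d+1$. Because $\mathbf{c}\in\mathbb{Z}^n$, its sup norm is an integer, so $\|\mathbf{c}\|_\infty\le d$. Hence every coordinate of $\mathbf{c}$ lies in $[-d:d]=C$. Also $\mathbf{c}\neq\mathbf{0}$ and $\mathbf{c}\cdot\mathbf{x}=0$ by membership in $\mathcal{L}^{\perp}_{\mathbf{x}}$, which is exactly the required solution.

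The only step needing care is applying Minkowski to a lattice that is not full rank in $\mathbb{R}^n$. The paper states Lemma~\ref{lemma:minkowski} for rank-$n$ lattices in $\mathbb{R}^m$, but the $\ell_\infty$ ball restricted to the $(n-1)$-dimensional subspace $\mathrm{span}(\mathcal{L}^{\perp}_{\mathbf{x}})$ is not a cube. So the stated lemma needs some justification; this is the main obstacle.

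I would first try a pigeonhole argument, which sidesteps the issue and is arguably cleaner. Consider the $(d+1)^n$ vectors $\mathbf{c}\in[0:d]^n$. Each value $\mathbf{c}\cdot\mathbf{x}/g$, with $g=\gcd(x_1,\ldots,x_n)$, is an integer in a range of size at most $d\|\mathbf{x}\|_1/g+1$. If two such vectors collide, their difference lies in $C^n\setminus\{\mathbf{0}\}$ and is orthogonal to $\mathbf{x}$. However, this counting yields a condition in terms of $\|\mathbf{x}\|_1$, not $\|\mathbf{x}\|_2$, so it does not directly give the stated bound.

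I would therefore fall back on the Minkowski route with the sharper volume fact. A central section of the cube $[-1,1]^n$ by any hyperplane has $(n-1)$-volume at least $2^{n-1}$ (Vaaler's theorem). So Minkowski's convex body theorem applied inside the hyperplane $\mathbf{x}^\perp$ to the body $(d+1-\delta)B_\infty^n\cap\mathbf{x}^\perp$ yields a nonzero lattice point whenever $(d+1-\delta)^{n-1}2^{n-1}\ge 2^{n-1}\det(\mathcal{L}^{\perp}_{\mathbf{x}})$. Here $\delta>0$ is chosen small enough that this holds, which is possible by the strict hypothesis. This justifies the $\ell_\infty$ form of Lemma~\ref{lemma:minkowski} in this setting and completes the argument.
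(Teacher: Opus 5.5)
Your proof is correct and follows the paper's route: apply the $\ell_\infty$ Minkowski bound (Lemma~\ref{lemma:minkowski}) to the rank-$(n-1)$ lattice $\mathcal{L}^{\perp}_{\mathbf{x}}$, whose determinant is $\|\mathbf{x}\|_2/\gcd(x_1,\ldots,x_n)$, to get $\lambda_1^{(\infty)}<d+1$, then use integrality to conclude $\lambda_1^{(\infty)}\le d$. The paper invokes the lemma for this non-full-rank lattice without comment, and your appeal to Vaaler's cube-slicing theorem, giving $\mathrm{vol}_{n-1}(B_\infty^n\cap\mathbf{x}^\perp)\ge 2^{n-1}$, is exactly what justifies that step (this makes the statement a special case of the Bombieri--Vaaler form of Siegel's lemma); you were also right to discard the pigeonhole detour, since it only yields a condition on $\|\mathbf{x}\|_1$.
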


\begin{proof}
As discussed above, it suffices to find a nonzero vector in $\mathcal{L}_{\mathbf{x}}^{\perp}$ with $\ell_\infty$ norm at most $d$. By Minkowski's theorem (\cref{lemma:minkowski}), we have
\[
\lambda^{(\infty)}_1(\mathcal{L}_{\mathbf{x}}^{\perp}) \le \det(\mathcal{L}_{\mathbf{x}}^{\perp})^{1/(n-1)}.
\]
Thus, if
\[
\det(\mathcal{L}_{\mathbf{x}}^{\perp}) <(d+1)^{n-1},
\]
then
\[
\lambda^{(\infty)}_1(\mathcal{L}_{\mathbf{x}}^{\perp}) < d+1 \Longrightarrow \lambda^{(\infty)}_1(\mathcal{L}_{\mathbf{x}}^{\perp}) \le d.
\]
This implies the existence of a nonzero vector $\mathbf{c} \in \mathcal{L}_{\mathbf{x}}^{\perp}$ with $\|\mathbf{c}\|_\infty \le d$, which yields a valid solution.
\end{proof}

\begin{comment}
\begin{proof}
Consider Minkowski's Theorem for $\mathcal{L}$ we defined above in $\ell_{\infty}$ norm, there exist a nonzero vector $\mathbf{v}=(v_0,v_1,\dots,v_n)$ satisfying
\[
\|\mathbf{v}\|_{\infty} \le \det(\mathcal{L})^{1/(n+1)}=(\alpha q)^{1/(n+1)}<d+1.
\]
Since $\|\mathbf{v}\|_{\infty}$ must be an integer, so $\|\mathbf{v}\|_{\infty}\le d$. 
We can also write
\[
\mathbf{v} = \big(\alpha (b_0q+\sum_{i=1}^n b_i x_i), b_1, \ldots, b_n \big)
\]
for some integers $b_i$. Thus we know that $\alpha|v_0$, combining with $|v_0|\le d<\alpha$ we deduce $v_0$ must be $0$. Then we have
\[
q|\sum_{i=1}^n b_ix_i,
\]
however
\[
|\sum_{i=1}^n b_ix_i|\le d\sum_{i=1}^n |x_i|<d\sum_{i=1}^n |x_i|+1=q.
\]
Finally we show $\sum_{i=1}^n b_ix_i=0$. Since $\mathbf{v}\ne \mathbf{0}$ and $v_0=0$, it follows that $(b_1,\dots,b_n)\ne \mathbf{0}$ and this is the vector we want.
\end{proof}
\end{comment}

Thus, we obtain a deterministic algorithm for subset balancing problems by employing a deterministic SVP solver. Dadush et al.~\cite{dadush2011enumerative} proposed a deterministic algorithm for SVP in any norm for full-rank lattices. To leverage this algorithm for analyzing the concrete time complexity of subset balancing problems, we propose in this section an explicit construction motivated by prior cryptanalysis literature \cite{lagarias1985solving}.

With integer parameters $\alpha,q$ , we construct the lattice $\mathcal{L}_{\alpha,q}$ with basis $\mathbf{B}_{\alpha,q}$ as  
\[
\mathbf{B}_{\alpha,q}=
\begin{pmatrix}
\alpha q \\
\alpha x_1 & 1 &        &        \\
\alpha x_2 &   & 1      &        \\
\vdots     &   &        & \ddots \\
\alpha x_n &   &        &        & 1
\end{pmatrix}
\in \mathbb{Z}^{(n+1)\times(n+1)}.
\]

Then we show the connection between the solution of Subset Balancing Problem and the short lattice vector in $\mathcal{L}_{\alpha,q}$ by the following lemma.
\begin{lemma}
\label{lem:connsolandshortvec}
If $\alpha>d, q>d\sum_{i}^n|x_i|$, then
\[
\mathcal{L}_{\alpha,q}\cap d~\mathbf{B}_{\infty}^{n+1}=\left\{(0,c_1,\dots,c_n)\Bigg|\sum_{i=1}^nc_ix_i=0,|c_i|\le d \right\}.
\]
\end{lemma}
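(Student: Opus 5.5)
The plan is to prove the two inclusions separately, working with an explicit description of a general lattice vector. Since the rows of $\mathbf{B}_{\alpha,q}$ form a basis, every $\mathbf{v}\in\mathcal{L}_{\alpha,q}$ can be written with integer coefficients $b_0,b_1,\dots,b_n$ as
\[
\mathbf{v}=\Bigl(\alpha\bigl(b_0q+\textstyle\sum_{i=1}^n b_ix_i\bigr),\,b_1,\dots,b_n\Bigr).
\]
The last $n$ coordinates of $\mathbf{v}$ are therefore exactly the coefficients $b_i$, and the first coordinate is $\alpha$ times an integer.

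For the inclusion ``$\supseteq$'', take $(0,c_1,\dots,c_n)$ with $\sum_i c_ix_i=0$ and $|c_i|\le d$. Choose $b_0=0$ and $b_i=c_i$. The first coordinate is then $\alpha\sum_i c_ix_i=0$, so the vector lies in $\mathcal{L}_{\alpha,q}$. Its $\ell_\infty$ norm is $\max_i|c_i|\le d$, so it also lies in $d\,\mathbf{B}_\infty^{n+1}$.

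For the inclusion ``$\subseteq$'', take $\mathbf{v}\in\mathcal{L}_{\alpha,q}$ with $\|\mathbf{v}\|_\infty\le d$, written as above. The argument has three steps.

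\begin{enumerate}
\item The first coordinate is a multiple of $\alpha$ and has absolute value at most $d<\alpha$, so it must equal $0$. This gives $b_0q+\sum_i b_ix_i=0$, that is, $q\mid \sum_i b_ix_i$.
\item Since $|b_i|\le d$ for every $i$, we have $\bigl|\sum_i b_ix_i\bigr|\le d\sum_i|x_i|<q$. The only multiple of $q$ in the open interval $(-q,q)$ is $0$, so $\sum_i b_ix_i=0$, and consequently $b_0=0$ (using $q\ge1$).
\item Setting $c_i=b_i$, the vector $\mathbf{v}$ has the form $(0,c_1,\dots,c_n)$ with $\sum_i c_ix_i=0$ and $|c_i|\le d$, as required.
\end{enumerate}

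There is no real obstacle here. The only delicate point is the second step, which uses strictness twice: $\alpha>d$ forces the first coordinate to vanish, and $q>d\sum_i|x_i|$ turns the congruence modulo $q$ into an exact equality. I would also note that the zero vector belongs to both sides, so the set equality holds as stated and nonzeroness need not be tracked separately.
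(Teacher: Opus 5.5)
Your proof is correct and matches the paper's argument step for step. You parametrize lattice vectors by the row coefficients, use $\alpha>d$ to force the first coordinate to vanish, and then use $q>d\sum_i|x_i|$ to force the coefficient of the first row to be $0$. The only cosmetic difference is that you phrase the second step as ``the only multiple of $q$ in $(-q,q)$ is $0$,'' while the paper assumes $z_0\neq 0$ and derives the same inequality contradiction; the reverse inclusion is identical.
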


\begin{proof}
Every $\mathbf{v}\in \mathcal{L}_{\alpha,q}$ can be written uniquely as
\[
\mathbf{v}
= \bigl(\alpha(qz_0+\sum_{i=1}^n z_i x_i),\, z_1,\dots,z_n\bigr)
\]
for some integers $z_0,z_1,\dots,z_n\in\mathbb{Z}$.

\medskip
\noindent\emph{($\subseteq$).}
Take $\mathbf{v}\in \mathcal{L}_{\alpha,q}\cap d\,\mathbf{B}_\infty^{n+1}$.
Write $\mathbf{v}=(v_0,v_1,\dots,v_n)$ as above, so $v_i=z_i$ for $i=1,\dots,n$ and
\[
v_0=\alpha\Bigl(qz_0+\sum_{i=1}^n z_i x_i\Bigr).
\]
Since $\|\mathbf{v}\|_\infty\le d$, we have $|z_i|=|v_i|\le d$ for all $i=1,\dots,n$.
Moreover, because $\alpha>d$, the inequality $|v_0|\le d$ forces $v_0=0$, hence
\[
qz_0+\sum_{i=1}^n z_i x_i=0.
\]
If $z_0\neq 0$, then
\[
\Bigl|\sum_{i=1}^n z_i x_i\Bigr| = q|z_0|\ge q.
\]
On the other hand, by $|z_i|\le d$,
\[
\Bigl|\sum_{i=1}^n z_i x_i\Bigr|
\le \sum_{i=1}^n |z_i||x_i|
\le d\sum_{i=1}^n |x_i|.
\]
These two inequalities contradict the assumption $q>d\sum_{i=1}^n|x_i|$.
Therefore $z_0=0$, and consequently $\sum_{i=1}^n z_i x_i=0$.
Thus $\mathbf{v}=(0,z_1,\dots,z_n)$ with $\sum_{i=1}^n z_i x_i=0$ and $|z_i|\le d$.

\medskip
\noindent\emph{($\supseteq$).}
Conversely, let $(0,c_1,\dots,c_n)$ be any integer vector with $\sum_{i=1}^n c_i x_i=0$
and $|c_i|\le d$.
Set $z_0=0$ and $z_i=c_i$ for $i=1,\dots,n$. Then the lattice parametrization gives
\[
(0,c_1,\dots,c_n)
=\bigl(\alpha(\sum_{i=1}^n c_i x_i),\,c_1,\dots,c_n\bigr)\in \mathcal{L}_{\alpha,q}.
\]
Since $|c_i|\le d$ and the first coordinate is $0$, this vector lies in
$d\,\mathbf{B}_\infty^{n+1}$.

\medskip
Combining both directions proves the claimed equality.   
\end{proof}

Then we show a dimension preserving reduction from Subset Balancing Problem with $C=[-d:d]$ to $\text{SVP}_{\infty}$.

\begin{theorem}
\label{thm:redtosvp}
Given any $d,n>0$, Subset Balancing on $C=[-d:d]$ can be solved by a single call of a $\text{SVP}_{\infty}$ oracle in dimension $n+1$.
\end{theorem}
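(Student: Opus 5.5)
The plan is to build the lattice $\mathcal{L}_{\alpha,q}$ with concrete parameters, make one call to the $\mathrm{SVP}_\infty$ oracle, and read off the answer using Lemma~\ref{lem:connsolandshortvec}. I set $\alpha=d+1$ and $q=d\sum_{i=1}^n|x_i|+1$. These satisfy $\alpha>d$ and $q>d\sum_i|x_i|$, and their bit-lengths are polynomial in the input size. The basis $\mathbf{B}_{\alpha,q}$ is lower triangular with diagonal entries $\alpha q,1,\dots,1$, all nonzero. So $\mathcal{L}_{\alpha,q}$ is a full-rank lattice in dimension $n+1$, and it can be handed to the oracle.

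The oracle call returns some $\mathbf{v}=(v_0,v_1,\dots,v_n)\in\mathcal{L}_{\alpha,q}\setminus\{\mathbf{0}\}$ with $\|\mathbf{v}\|_\infty=\lambda_1^{(\infty)}(\mathcal{L}_{\alpha,q})$. The algorithm then checks whether $\|\mathbf{v}\|_\infty\le d$.

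In the first case, $\|\mathbf{v}\|_\infty\le d$. Then $\mathbf{v}\in\mathcal{L}_{\alpha,q}\cap d\,\mathbf{B}_\infty^{n+1}$, so Lemma~\ref{lem:connsolandshortvec} gives $v_0=0$, $\sum_i v_ix_i=0$ and $|v_i|\le d$. Since $\mathbf{v}\neq\mathbf{0}$ and $v_0=0$, the vector $\mathbf{c}=(v_1,\dots,v_n)$ is nonzero, lies in $C^n$, and satisfies $\mathbf{c}\cdot\mathbf{x}=0$. The algorithm outputs this $\mathbf{c}$.

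In the second case, $\|\mathbf{v}\|_\infty>d$. By minimality of the oracle's output, $\mathcal{L}_{\alpha,q}\cap d\,\mathbf{B}_\infty^{n+1}=\{\mathbf{0}\}$. Suppose, for contradiction, that some valid solution $\mathbf{c}\in C^n\setminus\{\mathbf{0}\}$ existed. By the $\supseteq$ direction of Lemma~\ref{lem:connsolandshortvec}, the vector $(0,\mathbf{c})$ would be a nonzero element of that intersection, which is impossible. So in this case the algorithm correctly reports that no solution exists.

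Apart from the single oracle call, the only work is building the basis and one comparison, both polynomial-time. There is no real obstacle; the only point needing care is the choice of $\alpha$ and $q$, which must satisfy the strict inequalities of Lemma~\ref{lem:connsolandshortvec} while keeping the instance polynomial-size. The case $\mathbf{x}=\mathbf{0}$ is handled separately by returning $(1,0,\dots,0)$.
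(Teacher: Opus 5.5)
Your proposal is correct and matches the paper's proof essentially step for step. You use the same parameters $\alpha=d+1$ and $q=d\sum_i|x_i|+1$, make a single $\mathrm{SVP}_\infty$ call on $\mathcal{L}_{\alpha,q}$, and apply both directions of Lemma~\ref{lem:connsolandshortvec} for soundness and completeness. The only differences are cosmetic: you derive $v_0=0$ from the lemma rather than testing it, and you explicitly verify full rank and treat $\mathbf{x}=\mathbf{0}$ separately.
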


\begin{proof}

Set $\alpha= d+1, q = d\sum_{i=1}^n |x_i|+1$, so \(\alpha>d\) and \(q>d\sum_i |x_i|\). Construct \(\mathcal{L}_{\alpha,q}\subseteq \mathbb{Z}^{n+1}\)
with basis \(\mathbf{B}_{\alpha,q}\), and call the \(\mathrm{SVP}_\infty\) oracle on \(\mathcal{L}_{\alpha,q}\).
Let \(\mathbf{v}=(v_0,v_1,\dots,v_n)\) be the returned shortest nonzero vector.

Output \((c_1,\dots,c_n)=(v_1,\dots,v_n)\) iff \(v_0=0\) and \(\|\mathbf{v}\|_\infty\le d\);
otherwise report that no solution exists.

\paragraph{Correctness and Soundness.}
If there exists \((c_1,\dots,c_n)\ne\mathbf{0}\) with \(\sum_i c_i x_i=0\) and \(|c_i|\le d\), then by
Lemma~\ref{lem:connsolandshortvec},
\((0,c_1,\dots,c_n)\in \mathcal{L}_{\alpha,q}\cap d\,\mathbf{B}_\infty^{n+1}\).
Hence \(\lambda_1^{(\infty)}(\mathcal{L}_{\alpha,q})\le d\), so the oracle returns \(\mathbf{v}\) with
\(\|\mathbf{v}\|_\infty\le d\), and again Lemma~\ref{lem:connsolandshortvec} implies \(v_0=0\);
thus the algorithm outputs a valid solution. If no such solution exists, then the intersection
contains no nonzero vector, so \(\lambda_1^{(\infty)}(\mathcal{L}_{\alpha,q})>d\) and the algorithm reports that no solution exists.

The reduction makes one \(\mathrm{SVP}_\infty\) call in dimension \(n+1\).
\end{proof}

\paragraph{Proof of \cref{thm:mainworstcase}.}

    Combining Theorems \ref{thm:redtosvp} and \ref{thm:svp-infty}, there is a deterministic algorithm that solves Subset Balancing with $C=[-d:d]$ in time
$\tilde{O}\!\left(\bigl(6\sqrt{2\pi e}\,\bigr)^n\right)\approx \tilde{O}(2^{4.632n})$
for any $d>0$.
Moreover, combining Theorem \ref{thm:redtosvp} with Mukhopadhyay's $\tilde{O}(2^{2.443n})$ randomized algorithm for $\text{SVP}_{\infty}$~\cite{mukhopadhyay2019faster}, we obtain a randomized algorithm that solves Subset Balancing with $C=[-d:d]$ in time $\tilde{O}(2^{2.443n})$.\qed

\begin{remark}
Notably, the time complexity of our algorithm is independent with the size of $C$ (ignoring polynomial factor). The randomized algorithm with time complexity $$\tilde{O}(2^{2.443n})$$ outperforms $O(|C|^{(0.5-\varepsilon)n})$ in~\cite{randolph2025beating} when $d > 14$, and its performance could improve further if more efficient SVP algorithms in the $\ell_\infty$ norm become available.

Moreover, if we employ the fastest heuristic $\text{SVP}_{\infty}$ algorithm analyzed in \cite{aggarwal2018improved} of time complexity $\tilde{O}(2^{0.62n})\approx \tilde{O}(1.537^{n})$, which could beat the fastest $\tilde{O}(1.7067^{n})$ in~\cite{randolph2025beating} even when $d=1$.

\end{remark}

\paragraph{Polynomial-time algorithms for sufficiently large $d$.}

Our motivation is that when $d$ is sufficiently large, one can replace the SVP oracle with the LLL algorithm~\cite{lenstra1982factoring} to obtain a relatively short vector, which suffices to produce a valid solution for the Subset Balancing problem.

\begin{theorem}
\label{thm:sbppoly}
Let $C=[-d:d]$ and $\mathbf{x}=(x_1,\ldots,x_n)\in\mathbb{Z}^n$. Suppose 
$$
d > 2^{\frac{n-2}{4}} \cdot \left(\frac{\|\mathbf{x}\|_2}{\gcd(x_1,\ldots,x_n)}\right)^{1/(n-1)}-1.
$$
then we could find $\mathbf{c} \in C^n \setminus \{\mathbf{0}\}$ such that $\mathbf{c} \cdot \mathbf{x} = 0$ in time polynomial in $n$ and the bit length of $\mathbf{x}$.
\end{theorem}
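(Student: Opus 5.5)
Apply LLL (Lemma~\ref{lemma:LLL}) directly to the rank-$(n-1)$ lattice $\mathcal{L}^{\perp}_{\mathbf{x}}$ and read the returned vector as the coefficient vector. Note that the exponent $\frac{n-2}{4}$ is exactly $\frac{(n-1)-1}{4}$, the LLL factor for rank $n-1$, so the intended lattice is $\mathcal{L}^{\perp}_{\mathbf{x}}$ and not the $(n+1)$-dimensional embedding $\mathcal{L}_{\alpha,q}$.

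The steps are as follows.

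\textbf{Step 1: compute a basis.} We first compute a basis of $\mathcal{L}^{\perp}_{\mathbf{x}}$ in polynomial time. One option is to compute the integer kernel of the $1\times n$ matrix $\mathbf{x}$ via Hermite normal form, or via the extended Euclidean algorithm on $x_1,\dots,x_n$ to build a unimodular $U$ as in the proof of the Proposition. An alternative that uses only earlier machinery is to run LLL on $\mathcal{L}_{\alpha,q}$ with a huge weight $\alpha$. I would go with the HNF/kernel computation, since it is standard and clearly polynomial in $n$ and the bit length of $\mathbf{x}$.

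\textbf{Step 2: run LLL.} By Lemma~\ref{lemma:LLL} and the Proposition, LLL returns a nonzero $\mathbf{c}\in\mathcal{L}^{\perp}_{\mathbf{x}}$ with
\[
\|\mathbf{c}\|_\infty \le \|\mathbf{c}\|_2 \le 2^{\frac{n-2}{4}}\det(\mathcal{L}^{\perp}_{\mathbf{x}})^{1/(n-1)} = 2^{\frac{n-2}{4}}\left(\frac{\|\mathbf{x}\|_2}{\gcd(x_1,\ldots,x_n)}\right)^{1/(n-1)} < d+1 .
\]

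\textbf{Step 3: integrality.} Since $\mathbf{c}$ is an integer vector, $\|\mathbf{c}\|_\infty\le d$. Hence $\mathbf{c}\in C^n\setminus\{\mathbf{0}\}$ and $\mathbf{c}\cdot\mathbf{x}=0$. This mirrors the integrality argument in the proof of Theorem~\ref{thm:solexist}.

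The main obstacle is Step 1. LLL in Lemma~\ref{lemma:LLL} is stated for a lattice given by a basis, and $\mathcal{L}^{\perp}_{\mathbf{x}}$ is not full rank. We must therefore confirm two things: that a basis with polynomially bounded entries is obtainable, and that LLL on a non-full-rank integer lattice in $\mathbb{Z}^n$ still runs in polynomial time with the stated $2^{(k-1)/4}\det^{1/k}$ guarantee for rank $k=n-1$. Both are classical facts, so I would cite them rather than reprove them.

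If one insists on working within $\mathcal{L}_{\alpha,q}$ instead, the bound would come out with $n+1$ in place of $n-1$, which does not match the stated threshold. This is why I commit to the direct kernel lattice.
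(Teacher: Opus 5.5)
Your proposal is correct and matches the paper's proof. Both run LLL on $\mathcal{L}^{\perp}_{\mathbf{x}}$, use the rank-$(n-1)$ guarantee with $\det(\mathcal{L}^{\perp}_{\mathbf{x}})=\|\mathbf{x}\|_2/\gcd(x_1,\ldots,x_n)$ to get $\|\mathbf{c}\|_\infty\le\|\mathbf{c}\|_2<d+1$, and conclude $\|\mathbf{c}\|_\infty\le d$ by integrality. Your Step 1, computing a polynomial-size basis of the non-full-rank kernel lattice, just makes explicit a detail the paper leaves implicit.
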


\begin{proof}
We run LLL algorithm with the lattice $\mathcal{L}^{\perp}_{\mathbf{x}}$. By~\cref{lemma:LLL}, we obtain a nonzero vector $\mathbf{v} \in \mathcal{L}^{\perp}_{\mathbf{x}}$ satisfying
\[
\|\mathbf{v}\|_2 \le 2^{\frac{n-2}{4}} \det(\mathcal{L}^{\perp}_{\mathbf{x}})^{\frac{1}{n-1}}<d+1.
\] 
Since $\|\mathbf{v}\|_\infty \le \|\mathbf{v}\|_2<d+1$, and $\|\mathbf{v}\|_\infty$ is a integer, thus $\|\mathbf{v}\|\le d$ is a valid solution.

The running time is dominated by the running time of LLL algorithm, which is polynomial in $n$ and the bit length of $\mathbf{x}$.
\end{proof}

\subsection{Generalization of Coefficient Set}
The original \emph{Subset Balancing Problem} can be equivalently stated as follows: given the convex set $[-d,d]^n$, find a nonzero integer vector $\mathbf{c}$ such that
\[
\mathbf{c} \in \bigl(\mathbb{Z}^n \cap [-d,d]^n\bigr)\setminus \{\mathbf{0}\}
\quad\text{and}\quad
\mathbf{c}\cdot \mathbf{x} = 0.
\]
In this subsection, we generalize the specific box $[-d,d]^n$ to an arbitrary convex and centrally symmetric set $K$ with $\mathbf{0}\in K$, and we give a deterministic algorithm running in single-exponential in $n$, i.e., $\tilde{O}(2^{c_K n})$ time. Here the constant $c_K$ is independent of the size of $K$, and depends only on the shape of $K$.

We may consider the Minkowski functional induced by $K$:
\[
\|\mathbf{x}\|_{K} \;:=\; \inf\{\, r\ge 0 : \mathbf{x}\in rK \,\}.
\]
In our settings that $K$ is convex, centrally symmetric, and contains the origin in its interior, thus $\|\cdot\|_{K}$ is a norm. 

\begin{theorem}
Given a convex and centrally symmetric set $K\in \mathbb{R}^n$ with $\mathbf{0}\in K$, and $\mathbf{x}\in \mathbb{Z}^n$, there is a deterministic algorithm that either finds a vector $\mathbf{c}$ such that
\[
\mathbf{c} \in \bigl(\mathbb{Z}^n \cap K\bigr)\setminus \{\mathbf{0}\}
\quad\text{and}\quad
\mathbf{c}\cdot \mathbf{x} = 0,
\]
or reports that no such solution exists. The running time is
\[
\tilde{O}(2^{c_K n}),
\]
where the constant $c_K$ is independent of the size of $K$, and depends only on the shape of $K$.
\end{theorem}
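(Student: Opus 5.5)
The plan is to mimic the proof of \cref{thm:redtosvp}: embed the problem into a full-rank lattice in dimension $n+1$ and choose a norm on $\mathbb{R}^{n+1}$ whose unit ball is a product of $K$ with an interval. Let $R=\max_{\mathbf{y}\in K}\|\mathbf{y}\|_\infty$, which is finite since $K$ is compact. Define $K'=[-1,1]\times K\subseteq\mathbb{R}^{n+1}$. This set is convex, compact, centrally symmetric and full-dimensional, so $\|\cdot\|_{K'}$ is a norm. It satisfies $\|(v_0,\mathbf{v})\|_{K'}=\max(|v_0|,\|\mathbf{v}\|_K)$.

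Next, build the lattice $\mathcal{L}_{\alpha,q}$ with basis $\mathbf{B}_{\alpha,q}$, taking $\alpha=2$ and $q=\lfloor R\rfloor\sum_i|x_i|+1$. I will prove the analogue of \cref{lem:connsolandshortvec}: $\mathcal{L}_{\alpha,q}\cap K'$ equals the set of vectors $(0,\mathbf{c})$ with $\mathbf{c}\in\mathbb{Z}^n\cap K$ and $\mathbf{c}\cdot\mathbf{x}=0$.

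The argument runs as follows. Take a lattice point $(v_0,\mathbf{z})$ in $K'$.
\begin{itemize}
\item Since $|v_0|\le 1<\alpha$ and $v_0$ is a multiple of $\alpha$, we get $v_0=0$.
\item Since $\mathbf{z}\in K$, each $|z_i|\le R$, and as $z_i$ is an integer, $|z_i|\le\lfloor R\rfloor$.
\item Hence $|\sum_i z_ix_i|<q$. Together with $qz_0+\sum_i z_ix_i=0$, this forces $z_0=0$.
\end{itemize}
The converse inclusion is immediate. It follows that a solution exists if and only if $\lambda_1^{K'}(\mathcal{L}_{\alpha,q})\le 1$. So one call to the SVP solver of \cref{thm:svpanynorm} under $\|\cdot\|_{K'}$ either returns a shortest vector of norm at most $1$, whose last $n$ coordinates give $\mathbf{c}$, or certifies that no solution exists.

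The main obstacle is the running time claim: that the exponent depends only on the shape of $K$, and is $\tilde{O}(2^{c_Kn})$ rather than merely $2^{O(n)}$ with a hidden constant that depends on $K$. \Cref{thm:svpanynorm} gives $2^{O(n)}$ for any well-centered body. For the explicit constant, I would redo the analysis of \cref{thm:svp-infty} with $K'$ in place of $B_\infty^{n+1}$:
\begin{itemize}
\item Sandwich $K'$ between its John (or M-)ellipsoid $E$ and a dilate $\rho_K E$. The ratio $\rho_K$ is invariant under scaling $K$, and $\rho_K\le\sqrt{n+1}$.
\item Enumerate $\mathcal{L}\cap(t\rho_K E)$ via \cref{lem:Ellipsoid-Enum}, increasing $t$ in $(1+1/n)$-steps.
\item Bound the output size by $N(\rho_KE,K')\cdot G(tK',\mathcal{L})\le N(\rho_KE,K')\cdot 3^{n+1}\cdot\mathrm{poly}$.
\end{itemize}
The covering number $N(\rho_KE,K')$ is bounded by $\mathrm{vol}(\rho_KE+K')/\mathrm{vol}(K')$ up to polynomial factors, which is invariant under dilating $K$. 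This yields $c_K=\log_2\bigl(6\cdot N(\rho_KE,K')^{1/n}\bigr)$, a quantity depending only on the shape of $K$.

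Two routine points remain. The starting radius can be obtained from an exact $\ell_2$-SVP call in the $E$-norm, which bounds the number of iterations polynomially. The bit sizes involved, including $q$, are polynomial given the parameter assumptions.
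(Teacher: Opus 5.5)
Your reduction is correct, but it uses a different embedding from the paper's.

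\textbf{The two embeddings.} The paper stays in dimension $n$ and keeps $K$ itself. It sets $\mathcal{L}=\mathcal{L}^{\perp}_{\mathbf{x}}+\mathbb{Z}(q\mathbf{e}_{i_0})$ with $x_{i_0}\neq 0$ and $q=d\sum_i|x_i|+1$. It then shows $\mathcal{L}\cap K=\mathcal{L}^{\perp}_{\mathbf{x}}\cap K$, since any $\mathbf{v}=bq\mathbf{e}_{i_0}+\mathbf{c}$ with $b\neq 0$ has $|\langle\mathbf{v},\mathbf{x}\rangle|=|bqx_{i_0}|\ge q>d\sum_i|x_i|$. Finally it calls \cref{thm:svpanynorm} in the norm $\|\cdot\|_K$.

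You instead reuse the explicit basis $\mathbf{B}_{\alpha,q}$ in dimension $n+1$ and replace $K$ by $K'=[-1,1]\times K$. Constraining the extra coordinate to $[-1,1]$ independently of $K$ is what lets you take $\alpha=2$, and the argument of \cref{lem:connsolandshortvec} then goes through verbatim. Your route gives a basis with no need to compute one for $\mathcal{L}^{\perp}_{\mathbf{x}}$ first. The paper's route avoids the extra dimension and the modified body. Your claim that shape quantities of $K'$ do not depend on the scale of $K$ is also right, because $[-1,1]\times sK=\mathrm{diag}(1,s,\dots,s)\,K'$ and John ellipsoids and covering numbers are linearly equivariant.

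\textbf{The running time.} The obstacle you identify is not real. The $2^{O(n)}$ in \cref{thm:svpanynorm} hides an absolute constant, uniform over all well-centered convex bodies. So a single call in dimension $n+1$ already gives $2^{O(n)}$ time with a constant that trivially depends only on the shape of $K$; this is exactly how the paper concludes.

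Your replacement analysis, by contrast, does not give single-exponential time in general. A single circumscribed ellipsoid $\rho_K E\supseteq K'$ works for boxes, as in \cref{thm:svp-infty}, only because the cube's near-maximal volume ratio to its John ellipsoid cancels the $\rho_K^{\,n}$ factor. Take $K=B_1^n$. By symmetry the John ellipsoid of $K'$ is
\[
E=\{(v_0,\mathbf{v}): v_0^2+n\|\mathbf{v}\|_2^2\le 1\},
\]
and $\rho_K=\sqrt{n+1}$. Hence
\[
N(\rho_KE,K')\ \ge\ \mathrm{vol}(\rho_KE)/\mathrm{vol}(K')=\Theta\bigl((\pi n/2e)^{n/2}\bigr).
\]
So your $c_K=\log_2(6N^{1/n})$ grows like $\tfrac12\log_2 n$, and the bound is only $2^{\Theta(n\log n)}$.

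Switching to an M-ellipsoid does not rescue the single-ellipsoid scheme: for the cross-polytope the required dilation is again of order $\sqrt{n}$. What works is covering $K'$ by $2^{O(n)}$ translates of an M-ellipsoid and enumerating each one, which is precisely the algorithm behind \cref{thm:svpanynorm}. Drop the re-derivation and rely on the citation.
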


\begin{proof}
Recall that for $\mathbf{x}=(x_1,\ldots,x_n)$, the set of integer vectors 
$\mathbf{c}=(c_1,\ldots,c_n)$ satisfying
\[
\sum_{i=1}^n c_i x_i = 0
\]
forms a rank $n-1$ sublattice $\mathcal{L}^{\perp}_{\mathbf{x}}$. 

However, Theorem~\ref{thm:svpanynorm} requires a full rank lattice that we could not directly apply it on $\mathcal{L}^{\perp}_{\mathbf{x}}$. We use a small trick to address this. Take the minimal $d\in\mathbb{Z}$ such that $K\subseteq [-d,d]^n$.

Set $q=d\sum_{i=1}^n |x_i|+1$, and take an arbitrary $i_0$ such that $x_{i_0}\ne 0$. Then consider the full rank lattice
\[
\mathcal{L}=\mathcal{L}^{\perp}_{\mathbf{x}}+\mathbb{Z} (q\mathbf{e}_{i_0}),
\]
where $\mathbf{e}_i$ is $i_0$-th unit vector.

We claim that $\mathcal{L}\cap K=\mathcal{L}^{\perp}_{\mathbf{x}}\cap K$. Otherwise, there exists $\mathbf{v}_0\in (\mathcal{L}\setminus\mathcal{L}^{\perp}_{\mathbf{x}})\cap K$. We could also write 
\[
\mathbf{v}_0=bq\mathbf{e}_i+\mathbf{c}, 
\]
with $b\neq 0,c\in\mathcal{L}^{\perp}_{\mathbf{x}}$. Then it follows
\[
\langle \mathbf{v}_0, \mathbf{x} \rangle=\langle bq\mathbf{e}_i, \mathbf{x} \rangle=bqx_i.
\]
However, since $\mathbf{v}_0\in K \subseteq [-d,d]^n$, 
\[
|\langle \mathbf{v}_0, \mathbf{x} \rangle|\le d\sum_{i=1}^n|x_i|<q,
\]
contradicting with $bx_i\ne 0$.

Our algorithm simply calls an SVP oracle (Theorem~\ref{thm:svpanynorm}) of $\mathcal{L}$ under the norm $\|{\cdot}\|_K$. Suppose the SVP oracle returns $\mathbf{v}$, then checks whether $\|\mathbf{v}\|\in K$, if yes, returns $\mathbf{v}$; otherwise, reports that no solution exists.

\paragraph{Correctness and Soundness.}
If there exists a solution $\mathbf{c}$, then $\|\mathbf{v}\|_{K}\le \|\mathbf{c}\|_{K}\le 1$, which means $\mathbf{v}\in K$. As $\mathcal{L}\cap K=\mathcal{L}^{\perp}_{\mathbf{x}}\cap K$, we conclude $\mathbf{v}\in \mathcal{L}^{\perp}_{\mathbf{x}}$ as well, thus yields a valid solution.

On the other case, there is no solution, so the SVP oracle must return a vector out of $K$, and the algorithm reports no solution exists.

The time is dominated by the running time of SVP oracle of norm $\|\cdot\|_{K}$, which is
\[
\tilde{O}(2^{c_K n}),
\]
for some constant $c_K$.
\end{proof}

\section{Worst Case Algorithms for Generalized Subset Sum}
\label{sec:worstcasegss}

At SODA 2022, Chen, Jin, Randolph, and Servedio~\cite{chen2022average} proposed a generalized version of the Subset Sum problem. Given an input bound $M$, a vector $\mathbf{x} = (x_1, \dots, x_n) \in [0, M-1]^n$, a set $C \subset \mathbb{Z}$ of allowed coefficients, and a target integer $\tau$, the goal is to find a coefficient vector $\mathbf{c} \in C^n$ such that $\mathbf{c} \cdot \mathbf{x} = \tau$, if such a vector exists.

In \cref{subsecgss:-d:d}, we first show that the worst case of Generalized Subset Sum with $C=[-d:d]$ could reduce to $\text{CVP}_{\infty}$. However, as far as we know, there is no single exponential time algorithm for $\text{CVP}_{\infty}$. 

We observe that the distance from the target to any lattice point can take only integer values, which lets us replace an exact $\mathrm{CVP}_\infty$ oracle by an $\left(1+\frac{1}{d}-\varepsilon\right)$ approximate one for the $C=[-d:d]$ case. Then we apply the oracle proposed by Eisenbrand, H\"ahnle, and Niemeier~\cite{eisenbrand2011covering} in time $2^{O(n)} \cdot (\log(1/\varepsilon))^{O(n)}$, yielding the overall complexity $2^{O(n \log\log d)}$.

\subsection{Worst Case Generalized Subset Sum with $C=[-d:d]$}
\label{subsecgss:-d:d}
We construct the same structure lattice $\mathcal{L}$ as previous section, more precisely, we denote the lattice $\mathcal{L}_{\alpha,q}$ with basis $\mathbf{B}_{\alpha,q}$ as  
\[
\mathbf{B}_{\alpha,q}=
\begin{pmatrix}
\alpha q \\
\alpha x_1 & 1 &        &        \\
\alpha x_2 &   & 1      &        \\
\vdots     &   &        & \ddots \\
\alpha x_n &   &        &        & 1
\end{pmatrix}
\in \mathbb{Z}^{(n+1)\times(n+1)}.
\]
Define
\[
\mathrm{dist}_{\infty}(\mathbf{t}, \mathcal{L})
=\min_{\mathbf{v}\in \mathcal{L}}\|\mathbf{v}-\mathbf{t}\|_{\infty}.
\]

\begin{lemma}
\label{lem:connsolclosevec}
Set the target vector $\mathbf{t}=(\alpha\tau,0,\dots,0)$, for any integer $\alpha>d,q>d\sum_{i=1}^n|x_i|+|\tau|$,
\[
\mathcal{L}_{\alpha,q}\cap (\mathbf{t}+d~\mathbf{B}_{\infty}^{n+1})=\left\{(\alpha\tau,c_1,\dots,c_n)\Bigg|\sum_{i=1}^nc_ix_i=\tau,|c_i|\le d\right\}.
\]
\end{lemma}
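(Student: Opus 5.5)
The plan is to follow the proof of Lemma~\ref{lem:connsolandshortvec} almost verbatim, with the target shift $\mathbf{t}$ absorbed into the first coordinate. As there, every $\mathbf{v}\in\mathcal{L}_{\alpha,q}$ is written uniquely as
\[
\mathbf{v}=\Bigl(\alpha\bigl(qz_0+\textstyle\sum_{i=1}^n z_i x_i\bigr),\,z_1,\dots,z_n\Bigr),\qquad z_0,\dots,z_n\in\mathbb{Z}.
\]
Consequently
\[
\mathbf{v}-\mathbf{t}=\Bigl(\alpha\bigl(qz_0+\textstyle\sum_i z_i x_i-\tau\bigr),\,z_1,\dots,z_n\Bigr).
\]
I will prove the two inclusions separately.

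For ($\subseteq$), take $\mathbf{v}\in\mathcal{L}_{\alpha,q}$ with $\|\mathbf{v}-\mathbf{t}\|_\infty\le d$.
\begin{enumerate}
\item The last $n$ coordinates give $|z_i|\le d$ for every $i$.
\item The first coordinate of $\mathbf{v}-\mathbf{t}$ is an integer multiple of $\alpha$ of absolute value at most $d<\alpha$, so it is zero. Hence $qz_0+\sum_i z_ix_i=\tau$, and the first coordinate of $\mathbf{v}$ equals $\alpha\tau$.
\item It remains to show $z_0=0$. If $z_0\neq 0$, then $\bigl|\sum_i z_ix_i-\tau\bigr|=q|z_0|\ge q$. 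On the other hand, the triangle inequality and $|z_i|\le d$ give $\bigl|\sum_i z_ix_i-\tau\bigr|\le d\sum_i|x_i|+|\tau|<q$, a contradiction.
\end{enumerate}
Therefore $\mathbf{v}=(\alpha\tau,z_1,\dots,z_n)$ with $\sum_i z_ix_i=\tau$ and $|z_i|\le d$, as required.

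For ($\supseteq$), take integers $c_1,\dots,c_n$ with $|c_i|\le d$ and $\sum_i c_ix_i=\tau$. Setting $z_0=0$ and $z_i=c_i$ in the parametrization produces exactly the lattice vector $(\alpha\tau,c_1,\dots,c_n)$. Its difference from $\mathbf{t}$ is $(0,c_1,\dots,c_n)$, which has $\ell_\infty$ norm at most $d$, so the vector lies in $\mathbf{t}+d\,\mathbf{B}_\infty^{n+1}$.

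No step is a genuine obstacle. The only point needing care is step 3 of the first inclusion, the exclusion of $z_0\neq 0$. This is precisely why the hypothesis on $q$ is strengthened by the additive $|\tau|$ term relative to Lemma~\ref{lem:connsolandshortvec}: it is needed to bound $\bigl|\sum_i z_ix_i-\tau\bigr|$ strictly below $q$.
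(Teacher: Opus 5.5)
Your proposal is correct and follows the paper's proof step for step. Like the paper, you use the same parametrization, invoke $\alpha>d$ to force $qz_0+\sum_i z_ix_i=\tau$, use the strengthened bound on $q$ to exclude $z_0\neq 0$, and set $z_0=0$ for the reverse inclusion.
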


\begin{proof}
This proof is similar as the proof of lemma~\ref{lem:connsolandshortvec}. Every vector $\mathbf{v}\in \mathcal{L}_{\alpha,q}$ can be written as
\[
\mathbf{v}
= \bigl(\alpha(qz_0+\sum_{i=1}^n z_i x_i),\, z_1,\dots,z_n\bigr)
\]
for some integers $z_0,z_1,\dots,z_n\in\mathbb{Z}$.

\medskip
\noindent\emph{($\subseteq$).}
Let $\mathbf{v}\in \mathcal{L}_{\alpha,q}\cap (\mathbf{t}+d\,\mathbf{B}_\infty^{n+1})$.
Then $\|\mathbf{v}-\mathbf{t}\|_\infty\le d$, i.e.,
\[
|v_0-\alpha\tau|\le d
\quad\text{and}\quad
|v_i|\le d \ \text{for all } i=1,\dots,n.
\]
From the lattice parametrization we have $v_i=z_i$ for $i=1,\dots,n$, hence
$|z_i|\le d$ for all $i=1,\dots,n$, and
\[
\bigl|v_0-\alpha\tau\bigr|
= \alpha\Bigl|\Bigl(qz_0+\sum_{i=1}^n z_i x_i\Bigr)-\tau\Bigr|
\le d.
\]
Since $\alpha>d$, the inequality above implies
\[
\Bigl|\Bigl(qz_0+\sum_{i=1}^n z_i x_i\Bigr)-\tau\Bigr|=0 \Longrightarrow qz_0+\sum_{i=1}^n z_i x_i=\tau.
\]
If $z_0\neq 0$, then
\[
\Bigl|\tau-\sum_{i=1}^n z_i x_i\Bigr|=|qz_0|\ge q.
\]
On the other hand, using $|z_i|\le d$ we have
\[
\Bigl|\tau-\sum_{i=1}^n z_i x_i\Bigr|
\le |\tau|+\sum_{i=1}^n |z_i||x_i|
\le |\tau|+d\sum_{i=1}^n |x_i|.
\]
This contradicts the assumption \(q>d\sum_{i=1}^n|x_i|+|\tau|\). Hence $z_0=0$, and thus $\sum_{i=1}^n z_i x_i=\tau$. So $\mathbf{v}=(\alpha\tau,z_1,\dots,z_n)$ with $|z_i|\le d$.

\medskip
\noindent\emph{($\supseteq$).}
Conversely, let $(\alpha\tau,c_1,\dots,c_n)$ be such that $\sum_{i=1}^n c_i x_i=\tau$
and $|c_i|\le d$.
Set $z_0=0$ and $z_i=c_i$ for $i=1,\dots,n$.
Then
\[
(\alpha\tau,c_1,\dots,c_n)
=\bigl(\alpha(\sum_{i=1}^n c_i x_i),\,c_1,\dots,c_n\bigr)\in \mathcal{L}_{\alpha,q}.
\]
Moreover,
\(\|(\alpha\tau,c_1,\dots,c_n)-(\alpha\tau,0,\dots,0)\|_\infty\le d\),
so the vector lies in \(\mathbf{t}+d\,\mathbf{B}_\infty^{n+1}\).

\medskip
Combining the two directions yields the claim.
\end{proof}

It is easy to show that the worst case of Generalized Subset Sum with $C=[-d:d]$ could reduce to $\text{CVP}_{\infty}$. However, as far as we know, there is no single exponential time algorithm for $\text{CVP}_{\infty}$. 

To overcome this for the worst case GSS, our key observation is that the distances are discrete integers. We can exploit this gap: if a solution exists within distance $d$, then we conclude $\mathrm{dist}_{\infty}(\mathbf{t}, \mathcal{L})\le d$; otherwise no solution exists, then $\mathrm{dist}_{\infty}(\mathbf{t}, \mathcal{L})\ge d+1$. This means we only need an approximate $\text{CVP}_{\infty}$ oracle with approximation factor $\left(1+\frac{1}{d}-\varepsilon\right)$ for arbitrarily small $\varepsilon>0$.

\begin{theorem}
\label{thm:redtoapproxcvp}
Given any $d,n>0$, Generalized Subset Sum on $C=[-d:d]$ can be solved by a single call of a $(1+1/d-\varepsilon)$-approximate $\text{CVP}_{\infty}$ oracle in dimension $n+1$ for any $0<\varepsilon<1/d$.
\end{theorem}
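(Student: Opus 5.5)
The plan mirrors the proof of Theorem~\ref{thm:redtosvp}, with Lemma~\ref{lem:connsolclosevec} in place of Lemma~\ref{lem:connsolandshortvec}.

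\textbf{Construction and the single oracle call.} Choose integers $\alpha=d+1$ and $q=d\sum_{i=1}^n|x_i|+|\tau|+1$. Then $\alpha>d$ and $q>d\sum_i|x_i|+|\tau|$, so Lemma~\ref{lem:connsolclosevec} applies. Build $\mathcal{L}_{\alpha,q}$ with basis $\mathbf{B}_{\alpha,q}$, set $\mathbf{t}=(\alpha\tau,0,\dots,0)$, and call the $\gamma$-approximate $\mathrm{CVP}_\infty$ oracle once, with $\gamma=1+1/d-\varepsilon$. Let $\mathbf{v}=(v_0,v_1,\dots,v_n)$ be the returned vector. If $\|\mathbf{v}-\mathbf{t}\|_\infty\le d$, output $(c_1,\dots,c_n)=(v_1,\dots,v_n)$; otherwise report that no solution exists.

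\textbf{Integrality of distances.} All lattice vectors are integral and $\mathbf{t}\in\mathbb{Z}^{n+1}$ because $\alpha\tau\in\mathbb{Z}$. Hence $\|\mathbf{w}-\mathbf{t}\|_\infty\in\mathbb{Z}_{\ge 0}$ for every $\mathbf{w}\in\mathcal{L}_{\alpha,q}$. In particular $\mathrm{dist}_\infty(\mathbf{t},\mathcal{L}_{\alpha,q})$ is a nonnegative integer. It is attained because the lattice is discrete.

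\textbf{Correctness.} Suppose a solution $\mathbf{c}\in[-d:d]^n$ with $\mathbf{c}\cdot\mathbf{x}=\tau$ exists. By Lemma~\ref{lem:connsolclosevec}, $(\alpha\tau,c_1,\dots,c_n)$ lies in $\mathcal{L}_{\alpha,q}$ within distance $d$ of $\mathbf{t}$, so $D:=\mathrm{dist}_\infty(\mathbf{t},\mathcal{L}_{\alpha,q})\le d$. We need the oracle output to satisfy $\|\mathbf{v}-\mathbf{t}\|_\infty\le d$. There are two cases.
\begin{itemize}
\item If $D=0$, the oracle returns a vector at distance at most $\gamma\cdot 0=0$, which is fine.
\item If $1\le D\le d$, the oracle returns a vector at distance at most $\gamma D=D+D/d-\varepsilon D\le D+1-\varepsilon D<D+1$. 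Integrality then gives $\|\mathbf{v}-\mathbf{t}\|_\infty\le D\le d$.
\end{itemize}
In both cases $\mathbf{v}\in\mathcal{L}_{\alpha,q}\cap(\mathbf{t}+d\,\mathbf{B}_\infty^{n+1})$. The $\subseteq$ direction of Lemma~\ref{lem:connsolclosevec} then gives $v_0=\alpha\tau$, $|v_i|\le d$, and $\sum_i v_ix_i=\tau$, so the output is a valid solution.

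\textbf{Soundness.} If no solution exists, Lemma~\ref{lem:connsolclosevec} says the intersection $\mathcal{L}_{\alpha,q}\cap(\mathbf{t}+d\,\mathbf{B}_\infty^{n+1})$ is empty. So every lattice vector, including the returned $\mathbf{v}$, has distance at least $d+1$ from $\mathbf{t}$, and the algorithm correctly reports no solution. Conversely, whenever the check $\|\mathbf{v}-\mathbf{t}\|_\infty\le d$ passes, the lemma certifies the output, so no false positive is possible.

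\textbf{Main obstacle.} The only delicate point is the gap argument. It requires $\gamma D<D+1$ for every integer $D$ in $[1,d]$, which holds exactly because $\gamma<1+1/D$ for all such $D$, the worst case being $D=d$. The case $D=0$ has to be handled separately, as above. The rest is bookkeeping: the reduction is one oracle call in dimension $n+1$, and the basis has polynomial bit length under the stated parameter assumptions.
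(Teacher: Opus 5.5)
Your proposal is correct and follows the paper's proof essentially verbatim: the same parameters $\alpha=d+1$, $q=d\sum_i|x_i|+|\tau|+1$, the same target $\mathbf{t}=(\alpha\tau,0,\dots,0)$ and acceptance test, and the same integrality-gap argument through Lemma~\ref{lem:connsolclosevec}. The only difference is cosmetic: your case split on $D$ is unnecessary, since the paper bounds directly $(1+\tfrac1d-\varepsilon)\,\mathrm{dist}_\infty(\mathbf{t},\mathcal{L}_{\alpha,q})\le d+1-\varepsilon d<d+1$, which covers $D=0$ as well.
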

\begin{proof}
Choose parameters $\alpha = d+1,q = d\sum_{i=1}^n |x_i| + |\tau| + 1$, so that \(\alpha>d\) and \(q>d\sum_{i=1}^n |x_i|+|\tau|\).
Construct the lattice \(\mathcal{L}_{\alpha,q}\) with basis \(\mathbf{B}_{\alpha,q}\), and set the target
\[
\mathbf{t}= (\alpha\tau,0,\dots,0)\in\mathbb{Z}^{n+1}.
\]
Make a single call to a $(1+1/d-\varepsilon)$-approximate \(\mathrm{CVP}_\infty\) oracle on input \((\mathcal{L}_{\alpha,q},\mathbf{t})\),
and let \(\mathbf{v}=(v_0,v_1,\dots,v_n)\in\mathcal{L}_{\alpha,q}\) be the lattice vector returned by the oracle.

The reduction outputs \((c_1,\dots,c_n)=(v_1,\dots,v_n)\) if $\|\mathbf{v}-\mathbf{t}\|_\infty \le d$, otherwise it reports no solution exists.

\paragraph{Correctness and Soundness.}

Assume there exists \(\mathbf{c}=(c_1,\dots,c_n)\in C^n\) such that \(\sum_{i=1}^n c_i x_i=\tau\).
Then Lemma~\ref{lem:connsolclosevec} gives
\[
(\alpha\tau,c_1,\dots,c_n)\in \mathcal{L}_{\alpha,q}\cap(\mathbf{t}+d\,\mathbf{B}_\infty^{n+1}),
\]
so \(\mathrm{dist}_\infty(\mathbf{t},\mathcal{L}_{\alpha,q})\le d\).
Therefore the $1+1/d-\varepsilon$-approximate \(\mathrm{CVP}_\infty\) oracle must return some \(\mathbf{v}\in\mathcal{L}_{\alpha,q}\) with
\[
\|\mathbf{v}-\mathbf{t}\|_\infty\le\left(1+\frac{1}{d}-\varepsilon\right)\mathrm{dist}_\infty(\mathbf{t},\mathcal{L}_{\alpha,q})< d+1
\Longrightarrow \|\mathbf{v}-\mathbf{t}\|_\infty\le d,
\]
and the algorithm will accept.
Moreover, by Lemma~\ref{lem:connsolclosevec}, this accepted \(\mathbf{v}\) corresponds to coefficients
\((v_1,\dots,v_n)\) satisfying \(\sum_i v_i x_i=\tau\) and \(|v_i|\le d\), so the output is valid.

If no such \(\mathbf{c}\) exists, then Lemma~\ref{lem:connsolclosevec} implies
\(\mathcal{L}_{\alpha,q}\cap(\mathbf{t}+d\,\mathbf{B}_\infty^{n+1})=\emptyset\), i.e.
\(\mathrm{dist}_\infty(\mathbf{t},\mathcal{L}_{\alpha,q})\ge d+1\).
Thus even the closest vector \(\mathbf{v}\) must satisfy \(\|\mathbf{v}-\mathbf{t}\|_\infty>d\), and the algorithm reports no solution exists.

This uses a $(1+1/d-\varepsilon)$-approximate \(\mathrm{CVP}_\infty\) oracle in dimension \(n+1\).
\end{proof}

Eisenbrand, H\"ahnle, and Niemeier~\cite{eisenbrand2011covering} gave a randomized algorithm for $(1+\varepsilon)$-approximate CVP$_\infty$ that runs in time 
$$
2^{O(n)} \cdot (\log(1/\varepsilon))^{O(n)},
$$
where $n$ is the lattice dimension. Their algorithm works by 
constructing a $(2, \varepsilon)$-covering of the $\ell_\infty$ 
ball using $O((1 + \log(1/\varepsilon))^n)$ parallelepipeds, 
and invoking a $2$-approximate CVP$_\infty$ oracle for each.

\smallskip\noindent\textbf{Derandomization.}
The covering construction in~\cite{eisenbrand2011covering} is deterministic 
based on a coordinate-wise binary partition of the cube. Their original algorithm invokes a randomized $2$-approximate CVP$_\infty$ oracle by~\cite{blomer2009sampling} as a subroutine. We instantiate this oracle with the deterministic $2^{O(n)}$-time algorithm of Dadush and Kun~\cite{dadush2013lattice} by setting $\varepsilon=1$ in their algorithm. 
This yields a fully deterministic algorithm.

\begin{proposition}[Deterministic Approximate CVP~\cite{eisenbrand2011covering}]
\label{pro:approxcvp}
There is a deterministic algorithm for $(1+\varepsilon)$-approximate CVP in $\ell_{\infty}$ norm in time
\[
2^{O(n)} (\log(1/\varepsilon))^{O(n)}.
\]
\end{proposition}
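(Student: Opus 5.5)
The plan is to follow the covering-based reduction of Eisenbrand, H\"ahnle, and Niemeier~\cite{eisenbrand2011covering} and make every randomized step deterministic. Given a full-rank lattice $L\subseteq\mathbb{R}^n$ and a target $\mathbf{t}$, first compute a rough estimate of $\mathrm{dist}_\infty(\mathbf{t},L)$. We run a deterministic $2$-approximate $\mathrm{CVP}_\infty$ algorithm, namely Dadush--Kun~\cite{dadush2013lattice} with their parameter set to $1$, in time $2^{O(n)}$. It returns $\mathbf{w}$ with $r:=\|\mathbf{w}-\mathbf{t}\|_\infty\le 2\,\mathrm{dist}_\infty(\mathbf{t},L)$. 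Hence the true distance lies in $[r/2,r]$. If $r=0$ we are done. Otherwise, by scaling we may assume $r=1$, so the closest vector lies in $\mathbf{t}+B_\infty^n$ and its distance is at least $1/2$.

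The second step builds a deterministic $(2,\varepsilon)$-covering of the cube. Partition $[-1,1]$ coordinatewise into dyadic intervals that refine geometrically toward the endpoints: intervals of length about $2^{-k}$ at distance about $2^{-k}$ from $\pm1$, for $k=1,\dots,\lceil\log(1/\varepsilon)\rceil$. Taking products gives $O(1+\log(1/\varepsilon))^n$ boxes $P_j=\mathbf{c}_j+Q_j$, with $Q_j$ axis-parallel and centered at the origin. For each box we use the norm $\|\cdot\|_{Q_j}$ on the translated target $\mathbf{t}+\mathbf{c}_j$. The defining property to check is the following. If $\mathbf{y}\in L$ satisfies $\mathbf{y}-\mathbf{t}\in P_j$, then every $\mathbf{y}'$ with $\|\mathbf{y}'-\mathbf{t}-\mathbf{c}_j\|_{Q_j}\le 2$ satisfies $\|\mathbf{y}'-\mathbf{t}\|_\infty\le 1+\varepsilon$, so doubling a box only overshoots the cube by $\varepsilon$. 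This is exactly why the boxes must shrink near the boundary. For each $j$ we call the deterministic $2$-approximate CVP in the norm $\|\cdot\|_{Q_j}$, again via Dadush--Kun, which handles arbitrary symmetric convex norms. Among all returned points we output the one with the smallest $\ell_\infty$ distance to $\mathbf{t}$.

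For correctness, let $\mathbf{y}^\ast$ be an exact closest vector. Then $\mathbf{y}^\ast-\mathbf{t}$ lies in some box $P_j$, so the $\|\cdot\|_{Q_j}$-distance from $\mathbf{t}+\mathbf{c}_j$ to $L$ is at most $1$. The oracle then returns a point within $Q_j$-distance $2$, and by the covering property that point is within $\ell_\infty$-distance $1+\varepsilon\le (1+2\varepsilon)\,\mathrm{dist}_\infty(\mathbf{t},L)$, using $\mathrm{dist}_\infty\ge 1/2$. Rescaling $\varepsilon$ gives the claim. The running time is $(1+\log(1/\varepsilon))^{O(n)}\cdot 2^{O(n)}$, plus polynomial overhead for the rational arithmetic in the scaled boxes.

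The main obstacle is making the covering property precise while keeping the count at $(\log(1/\varepsilon))^{O(n)}$. A uniform grid would need $(1/\varepsilon)^n$ boxes. With the geometric one-dimensional partition, each interval $I$ must satisfy $\mathrm{center}(I)+2\cdot\mathrm{halfwidth}(I)\cdot[-1,1]\subseteq[-1-\varepsilon,1+\varepsilon]$. This holds when the distance from $I$ to the endpoint is at least its own length, or at most $\varepsilon$ near the endpoint. I would verify this one-dimensional statement directly and take products. A secondary check is that Dadush--Kun's deterministic approximate CVP applies to these axis-parallel box norms with the stated $2^{O(n)}$ bound. Those norms are well-centered symmetric convex bodies, so it should.
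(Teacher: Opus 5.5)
There is a genuine gap in the final approximation bound. After rescaling you only know $\tfrac12\le\mathrm{dist}_\infty(\mathbf{t},L)\le 1$, and your covering guarantees a returned point within $\ell_\infty$-distance $1+\varepsilon$. The inequality $1+\varepsilon\le(1+2\varepsilon)\,\mathrm{dist}_\infty(\mathbf{t},L)$ does not follow from $\mathrm{dist}_\infty\ge\tfrac12$. At $\mathrm{dist}_\infty=\tfrac12$ the right-hand side is only $\tfrac12+\varepsilon$. What you actually get is $1+\varepsilon\le 2(1+\varepsilon)\,\mathrm{dist}_\infty$, which is no better than the initial Dadush--Kun call.

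This is not just slack in the analysis. Suppose the closest vector is at distance about $\tfrac12$. Then $\mathbf{y}^\ast-\mathbf{t}$ can lie in a coarse box such as $[0,\tfrac12]^n$, whose double $[-\tfrac14,\tfrac34]^n$ reaches $\ell_\infty$-norm $\tfrac34$. So the oracle guarantee you invoke only places the returned point within $\tfrac32\,\mathrm{dist}_\infty$. In general, covering the radius-$r$ cube certifies distance at most $(1+\varepsilon)r$. That is a $(1+O(\varepsilon))$-approximation only if $r$ is already within a $(1+O(\varepsilon))$ factor of the true distance, and here you know it only up to a factor $2$.

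At a fixed radius $D$, your routine is really a $(1+\varepsilon)$-gap test. If $\mathrm{dist}_\infty\le D$ it returns a point within $(1+\varepsilon)D$; if it returns none, then $\mathrm{dist}_\infty>D$. The missing ingredient is the gap-to-search step, which the paper imports as Theorem~3.4 of Eisenbrand--H\"ahnle--Niemeier. Concretely, binary search over the radii $D_k=(r/2)(1+\varepsilon)^k$ for $-1\le k\le K=\lceil\log_{1+\varepsilon}2\rceil$. Maintain the invariant that $\mathrm{dist}_\infty>D_{\mathrm{lo}}$ and that you hold a point within $(1+\varepsilon)D_{\mathrm{hi}}$. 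Start with $\mathrm{lo}=-1$, $\mathrm{hi}=K$, and the stored point $\mathbf{w}$. When $\mathrm{hi}=\mathrm{lo}+1$, the stored point is within $(1+\varepsilon)^2\,\mathrm{dist}_\infty$.

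This costs $O(\log K)=O(\log(1/\varepsilon))$ runs of your covering routine, so the bound $2^{O(n)}(\log(1/\varepsilon))^{O(n)}$ survives. Your initial $2$-approximate estimate even removes the $\log b$ term that appears in the paper's version of the search. Do not scan the $O(1/\varepsilon)$ radii linearly, though: $1/\varepsilon$ is not bounded by $(\log(1/\varepsilon))^{O(n)}$ in general.

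With this step added, the rest of your plan is sound and matches the paper's route. That includes the geometric one-dimensional partition (distance to the endpoint at least the half-width already suffices), the product covering with $O(1+\log(1/\varepsilon))^n$ boxes, and Dadush--Kun with parameter $1$ in each box norm, equivalently $\ell_\infty$ after a diagonal rescaling.
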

The proof can be found at appendix~\cref{sec:appendixapproxcvp}.

\begin{theorem}[Algorithm for Worst-Case GSS with \texorpdfstring{$C=[-d:d]$}{C=[-d:d]}]
\label{thm:worstcasegss-d:d}
Given any $d>2,n \in \mathbb{N}$ and let $C = [-d:d]$, there is a deterministic algorithm for Generalized Subset Sum in running time
\[
2^{O(n\log\log d)}.
\]    
\end{theorem}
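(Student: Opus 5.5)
The plan is to combine the reduction of Theorem~\ref{thm:redtoapproxcvp} with the deterministic approximate $\mathrm{CVP}_\infty$ solver of Proposition~\ref{pro:approxcvp}, and then choose the approximation parameter so that the cost is $2^{O(n\log\log d)}$. On input $(\mathbf{x},\tau)$ we set $\alpha=d+1$ and $q=d\sum_{i=1}^n|x_i|+|\tau|+1$. We build the basis $\mathbf{B}_{\alpha,q}$ of the $(n+1)$-dimensional lattice $\mathcal{L}_{\alpha,q}$ and the target $\mathbf{t}=(\alpha\tau,0,\dots,0)$. By the parameter assumptions, all entries have bit length polynomial in $n$ and the input size, so constructing the instance takes polynomial time.

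Next we fix the approximation factor. Theorem~\ref{thm:redtoapproxcvp} needs a factor $1+\varepsilon'$ with $\varepsilon'=1/d-\varepsilon<1/d$ for some $\varepsilon>0$. The concrete choice is $\varepsilon'=\frac{1}{2d}$, which corresponds to $\varepsilon=\frac{1}{2d}\in(0,1/d)$. The correctness argument is then exactly the one in Theorem~\ref{thm:redtoapproxcvp}. If a solution exists, then $\mathrm{dist}_\infty(\mathbf{t},\mathcal{L}_{\alpha,q})\le d$. The returned vector $\mathbf{v}$ then satisfies $\|\mathbf{v}-\mathbf{t}\|_\infty\le(1+\tfrac{1}{2d})d=d+\tfrac12<d+1$. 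Since $\mathbf{v}-\mathbf{t}$ is an integer vector, this forces $\|\mathbf{v}-\mathbf{t}\|_\infty\le d$. Lemma~\ref{lem:connsolclosevec} then shows that $(v_1,\dots,v_n)$ is a valid solution. If no solution exists, the same lemma gives $\mathrm{dist}_\infty\ge d+1$, so every returned vector fails the check and the algorithm correctly reports that there is no solution.

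For the running time, Proposition~\ref{pro:approxcvp} in dimension $n+1$ with $\varepsilon'=\frac1{2d}$ costs
\[
2^{O(n+1)}\,\bigl(\log(2d)\bigr)^{O(n+1)}=2^{O(n)}\cdot 2^{O(n\log\log d)}.
\]
Because $d>2$, we have $\log\log d$ bounded below by a positive constant; for $d\ge 3$ in base $2$, $\log\log 3\approx0.66$. So the $2^{O(n)}$ factor is absorbed, and the total is $2^{O(n\log\log d)}$. The polynomial overhead from building the lattice and running the final check is absorbed as well. This is exactly where the hypothesis $d>2$ is used.

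The main subtlety I expect is the bit-size dependence. Proposition~\ref{pro:approxcvp} presumably also carries a factor polynomial in the encoding length of the basis and target, which here involves $\log q$ and $\log(\alpha|\tau|)$. Under the standing assumption that all integers are at most $2^{n^{O(1)}}$, this factor is $\mathrm{poly}(n)$ and is absorbed into the exponential bound. I would state this explicitly rather than leave it implicit. Beyond that, the argument only chains earlier results, and the sole design choice is to take $\varepsilon'=\Theta(1/d)$ so that $\log(1/\varepsilon')=\Theta(\log d)$.
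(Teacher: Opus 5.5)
Your proposal is correct and follows the paper's own proof: it combines Theorem~\ref{thm:redtoapproxcvp} with Proposition~\ref{pro:approxcvp} at approximation factor $1+\frac{1}{2d}$, and uses $d>2$ (so $\log\log d$ is bounded below by a positive constant) to absorb the $2^{O(n)}$ factor into $2^{O(n\log\log d)}$. Your remarks on the integrality of $\mathbf{v}-\mathbf{t}$ and on the $\mathrm{poly}(b)$ bit-size factor fill in details that the paper's one-line proof leaves implicit.
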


\begin{proof}
Combining Theorem~\ref{thm:redtoapproxcvp} and Proposition~\ref{pro:approxcvp}, for $d>2,\log\log d>0$, set the approximate factor $1+1/(2d)$ completes the proof.
\end{proof}

\subsection{Worst Case Generalized Subset Sum with $C=[\pm d]$}
\label{subsecgss:pmd}
In the case when $C=[\pm d]$, it seems much harder for lattice algorithm. In previous seting $C=[-d:d]$, the solution has a one-to-one correspondence with the shortest (or closest) vector. However, in the situation that $C=[\pm d]$, the closest vector may corresponding to an invalid solution.

\paragraph{Warm-up: from \(C=[-d:d]\) to a general interval \(C=[a:b]\).}
Before proving the reduction formally, we note that essentially nothing changes when we replace the symmetric range \([-d:d]\) by an arbitrary integer interval \([a:b]\): we keep the same lattice \(\mathcal{L}_{\alpha,q}\) and only shift the target to recenter the allowed coefficient window.
Concretely, letting \(\mu=(a+b)/2\) and \(r=(b-a)/2\), we replace \(\mathbf{t}=(\alpha\tau,0,\ldots,0)\) by
\[
\mathbf{t}=(\alpha\tau,\mu,\ldots,\mu),
\]
so that the constraint \(c_i\in[a:b]\) is captured by the cube condition \(\|\,(c_1,\ldots,c_n)-(\mu,\ldots,\mu)\,\|_\infty\le r\).
With appropriate choices of \(\alpha,q\), this yields 
\[
\mathcal{L}_{\alpha,q}\cap (\mathbf{t}+r~\mathbf{B}_{\infty}^{n+1})=\left\{(\alpha\tau,c_1,\dots,c_n)\Bigg|\sum_{i=1}^nc_ix_i=\tau,a\le c_i\le b\right\},
\]
and therefore feasibility reduces to a single \(\mathrm{CVP}_\infty\) instance. Moreover, since the \(\ell_\infty\)-distance is discrete, we get a gap:
\[
\mathrm{dist}_\infty(\mathbf{t},\mathcal{L}_{\alpha,q})\le r \quad\text{(YES-instance)},\qquad
\mathrm{dist}_\infty(\mathbf{t},\mathcal{L}_{\alpha,q})\ge r+1 \quad\text{(NO-instance)}.
\]
Hence a \(\bigl(\tfrac{r+1}{r}-\varepsilon\bigr)=\bigl(\tfrac{b-a+2}{b-a}-\varepsilon\bigr)\)-approximate \(\mathrm{CVP}_\infty\) oracle suffices.

\paragraph{From one interval to \(C=[\pm d]\) via \(2^n\) shifted targets.}
When \(C=[\pm d]=[-d,-1]\cup[1,d]\), we view \(C^n\) as a union of boxes aligned with the \(2^n\) axis, one for each choice of coordinate sign.
For each sign pattern \(\mathbf{s}\in\{\pm1\}^n\), we apply the interval routine to the corresponding box by using a target whose last \(n\) coordinates are shifted to the appropriate midpoint:
\[
\mathbf{t}_{\mathbf{s}} = \bigl(\alpha\tau,\, s_1\cdot\tfrac{d+1}{2},\,\ldots,\, s_n\cdot\tfrac{d+1}{2}\bigr),
\qquad r=\tfrac{d-1}{2},
\]
while keeping \(\mathcal{L}_{\alpha,q}\) unchanged.
Thus worst-case GSS for \(C=[\pm d]\) is solved by enumerating \(\mathbf{s}\) and making \(2^n\) calls to an approximate \(\mathrm{CVP}_\infty\) oracle; the same distance-gap argument shows that approximation factor
\[
\frac{r+1}{r}-\varepsilon \;=\; \frac{d+1}{d-1}-\varepsilon
\]
is enough for each call. Plugging in the known deterministic \((1+\varepsilon)\)-approximate \(\mathrm{CVP}_\infty\) algorithm (Proposition~\ref{pro:approxcvp}) then gives overall time \(2^{O(n\log\log d)}\) (with the extra \(2^n\) factor absorbed into the exponent).

\begin{theorem}[Algorithm for Worst-Case GSS with \texorpdfstring{$C=[\pm d]$}{C=[\pm d]}]
\label{thm:worstcasegsspmd}
Given any $d>3,n \in \mathbb{N}$ and let $C = [\pm d]$, there is a deterministic algorithm for Generalized Subset Sum in running time
\[
2^{O(n\log\log d)}.
\]    
\end{theorem}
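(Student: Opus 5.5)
The plan is to follow the warm-up reduction: decompose $C^n=[\pm d]^n$ into $2^n$ boxes and solve each box with one approximate $\mathrm{CVP}_\infty$ call on the lattice $\mathcal{L}_{\alpha,q}$.

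One technical point must be fixed first. The midpoint $(d+1)/2$ may be a half-integer, so the distances are no longer integers. I would avoid this by scaling everything by $2$. Concretely, I use the lattice $2\mathcal{L}_{\alpha,q}$, with $\alpha=d+1$ and $q=d\sum_i|x_i|+|\tau|+1$ as in Theorem~\ref{thm:redtoapproxcvp}. For each sign pattern $\mathbf{s}\in\{\pm1\}^n$ I use the target
\[
\mathbf{t}_{\mathbf{s}}=\bigl(2\alpha\tau,\ s_1(d+1),\ldots,s_n(d+1)\bigr).
\]
Now every distance $\|\mathbf{v}-\mathbf{t}_{\mathbf{s}}\|_\infty$ is an integer, and the relevant radius is $2r=d-1$.

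The key lemma is an analogue of Lemma~\ref{lem:connsolclosevec}:
\[
2\mathcal{L}_{\alpha,q}\cap(\mathbf{t}_{\mathbf{s}}+(d-1)\mathbf{B}_\infty^{n+1})
\]
consists exactly of the vectors $2(\alpha\tau,c_1,\ldots,c_n)$ with $\sum_i c_ix_i=\tau$ and $s_ic_i\in[1:d]$. The proof mirrors the earlier one.

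\begin{itemize}
\item The coordinates $i\ge1$ give $|2z_i-s_i(d+1)|\le d-1$, i.e.\ $s_iz_i\in[1,d]$, so in particular $|z_i|\le d$.
\item The first coordinate gives $2\alpha\,|qz_0+\sum_i z_ix_i-\tau|\le d-1<2\alpha$, which forces $qz_0+\sum_i z_ix_i=\tau$.
\item The choice of $q$ together with $|z_i|\le d$ forces $z_0=0$, exactly as before.
\item The reverse inclusion is immediate.
\end{itemize}

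Integrality then gives the gap. In a YES box the distance is at most $d-1$; otherwise it is at least $d$. A $\bigl(\frac{d}{d-1}-\varepsilon\bigr)$-approximate oracle therefore distinguishes the two cases. For instance, use factor $1+\frac{1}{2(d-1)}$; since $d>3$ this is a well-defined factor greater than $1$. The algorithm accepts a returned $\mathbf{v}$ iff $\|\mathbf{v}-\mathbf{t}_{\mathbf{s}}\|_\infty\le d-1$ and then outputs $v_i/2$.

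Correctness and soundness follow as in Theorem~\ref{thm:redtoapproxcvp}:
\begin{itemize}
\item Any solution $\mathbf{c}\in[\pm d]^n$ lies in the box of its own sign pattern $\mathbf{s}=\mathrm{sign}(\mathbf{c})$, so that call accepts and returns a valid solution by the lemma.
\item If no solution exists, every box has distance at least $d$ and every call rejects.
\end{itemize}

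The running time is $2^n$ calls to Proposition~\ref{pro:approxcvp} in dimension $n+1$ with $\varepsilon=\Theta(1/d)$. This gives $2^n\cdot 2^{O(n)}(\log d)^{O(n)}=2^{O(n\log\log d)}$, using $\log\log d>0$ for $d>3$. Scaling the basis by $2$ only adds polynomial bit-length.

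The main obstacle is the half-integer center, which is resolved by the factor-$2$ scaling above. The only other care needed is to check that the first coordinate is still pinned down, which requires $2\alpha>d-1$; this holds trivially.
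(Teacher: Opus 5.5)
Your proposal is correct and follows essentially the same route as the paper. Both use the same lattice $\mathcal{L}_{\alpha,q}$ with $\alpha=d+1$ and $q=d\sum_i|x_i|+|\tau|+1$, one approximate $\mathrm{CVP}_\infty$ call per sign pattern $\mathbf{s}\in\{\pm1\}^n$ with the target centered at $s_i\frac{d+1}{2}$, a sign-pattern analogue of Lemma~\ref{lem:connsolclosevec}, a discreteness gap, and Proposition~\ref{pro:approxcvp} with $\varepsilon=\Theta(1/d)$. The only difference is cosmetic: you scale by $2$ so that all distances are integers, whereas the paper keeps the half-integer centers, first pins $v_0=\alpha\tau$ whenever the distance is below $r+1$, and then observes that the remaining distances lie in the same coset ($\mathbb{Z}$ or $\mathbb{Z}+\tfrac12$) as $r=\frac{d-1}{2}$, so your scaling simply avoids that parity case split.
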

\begin{proof}
Let \(r = (d-1)/2\). Choose the same parameters $\alpha = d+1,q = d\sum_{i=1}^n |x_i| + |\tau| + 1$, so that \(\alpha>d\) and \(q>d\sum_{i=1}^n |x_i|+|\tau|\). Construct the lattice \(\mathcal{L}_{\alpha,q}\) with basis \(\mathbf{B}_{\alpha,q}\) as in \cref{subsecgss:-d:d}.

\paragraph{Algorithm.}
For each sign pattern \(\mathbf{s}=(s_1,\ldots,s_n)\in\{\pm 1\}^n\), define the target
\[
\mathbf{t}_{\mathbf{s}}
= \bigl(\alpha\tau,\; s_1\cdot\tfrac{d+1}{2},\;\ldots,\; s_n\cdot\tfrac{d+1}{2}\bigr)\in \mathbb{R}^{n+1}.
\]
Invoke a \(\gamma\)-approximate \(\mathrm{CVP}_\infty\) oracle on \((\mathcal{L}_{\alpha,q},\mathbf{t}_{\mathbf{s}})\), where we will set
\[
\gamma = 1+\frac{1}{d-1} = 1+\frac{1}{2r}.
\]
Let \(\mathbf{v}=(v_0,\ldots,v_n)\in\mathcal{L}_{\alpha,q}\) be the returned lattice vector.
If \(\|\mathbf{v}-\mathbf{t}_{\mathbf{s}}\|_\infty \le r\), output \((c_1,\ldots,c_n):=(v_1,\ldots,v_n)\) and halt.
If no \(\mathbf{s}\) yields acceptance, reports that no solution exists.

\paragraph{A sign-pattern version of Lemma~\ref{lem:connsolclosevec}.}
Fix \(\mathbf{s}\in\{\pm 1\}^n\). The proof of Lemma~\ref{lem:connsolclosevec} extends verbatim after shifting the last \(n\) target coordinates from \(0\) to \(s_i\cdot\tfrac{d+1}{2}\):
one shows that
\[
\mathcal{L}_{\alpha,q}\cap\bigl(\mathbf{t}_{\mathbf{s}} + r\,\mathbf{B}_\infty^{n+1}\bigr)
=
\Bigl\{(\alpha\tau,c_1,\ldots,c_n)\;\Bigm|\;
\sum_{i=1}^n c_i x_i=\tau,\ \ s_i c_i\in[1:d]\Bigr\}.
\]
In particular, \(\mathcal{L}_{\alpha,q}\cap(\mathbf{t}_{\mathbf{s}} + r\,\mathbf{B}_\infty^{n+1})\neq\emptyset\)
if and only if there exists a feasible GSS solution whose coordinate-wise signs are given by \(\mathbf{s}\).

\paragraph{Distance gap.}
Fix \(\mathbf{s}\in\{\pm1\}^n\) and write \(\mathbf{t}_{\mathbf{s}}=(\alpha\tau,\, s_1\frac{d+1}{2},\ldots,s_n\frac{d+1}{2})\).
For any \(\mathbf{v}\in\mathcal{L}_{\alpha,q}\), the first-coordinate difference satisfies
\[
v_0-\alpha\tau \in \alpha\mathbb{Z},
\]
since \(v_0=\alpha(qz_0+\sum_i z_i x_i)\).
With our choice \(\alpha=d+1\) and \(r=(d-1)/2\), we have \(r+1=(d+1)/2<\alpha\). Hence
\[
\|\mathbf{v}-\mathbf{t}_{\mathbf{s}}\|_\infty<r+1
\ \Longrightarrow\ 
|v_0-\alpha\tau|<\alpha
\ \Longrightarrow\ 
v_0=\alpha\tau,
\]
so the \(\ell_\infty\)-distance is determined entirely by the last \(n\) coordinates:
\[
\|\mathbf{v}-\mathbf{t}_{\mathbf{s}}\|_\infty
=\max_{i\in[n]}\Bigl|v_i-s_i\frac{d+1}{2}\Bigr|.
\]
Now \(v_i\in\mathbb{Z}\), while \(s_i\frac{d+1}{2}\) is an integer if \(d\) is odd and a half-integer if \(d\) is even. Therefore the above maximum ranges over \(\mathbb{Z}\) (odd \(d\)) or \(\mathbb{Z}+\tfrac12\) (even \(d\)). Since \(r=(d-1)/2\) has the same type, there is no attainable value in \((r,r+1)\), and thus
\[
\|\mathbf{v}-\mathbf{t}_{\mathbf{s}}\|_\infty<r+1
\quad\Longrightarrow\quad
\|\mathbf{v}-\mathbf{t}_{\mathbf{s}}\|_\infty\le r.
\]

\paragraph{Correctness and Soundness.}
Suppose there exists \(\mathbf{c}\in[\pm d]^n\) with \(\sum_i c_i x_i=\tau\).
Let \(s_i:=\mathrm{sign}(c_i)\in\{\pm 1\}\).
Then \(s_i c_i\in[1:d]\) for all \(i\), and by the sign-pattern lemma above,
\((\alpha\tau,c_1,\ldots,c_n)\in \mathcal{L}_{\alpha,q}\cap(\mathbf{t}_{\mathbf{s}}+r\,\mathbf{B}_\infty^{n+1})\),
so \(\mathrm{dist}_\infty(\mathbf{t}_{\mathbf{s}},\mathcal{L}_{\alpha,q})\le r\).
By \(\gamma\)-approximate \(\mathrm{CVP}_\infty\),
\[
\|\mathbf{v}-\mathbf{t}_{\mathbf{s}}\|_\infty
\le \gamma\cdot \mathrm{dist}_\infty(\mathbf{t}_{\mathbf{s}},\mathcal{L}_{\alpha,q})
\le \Bigl(1+\frac{1}{2r}\Bigr)r
= r+\frac{1}{2}
< r+1.
\]
By the gap observation, this implies \(\|\mathbf{v}-\mathbf{t}_{\mathbf{s}}\|_\infty\le r\), so the algorithm accepts for this \(\mathbf{s}\).
Finally, acceptance implies \(\mathbf{v}\in\mathcal{L}_{\alpha,q}\cap(\mathbf{t}_{\mathbf{s}}+r\,\mathbf{B}_\infty^{n+1})\),
and by the sign-pattern lemma the output coefficients \((v_1,\ldots,v_n)\) satisfy \(\sum_i v_i x_i=\tau\) and \(v_i\in[\pm d]\).

If the instance has no solution in \([\pm d]^n\), then for every \(\mathbf{s}\) the intersection
\(\mathcal{L}_{\alpha,q}\cap(\mathbf{t}_{\mathbf{s}}+r\,\mathbf{B}_\infty^{n+1})\) is empty.
Equivalently, \(\mathrm{dist}_\infty(\mathbf{t}_{\mathbf{s}},\mathcal{L}_{\alpha,q})>r\), so no lattice vector can satisfy
\(\|\mathbf{v}-\mathbf{t}_{\mathbf{s}}\|_\infty\le r\).
Hence the algorithm never accepts and correctly reports that no solution exists.

\paragraph{Running time.}
We make \(2^n\) approximate \(\mathrm{CVP}_\infty\) oracle calls in dimension \(n+1\).
For $d>3$, instantiating the oracle by Proposition~\ref{pro:approxcvp} with \(\varepsilon=1/(d-1)\) gives per-call time
\[
2^{O(n)}\bigl(\log(1/\varepsilon)\bigr)^{O(n)}
=
2^{O(n)}\bigl(\log(d-1)\bigr)^{O(n)}
=
2^{O(n\log\log d)}.
\]
Multiplying by the outer \(2^n\) factor yields overall time \(2^{O(n\log\log d)}\), as claimed.
\end{proof}

\paragraph{Proof of \cref{thm:worstcasegss}.}
Combining \cref{thm:worstcasegss-d:d} and \cref{thm:worstcasegsspmd} completes the proof.\qed

\section{Average Case Algorithms for Generalized Subset Sum}
\label{sec:averagecasealg}

In \cref{subsecgss:-d:d}, we already show that the worst case of Generalized Subset Sum with $C=[-d:d]$ could reduce to an approximate $\text{CVP}_{\infty}$ oracle in time $2^{O(n\log\log d)}$. In this section, we consider the average-case that $\mathbf{x}$ is uniformly random over $[0:M-1]^n$ and $|\tau| = o(Mn)$ as~\cite{chen2022average}, where we show we can solve it with single exponential time without $d$ in \cref{subsec:-d:d}. And we also explore the case where $C=[\pm d]$ in \cref{subsec:pmd}.

\subsection{Average Case Generalized Subset Sum with $C=[-d:d]$}
\label{subsec:-d:d}

In this section we consider the average-case that $C=[-d:d]$, $\mathbf{x}$ is uniformly random over $[0:M-1]^n$ as settings in~\cite{chen2022average}.

Our main idea is to show that $\mathrm{dist}_\infty(\mathbf{t},\mathcal{L}_{\alpha,q})$ is bounded by a constant multiple of $\lambda_1^{(\infty)}(\mathcal{L}_{\alpha,q})$ with high probability. To this end, we use the following theorem to bound both $\lambda_1^{(\infty)}(\mathcal{L}_{\alpha,q})$ and $\mathrm{dist}_\infty(\mathbf{t},\mathcal{L}_{\alpha,q})$.

\begin{theorem}[Theorem 4 in~\cite{chen2022average}]
\label{thm:solprobC0}
Let $C=[-d:d]$ for some fixed $d\in \N$, and fix any constant $\epsilon >0$.
For $\mathbf{x} \sim [0:M-1]^n$ and any integer $\tau$  satisfying $|\tau|=o(Mn)$, we have
\[
\mathop{{\bf Pr}}_{\mathbf{x}}\Big[\exists\hspace{0.05cm} \mathbf{c} \in C^n\setminus\{\mathbf{0}\} : \mathbf{x} \cdot \mathbf{c} = \tau\Big]\hspace{0.1cm}
\begin{cases}
     \ge 1 - e^{-\Omega(n)} & \text{if~} M  \le |C|^{(1-\epsilon)n} \\[0.3ex]
     \le {|C|^n}\big/{M}     & \text{if~} M \geq |C|^n.
\end{cases}
\]    
\end{theorem}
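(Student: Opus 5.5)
This is Theorem 4 of \cite{chen2022average}, which the paper imports. Our plan for a self-contained proof has two parts: an upper bound by union bound, and a lower bound by a second-moment argument over a restricted family of coefficient vectors.

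\emph{Upper bound ($M\ge |C|^n$).} Fix any $\mathbf{c}\in C^n\setminus\{\mathbf{0}\}$ and pick an index $j$ with $c_j\neq 0$. Condition on all $x_i$ with $i\ne j$. The equation $\mathbf{c}\cdot\mathbf{x}=\tau$ then fixes $c_jx_j$, so it determines at most one value of $x_j$. Hence $\Pr[\mathbf{c}\cdot\mathbf{x}=\tau]\le 1/M$. A union bound over the fewer than $|C|^n$ nonzero vectors gives $\Pr\le |C|^n/M$.

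\emph{Lower bound ($M\le |C|^{(1-\epsilon)n}$).} Let $N$ count the solutions $\mathbf{c}$ in a well-chosen family $S\subseteq C^n\setminus\{\mathbf{0}\}$, and apply Paley--Zygmund, $\Pr[N>0]\ge \mathbb{E}[N]^2/\mathbb{E}[N^2]$.

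\begin{itemize}
\item \emph{Choice of $S$.} Take $S$ to be the vectors whose empirical distribution of coefficients is close to uniform on $C$. This keeps $|S|=|C|^{n-o(n)}$.
\item \emph{Choice of target.} The sum $\mathbf{c}\cdot\mathbf{x}$ has mean $\tfrac{M-1}{2}\sum_i c_i$. So we restrict to vectors with $\sum_i c_i=s$, where $s$ is chosen so that $\tfrac{M-1}{2}s\approx\tau$. This is possible because $|\tau|=o(Mn)$. The restriction costs only a polynomial factor.
\item \emph{First moment.} A local limit theorem for the sum of the independent terms $c_ix_i$ gives $\Pr[\mathbf{c}\cdot\mathbf{x}=\tau]\approx 1/(M\sqrt{n})$ up to constants. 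Hence $\mathbb{E}[N]\gtrsim |C|^{n-o(n)}/(M\,\mathrm{poly}(n))$, which is exponentially large.
\item \emph{Second moment.} Write $\mathbb{E}[N^2]=\sum_{\mathbf{c},\mathbf{c}'}\Pr[\mathbf{c}\cdot\mathbf{x}=\mathbf{c}'\cdot\mathbf{x}=\tau]$. For pairs $\mathbf{c},\mathbf{c}'$ not parallel, pick indices $j,k$ where the $2\times2$ minor $c_jc'_k-c_kc'_j$ is nonzero. Conditioning on the remaining coordinates leaves a $2\times2$ linear system in $(x_j,x_k)$, which has at most one solution. This gives the bound $\le 1/M^2$.
\end{itemize}

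\emph{Main obstacle.} The bound $1/M^2$ is too weak: we need $\approx\Pr[\mathbf{c}\cdot\mathbf{x}=\tau]^2\cdot(1+o(1))$, and the first moment carries an extra $1/\sqrt{n}$ per factor. We will fix this with a bivariate local limit theorem for the pair $(\mathbf{c}\cdot\mathbf{x},\mathbf{c}'\cdot\mathbf{x})$, which is valid when $\mathbf{c},\mathbf{c}'$ are "generic", i.e.\ have many coordinates in each pattern of value pairs. We then bound separately the contribution of near-parallel or low-overlap pairs. Such pairs are exponentially fewer, since they are counted by entropy $<2n\log|C|$.

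The hypothesis $M\le |C|^{(1-\epsilon)n}$ gives the $\epsilon n$ slack that absorbs these degenerate pairs, whose joint probability can be as large as $1/M$. Carrying this out yields $\mathbb{E}[N^2]\le(1+o(1))\,\mathbb{E}[N]^2+\mathbb{E}[N]\,e^{-\Omega(n)}\mathrm{stuff}$, which gives success probability $1-o(1)$.

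To upgrade $1-o(1)$ to $1-e^{-\Omega(n)}$, we will split the coordinates into blocks. For each block we find solutions of the block sums to a common set of residues modulo a chosen modulus, and then combine the independent blocks. Failure of all blocks is then exponentially unlikely. We expect this amplification, together with the degenerate-pair accounting, to be the delicate part, and we would follow \cite{chen2022average} closely there.
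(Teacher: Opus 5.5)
The paper gives no proof of this statement; it imports it from \cite{chen2022average}, so your argument has to stand on its own. Your union bound for $M\ge |C|^n$ is correct. Your second-moment outline is a plausible route to a $1-o(1)$ lower bound, with one inaccuracy: restricting to $\sum_i c_i=s$ with $s\approx 2\tau/M$ costs a factor $e^{-\Theta(s^2/n)}$, which is superpolynomial when $|s|\gg\sqrt n$. It is still $e^{-o(n)}$ and is absorbed by the $\epsilon n$ slack.

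The genuine gap is the bound $1-e^{-\Omega(n)}$. This is exactly what separates the statement from its $[\pm d]$ analogue, Theorem~\ref{thm:solprobC}, and what the paper later relies on in Lemma~\ref{lem:disupperb} and Theorem~\ref{thm:cvpenumalg}. Paley--Zygmund (or Chebyshev) centered at $\mathbb{E}[N]$ cannot deliver it, however carefully the degenerate pairs are handled. For a typical fixed $\mathbf{x}$, your count is about $|S|$ times a Gaussian point density whose variance is proportional to $\sum_i (x_i-\bar x)^2$. That statistic has relative fluctuations of order $n^{-1/2}$. Hence $\mathrm{Var}(N)/\mathbb{E}[N]^2=\Theta(1/n)$, and the method is capped at $1-\Theta(1/n)$.

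The block amplification does not close the gap. For ``all blocks fail'' to be exponentially unlikely, you need independent blocks such that success in any one of them already gives a global solution. With $0\in C$, a block can do this only by solving $\mathbf{c}_B\cdot\mathbf{x}_B=\tau$ on its own, and your own union bound shows this needs $|C|^{|B|}\gtrsim M$. For $M$ near $|C|^{(1-\epsilon)n}$ this forces $|B|\ge (1-\epsilon)n-O(1)$. So there are only boundedly many disjoint blocks, each failing with probability $o(1)$, and you are back at $1-o(1)$. Making the blocks agree only modulo some $Q$ gives $\mathbf{c}\cdot\mathbf{x}\equiv\tau \pmod Q$. Lifting that to an exact equality is a global condition that couples all the blocks, which destroys the independence you wanted.

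What is missing is a way to remove the polynomial-scale fluctuations \emph{before} the second-moment step. One route is to compare $N(\mathbf{x})$ not with $\mathbb{E}[N]$ but with an $\mathbf{x}$-dependent main term $\mathrm{Main}(\mathbf{x})$. A natural choice is the low-frequency part of the Fourier expansion of $\Pr_{\mathbf c}[\mathbf{c}\cdot\mathbf{x}=\tau]$ for uniform (suitably tilted) $\mathbf{c}\in C^n$. This term depends on $\mathbf{x}$ only through sums of bounded functions of the individual $x_i/M$, so Hoeffding controls it up to failure probability $e^{-\Omega(n)}$. The $\epsilon n$ slack should then make $\mathbb{E}_{\mathbf{x}}[(N(\mathbf{x})-\mathrm{Main}(\mathbf{x}))^2]$ exponentially small relative to $\mathrm{Main}(\mathbf{x})^2$. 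Chebyshev applied to the remainder would then give $1-e^{-\Omega(n)}$. As written, your proposal establishes only the weaker $1-o(1)$ statement.
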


\begin{lemma}\label{lem:lambda1lowerb}
Suppose $M\ge 4^n$ and fix integers $\alpha,q$ such that
\[
\alpha>\Bigl\lfloor \tfrac14 M^{1/n}\Bigr\rfloor
\qquad\text{and}\qquad
q>\Bigl\lfloor \tfrac14 M^{1/n}\Bigr\rfloor\cdot\sum_{i=1}^n|x_i|.
\]
For $\mathbf{x}\sim[0:M-1]^n$,
\[
\Pr\!\left[\lambda_1^{(\infty)}(\mathcal{L}_{\alpha,q})> \tfrac14 M^{1/n}\right]
\ge 1-e^{-\Omega(n)}.
\]
\end{lemma}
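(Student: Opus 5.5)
Set $D:=\lfloor \tfrac14 M^{1/n}\rfloor$. Since $M\ge 4^n$, we have $D\ge 1$. Because $\mathcal{L}_{\alpha,q}\subseteq\mathbb{Z}^{n+1}$, every norm $\|\mathbf{v}\|_\infty$ is an integer. Hence $\lambda_1^{(\infty)}(\mathcal{L}_{\alpha,q})>\tfrac14 M^{1/n}$ holds if and only if $\mathcal{L}_{\alpha,q}$ has no nonzero vector in $D\,\mathbf{B}_\infty^{n+1}$.

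The hypotheses are exactly those of Lemma~\ref{lem:connsolandshortvec} with $d$ replaced by $D$: namely $\alpha>D$ and $q>D\sum_i|x_i|$. That lemma gives
\[
\mathcal{L}_{\alpha,q}\cap D\,\mathbf{B}_\infty^{n+1}=\{(0,\mathbf{c}) : \mathbf{c}\cdot\mathbf{x}=0,\ \|\mathbf{c}\|_\infty\le D\}.
\]
So the bad event $\lambda_1^{(\infty)}\le \tfrac14M^{1/n}$ is precisely the event that there exists $\mathbf{c}\in[-D:D]^n\setminus\{\mathbf{0}\}$ with $\mathbf{c}\cdot\mathbf{x}=0$.

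It remains to bound this probability directly by a union bound; I would not invoke Theorem~\ref{thm:solprobC0}, since $D$ may grow with $n$. Fix a nonzero $\mathbf{c}$ and pick $j$ with $c_j\ne0$. Condition on all $x_i$ with $i\ne j$. The equation $\mathbf{c}\cdot\mathbf{x}=0$ then determines $x_j$ uniquely, so $\Pr[\mathbf{c}\cdot\mathbf{x}=0]\le 1/M$. The number of candidate vectors is at most $(2D+1)^n\le (3D)^n\le (\tfrac34)^n M$, using $D\ge1$ and $D\le \tfrac14M^{1/n}$. The union bound therefore gives failure probability at most $(3/4)^n=e^{-\Omega(n)}$.

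The only delicate point is the counting. We need $2D+1\le c\,M^{1/n}$ with some $c<1$, and the floor together with $M\ge4^n$ (so $D\ge1$) gives exactly this. The rest is a direct application of Lemma~\ref{lem:connsolandshortvec} plus the integrality of the lattice.
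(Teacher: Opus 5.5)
Your proposal is correct and follows the paper's proof essentially step for step. You apply Lemma~\ref{lem:connsolandshortvec} with $d=\lfloor\frac14 M^{1/n}\rfloor$ to reduce to the existence of a nonzero balancing vector in $[-D:D]^n$, then union-bound by $(2D+1)^n/M\le(3/4)^n$; you also spell out the integrality step and the per-vector $1/M$ bound (conditioning on the other coordinates), which the paper leaves implicit, and your $2D+1\le 3D$ is a harmless variant of the paper's $2D+1\le\frac12M^{1/n}+1\le\frac34M^{1/n}$.
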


\begin{proof}
Let $d_0:=\left\lfloor \tfrac14 M^{1/n}\right\rfloor\geq 1$.
If $\lambda_1^{(\infty)}(\mathcal{L}_{\alpha,q})\leq\tfrac14 M^{1/n}$, then
$\mathcal{L}_{\alpha,q}$ contains a nonzero vector $\mathbf{v}=(v_0,\dots,v_n)$
with $\|\mathbf{v}\|_\infty\le d_0$.
By Lemma~\ref{lem:connsolandshortvec} for $\tau=0$,
this implies the existence of a nonzero $\mathbf{c}\in[-d_0,d_0]^n$ such that
$\mathbf{x}\cdot\mathbf{c}=0$.

Therefore,
\[
\Pr\!\left[\lambda_1^{(\infty)}(\mathcal{L}_{\alpha,q})\leq\tfrac14 M^{1/n}\right]
\le
\Pr_{\mathbf{x}}\!\left[\exists\,\mathbf{c}\in[-d_0,d_0]^n\setminus\{\mathbf{0}\}:
\mathbf{x}\cdot\mathbf{c}=0\right].
\]
Using the elementary union bound,
\[
\Pr_{\mathbf{x}}\!\left[\exists\,\mathbf{c}\in[-d_0,d_0]^n\setminus\{\mathbf{0}\}:
\mathbf{x}\cdot\mathbf{c}=0\right]
\le
\frac{(2d_0+1)^n}{M}.
\]
Now $2d_0+1\le \tfrac12 M^{1/n}+1$. As we have $M^{1/n}\ge 4$, then
$\tfrac12 M^{1/n}+1\le \tfrac34 M^{1/n}$ and hence
\[
\frac{(2d_0+1)^n}{M}\le \left(\tfrac34\right)^n=e^{-\Omega(n)}.
\]
Therefore, the failure probability is $e^{-\Omega(n)}$, proving the lemma.
\end{proof}

\begin{lemma}
\label{lem:disupperb}
Suppose $M=2^{O(n)}$, let \(\mathbf{x}\sim[0:M-1]^n\) and let \(\tau\in\mathbb{Z}\) satisfy
\(|\tau|=o(nM)\). 
Set \(\mathbf{t}=(\alpha\tau,0,\dots,0)\), for arbitrary integers \(\alpha,q\) and any constant $\varepsilon>0$, then
\[
\mathop{{\bf Pr}}\left[\mathrm{dist}_{\infty}(\mathbf{t}, \mathcal{L}_{\alpha,q})\le \left(\frac{1}{2}+\varepsilon\right) M^{1/n}+1\right]\ge 1-e^{-\Omega(n)}.
\]

\end{lemma}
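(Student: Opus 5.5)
The plan is to bound the distance by exhibiting an explicit lattice vector close to $\mathbf{t}$, namely $(\alpha\tau, c_1,\dots,c_n)$ with $\mathbf{c}\cdot\mathbf{x}=\tau$ and $\|\mathbf{c}\|_\infty\le d$ for a suitable $d$. No constraint on $\alpha,q$ is needed for this direction. With $z_0=0$ and $z_i=c_i$, the lattice parametrization $\bigl(\alpha(qz_0+\sum_i z_ix_i),z_1,\dots,z_n\bigr)$ gives exactly the vector $(\alpha\tau,c_1,\dots,c_n)\in\mathcal{L}_{\alpha,q}$. Its $\ell_\infty$-distance to $\mathbf{t}=(\alpha\tau,0,\dots,0)$ is $\|\mathbf{c}\|_\infty$. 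So it suffices to show that, with probability $1-e^{-\Omega(n)}$, there is some $\mathbf{c}\in[-d:d]^n$ with $\mathbf{x}\cdot\mathbf{c}=\tau$, where $d\le(\tfrac12+\varepsilon)M^{1/n}+1$.

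To get existence, I would invoke Theorem~\ref{thm:solprobC0} (Theorem 4 of \cite{chen2022average}) with the integer
\[
d:=\Bigl\lceil \bigl(\tfrac12+\varepsilon\bigr)M^{1/n}\Bigr\rceil\le \bigl(\tfrac12+\varepsilon\bigr)M^{1/n}+1.
\]
Then $|C|=2d+1\ge (1+2\varepsilon)M^{1/n}$. Hence $|C|^{n}\ge(1+2\varepsilon)^nM$, and we need some $\epsilon'>0$ with $M\le |C|^{(1-\epsilon')n}$. Since $M=2^{O(n)}$, we have $\log M\le Bn$ for a constant $B$. Then $\log|C|^n\ge \log M+n\log(1+2\varepsilon)$, so
\[
\frac{\log M}{\log |C|^n}\le \frac{B}{B+\log(1+2\varepsilon)}=:1-\epsilon',
\]
with $\epsilon'>0$ a constant. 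This yields $M\le|C|^{(1-\epsilon')n}$. The case $|C|^n$ much larger than $M$ is also covered, since the ratio only decreases. The hypothesis $|\tau|=o(Mn)$ is exactly that of Theorem~\ref{thm:solprobC0}. The first case of that theorem then gives the existence of $\mathbf{c}\in C^n\setminus\{\mathbf 0\}$ with $\mathbf{x}\cdot\mathbf{c}=\tau$, with probability $\ge 1-e^{-\Omega(n)}$. Combined with the explicit lattice vector above, this gives the claimed bound.

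The main obstacle is that Theorem~\ref{thm:solprobC0} is stated for a \emph{fixed} $d$, whereas here $d\approx \tfrac12 M^{1/n}$ may grow, or at least depend on $M$. Under $M=2^{O(n)}$, $M^{1/n}$ is bounded by a constant, so $d$ ranges over finitely many constant values. I would first try to argue that the constants in the $e^{-\Omega(n)}$ bound can be taken uniformly over this finite range of $d$: take the minimum of the implied constants over $d\le (\tfrac12+\varepsilon)2^{B}+1$, using the single $\epsilon'$ computed above. If instead $M^{1/n}$ were small, so that $d$ is tiny, the statement still holds trivially through the same argument, since $d\ge1$. A minor point to check is that the theorem's requirement $\mathbf{c}\ne\mathbf 0$ is harmless: we only need existence, and for $\tau=0$ the zero vector already gives distance $0$.
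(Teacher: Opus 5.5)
Your proposal is correct and takes essentially the same approach as the paper. Both arguments set $d_1=\lceil(\tfrac12+\varepsilon)M^{1/n}\rceil$, apply Theorem~\ref{thm:solprobC0} to $[-d_1:d_1]$ to obtain $\mathbf{c}$ with $\mathbf{x}\cdot\mathbf{c}=\tau$, and bound the distance by $d_1$ using the lattice vector $(\alpha\tau,c_1,\dots,c_n)$. The only differences are minor: the paper takes $\varepsilon'=\log_{|C_1|}(1+2\varepsilon)$ and simply asserts that $d_1$ is a constant because $M=2^{O(n)}$, whereas you choose an explicit uniform $\epsilon'$ via $\log M\le Bn$ and spell out why the $e^{-\Omega(n)}$ constants are uniform over the finitely many possible values of $d_1$, which is slightly more careful.
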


\begin{proof}
Let \(d_1=\lceil\left(\frac{1}{2}+\varepsilon\right) M^{1/n}\rceil\) and \(C_1=\{-d_1,\dots,d_1\}\). Since $M=2^{O(n)}$, then $d_1$ (so as $|C_1|$) is still a constant. Set
\[
\varepsilon':=\log_{|C_1|}(1+2\varepsilon)>0,
\]
then
\[
|C_1|^{(1-\varepsilon')n}=\frac{|C_1|^n}{(1+2\varepsilon)^n}\ge
\frac{(2(\frac{1}{2}+\varepsilon) M^{1/n})^n}{(1+2\varepsilon)^n}=M.
\]

Now we could apply Theorem~\ref{thm:solprobC0} (with this \(C_1\) and the given \(\tau\)),
we have
\[
\mathop{{\bf Pr}}_{\mathbf{x}}\Big[\exists\ \mathbf{c}\in C_1^n\setminus\{\mathbf{0}\}:\ \mathbf{x}\cdot \mathbf{c}=\tau\Big]
\ge 1-e^{-\Omega(n)}.
\]
Condition on the event that such \(\mathbf{c}=(c_1,\dots,c_n)\) exists. Then \(|c_i|\le d_1\) for all
\(i\), and \(\sum_{i=1}^n c_i x_i=\tau\).

Recall that \(\mathcal{L}_{\alpha,q}\) is generated by the rows of \(\mathbf{B}_{\alpha,q}\), consider the integer linear combination using coefficients \(z_0=0\) and \(z_i=c_i\):
\[
\mathbf{v}
:= \sum_{i=1}^n c_i(\alpha x_i,0,\dots,1,\dots,0)
= \bigl(\alpha\sum_{i=1}^n c_i x_i,\ c_1,\dots,c_n\bigr)
= (\alpha\tau,c_1,\dots,c_n).
\]
Hence \(\mathbf{v}\in \mathcal{L}_{\alpha,q}\) for \emph{any} integers \(\alpha,q\).
Moreover,
\[
\|\mathbf{v}-\mathbf{t}\|_\infty
= \|(0,c_1,\dots,c_n)\|_\infty
= \max_{1\le i\le n}|c_i|
\le d_1.
\]
Therefore \(\mathrm{dist}_\infty(\mathbf{t},\mathcal{L}_{\alpha,q})\le \left(\frac{1}{2}+\varepsilon\right) M^{1/n}+1\) holds on this event, and we conclude
\[
\mathop{{\bf Pr}}\left[\mathrm{dist}_{\infty}(\mathbf{t}, \mathcal{L}_{\alpha,q})\le \left(\frac{1}{2}+\varepsilon\right) M^{\frac{1}{n}}+1\right]
\ge 1-e^{-\Omega(n)}.
\]
\end{proof}

From Lemma~\ref{lem:lambda1lowerb} and Lemma~\ref{lem:disupperb} with $\varepsilon=1/4$, we could deduce that with high probability
\begin{align*}
    \mathrm{dist}_{\infty}(\mathbf{t}, \mathcal{L}_{\alpha,q})&\le \frac{3}{4}M^{1/n}+1\\
    &\leq \frac{3}{4}\cdot 4\cdot\lambda_1^{(\infty)}(\mathcal{L}_{\alpha,q})+1\\
    &\leq 4 \lambda_1^{(\infty)}(\mathcal{L}_{\alpha,q}).
\end{align*}
In this setting, we can solve $\text{CVP}_{\infty}$ in single exponential time via Theorem~\ref{thm:cvpanynorm}. However, we must identify and abort when the ratio between $\mathrm{dist}_{\infty}(\mathbf{t}, \mathcal{L}_{\alpha,q})$ and $\lambda_1^{(\infty)}(\mathcal{L}_{\alpha,q})$ is large. We therefore present Algorithm~\ref{alg:gssbyellenum} rather than applying Theorem~\ref{thm:cvpanynorm} directly.

\begin{algorithm}[t]
\caption{Solving Generalized Subset Sum via $\mathrm{Ellipsoid\mbox{-}Enum}$}
\label{alg:gssbyellenum}
\KwIn{Bound $M$, vector $\mathbf{x} \sim [0:M-1]^n$, $C=[-d:d]$ of allowed coefficients, and a target integer $\tau$ with $|\tau|=o(Mn)$.}
\KwOut{A vector set $S$ or $\perp$.}

Construct lattice $\mathcal{L}_{\alpha,q}$ with $\alpha=\max\{d+1,[M^{\frac{1}{n}}]\},q=\alpha\sum_{i=1}^{n}|x_i|+|\tau|+1$\;

Set target vector $\mathbf{t}=(\alpha\tau,0,\dots,0)$\;

Compute $\lambda_1^{(\infty)}(\mathcal{L}_{\alpha,q})$ using Theorem~\ref{thm:svp-infty}\;\label{step:svpinfty}

\If{$\lambda_1^{(\infty)}(\mathcal{L}_{\alpha,q})\leq\frac{1}{4}M^{\frac{1}{n}}$}{
\Return $\perp$\;
}

Call Lemma~\ref{lem:Ellipsoid-Enum} to compute
\[
U \leftarrow \mathcal{L}_{\alpha,q} \cap \bigl(\mathbf{t}+[M^{\frac{1}{n}}]\sqrt{n+1}\,\mathbf{B}_2^{n+1}\bigr),
\]
$S \leftarrow \{\,\mathbf{v}\in U : \|\mathbf{v}-\mathbf{t}\|_\infty \le d\,\}$\;\label{step:callellenum}
    
\Return $S$

\end{algorithm}

\begin{theorem}
\label{thm:cvpenumalg}
For average-case Generalized Subset Sum, given a bound $M=2^{O(n)}$, vector $\mathbf{x} \sim [0:M-1]^n$, $C=[-d:d]$ of allowed coefficients, and a target integer $\tau$ with $|\tau|=o(Mn)$. Then with probability at least $1-e^{-\Omega(n)}$, if there exists a solution,  Algorithm~\ref{alg:gssbyellenum} returns a non-empty set $S$ in deterministic time
\[
\tilde{O}\!\left(\bigl(18\sqrt{2\pi e}\,\bigr)^n\right)\approx \tilde{O}(2^{6.217n}).
\]
\end{theorem}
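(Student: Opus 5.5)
The proof has three parts: a high-probability event under which the algorithm does not abort, correctness of the returned set, and a bound on the enumeration cost.

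\textbf{Good event.} Let $\alpha=\max\{d+1,[M^{1/n}]\}$ and $q=\alpha\sum_i|x_i|+|\tau|+1$, as in Algorithm~\ref{alg:gssbyellenum}. These satisfy the hypotheses of Lemma~\ref{lem:lambda1lowerb}, since $\alpha> \lfloor \frac14 M^{1/n}\rfloor$ and $q>\lfloor\frac14 M^{1/n}\rfloor\sum_i|x_i|$. They also satisfy the hypotheses of Lemma~\ref{lem:connsolclosevec}, since $\alpha>d$ and $q>d\sum|x_i|+|\tau|$. We work under the intersection of two events:
\begin{itemize}
\item[(i)] $\lambda_1^{(\infty)}(\mathcal{L}_{\alpha,q})>\frac14 M^{1/n}$, which holds with probability $1-e^{-\Omega(n)}$ by Lemma~\ref{lem:lambda1lowerb};
\item[(ii)] $\mathrm{dist}_\infty(\mathbf{t},\mathcal{L}_{\alpha,q})\le \frac34 M^{1/n}+1$, which holds with probability $1-e^{-\Omega(n)}$ by Lemma~\ref{lem:disupperb} with $\varepsilon=1/4$, valid for arbitrary $\alpha,q$ and using $M=2^{O(n)}$.
\end{itemize}
A union bound gives probability $1-e^{-\Omega(n)}$ for both. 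On (i), the algorithm does not return $\perp$ at the abort test.

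\textbf{Correctness.} Suppose a solution $\mathbf{c}\in[-d:d]^n$ exists. Lemma~\ref{lem:connsolclosevec} gives the lattice point $\mathbf{v}=(\alpha\tau,\mathbf{c})$ with $\|\mathbf{v}-\mathbf{t}\|_\infty\le d$. I want $\mathbf{v}$ to lie in the enumerated Euclidean ball of radius $[M^{1/n}]\sqrt{n+1}$, which needs $d\le [M^{1/n}]$. Since $d$ is constant and $M\ge 4^n$ (the standing assumption, else DP applies), this should hold; if it does not, I would enlarge the enumeration radius to $\max\{d,[M^{1/n}]\}$. Then $\|\mathbf{v}-\mathbf{t}\|_2\le\sqrt{n+1}\,\|\mathbf{v}-\mathbf{t}\|_\infty$ puts $\mathbf{v}\in U$, and the $\ell_\infty$ filter keeps it, so $S\ne\emptyset$. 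Every element of $S$ is a valid solution by Lemma~\ref{lem:connsolclosevec}.

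\textbf{Running time.} Step~\ref{step:svpinfty} costs $\tilde O((6\sqrt{2\pi e})^{n})$ by Theorem~\ref{thm:svp-infty}; being in dimension $n+1$ only changes constant factors. Lemma~\ref{lem:Ellipsoid-Enum} costs $\tilde O(2^{2n}+2^n|U|)$. To bound $|U|$ I follow the proof of Theorem~\ref{thm:svp-infty} with $R=[M^{1/n}]$:
\[
|U|\le G(R\sqrt{n+1}B_2^{n+1},\mathcal{L})\le N(\sqrt{n+1}B_2^{n+1},B_\infty^{n+1})\cdot G(R\,B_\infty^{n+1},\mathcal{L}).
\]
The Rogers--Zong bound gives $N(\sqrt{n+1}B_2^{n+1},B_\infty^{n+1})=\tilde O((\sqrt{2\pi e})^n)$. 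Lemma 4.3 of \cite{dadush2011enumerative} gives $G(R\,B_\infty^{n+1},\mathcal{L})\le(1+2R/\lambda_1^{(\infty)})^{n+1}$. This step is run only when (i) holds, so $\lambda_1^{(\infty)}>\frac14 M^{1/n}$ and $R\le M^{1/n}+1$; the key point is that the abort test guarantees this ratio deterministically, not just on the good event. The ratio $R/\lambda_1^{(\infty)}$ is then $<4(1+M^{-1/n})$, so $G\le(9+o(1))^{n}$ after handling rounding. Together this gives $|U|=\tilde O((9\sqrt{2\pi e})^n)$ and total time $\tilde O((18\sqrt{2\pi e})^n)$.

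\textbf{Main obstacle.} The delicate step is the rounding in $1+2R/\lambda_1^{(\infty)}$: the additive $+1$ terms must not push the base above $9$ by more than a $(1+O(1/n))$ factor. I would absorb them using $M^{1/n}\ge4$, or shrink $R$ slightly. I would also double-check that the bracket $[\cdot]$ (floor versus nearest integer) keeps $R\le M^{1/n}$, so that the $(9)^n$ bound is honest.
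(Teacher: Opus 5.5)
Your overall structure matches the paper's: the guard analysed with Lemma~\ref{lem:lambda1lowerb}, the distance bound from Lemma~\ref{lem:disupperb}, and the covering/$G$ bound on $|U|$. However, the correctness step has a genuine gap.

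\textbf{The gap.} You claim $d\le[M^{1/n}]$ ``should hold'' because $d$ is constant and $M\ge 4^n$. This is false. $M\ge 4^n$ only gives $M^{1/n}\ge 4$, so any fixed $d\ge 5$ with, say, $M=4^n$ has $R=[M^{1/n}]<d$. This is not a corner case. When $M^{1/n}<d$ we have $M<d^n\le(2d+1)^{(1-\epsilon)n}$ for small $\epsilon$, which is exactly where Theorem~\ref{thm:solprobC0} says solutions exist with high probability.

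Your fallback, enlarging the radius to $\max\{d,R\}$, does not rescue the argument. It changes the algorithm the theorem is about, and it destroys the time bound. With radius $d$ and only $\lambda_1^{(\infty)}>\frac14 M^{1/n}$, Lemma 4.3 of \cite{dadush2011enumerative} gives only $(1+8d/M^{1/n})^{n+1}$. At $M^{1/n}=4$ this is $(2d+1)^{n+1}$, which puts $d$ back into the base.

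\textbf{The missing idea.} Use your event (ii), which you establish but never invoke. When $d>R$ you do not need to capture the particular solution $\mathbf{c}$.

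On (ii), note that $\mathbf{t}$ and $\mathcal{L}_{\alpha,q}$ are integral, so $\mathrm{dist}_\infty(\mathbf{t},\mathcal{L}_{\alpha,q})$ is an integer. Hence it is at most $\lfloor \frac34 M^{1/n}+1\rfloor\le\lfloor M^{1/n}\rfloor=R$, using $M^{1/n}\ge 4$. Combined with the existence of a solution, a closest vector $\mathbf{w}$ satisfies $\|\mathbf{w}-\mathbf{t}\|_\infty\le\min\{d,R\}$.

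Therefore $\mathbf{w}\in\mathbf{t}+R\,\mathbf{B}_\infty^{n+1}\subseteq\mathbf{t}+R\sqrt{n+1}\,\mathbf{B}_2^{n+1}$, so $\mathbf{w}$ is enumerated. It survives the filter $\|\mathbf{w}-\mathbf{t}\|_\infty\le d$, and it is a valid solution by Lemma~\ref{lem:connsolclosevec}, possibly different from $\mathbf{c}$. This is precisely the paper's case~(B). The algorithm and its radius stay unchanged, so your running-time analysis survives.

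\textbf{A smaller point in the running time.} From $R\le M^{1/n}+1$ you get only $1+2R/\lambda_1^{(\infty)}<1+8(1+M^{-1/n})\le 11$, not $9+o(1)$. The reason is that $M^{1/n}$ may be a constant such as $4$, so the additive term cannot be absorbed into a $(1+O(1/n))$ factor. You need $[\cdot]$ to be the floor, as the paper implicitly uses. Then $R\le M^{1/n}$, and the guard gives $1+2R/\lambda_1^{(\infty)}<9$ outright, yielding the $9^{n+1}$ and hence $(18\sqrt{2\pi e})^n$ bound.
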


\begin{proof}
Set $m=n+1,R=\left[ M^{1/n}\right]$ and use \(\mathbf t=(\alpha\tau,0,\dots,0)\) as the target in Algorithm~\ref{alg:gssbyellenum}.

\paragraph{Algorithm outline and soundness.}
The algorithm has two phases.

\smallskip
\noindent\emph{Phase 1 (guarding \(\lambda_1^{(\infty)}\)).}
It first computes \(\lambda_1^{(\infty)}(\mathcal L_{\alpha,q})\) exactly via Theorem~\ref{thm:svp-infty}.
If \(\lambda_1^{(\infty)}(\mathcal L_{\alpha,q})\le\tfrac14 M^{1/n}\), it immediately returns \(\perp\).
This guard is used to prevent the subsequent enumeration radius from being an excessively large
multiple of \(\lambda_1^{(\infty)}(\mathcal L_{\alpha,q})\), which would make the worst-case output size
\(|\mathcal{L} \cap \bigl(\mathbf{t}+R\sqrt{m}\,\mathbf{B}_2^{m}\bigr)|\) too large.

\smallskip
\noindent\emph{Phase 2 (enumeration and filtering).}
Otherwise it calls Lemma~\ref{lem:Ellipsoid-Enum} to enumerate
\[
U \leftarrow \mathcal{L} \cap \bigl(\mathbf{t}+R\sqrt{m}\,\mathbf{B}_2^{m}\bigr),
\qquad\text{where } R:=\left[ M^{1/n}\right],
\]
and then outputs
\[
S \leftarrow \{\,\mathbf{v}\in U : \|\mathbf{v}-\mathbf{t}\|_\infty \le d\,\}.
\]

We now show \emph{soundness}: any \(\mathbf v\in S\) encodes a valid GSS solution for \(C=[-d:d]\).
Indeed, by construction \(\mathbf v\in\mathcal L_{\alpha,q}\) and \(\|\mathbf v-\mathbf t\|_\infty\le d\).
The algorithm sets
\[
\alpha=\max\{d+1,[ M^{1/n}]\}>d,
\qquad
q=\alpha\sum_{i=1}^n|x_i|+|\tau|+1>d\sum_{i=1}^n|x_i|+|\tau|,
\]
so the hypotheses of Lemma~\ref{lem:connsolclosevec} hold.
Therefore, Lemma~\ref{lem:connsolclosevec} implies that every $\mathbf{v}\in S$ must be of the form
$\mathbf{v}=(\alpha\tau,c_1,\dots,c_n)$, where $|c_i|\le d$ and $\sum_{i=1}^n c_i x_i=\tau$.
Hence, the algorithm never outputs an invalid solution, failure can only occur when a solution exists but the algorithm returns $\perp$ or outputs the empty set.

\paragraph{Failure probability.}
Assume there exists \(\mathbf c\in[-d:d]^n\) such that \(\mathbf x\cdot\mathbf c=\tau\).
Then Lemma~\ref{lem:connsolclosevec} yields
\[
(\alpha\tau,c_1,\dots,c_n)\in \mathcal L_{\alpha,q}\cap(\mathbf t+d\mathbf{B}_\infty^{m}),
\]
so we have $\mathrm{dist}_\infty(\mathbf t, \mathcal L_{\alpha,q})\le d$.

Under this assumption, Algorithm~\ref{alg:gssbyellenum} can fail only in the following cases.

\smallskip
\noindent\emph{(A) The \(\lambda_1^{(\infty)}\) guard triggers.}
That is, \(\lambda_1^{(\infty)}(\mathcal L_{\alpha,q})\le\tfrac14 M^{1/n}\).
By Lemma~\ref{lem:lambda1lowerb} (and our choice of \(\alpha,q\), which satisfies its premises),
\[
\Pr_{\mathbf x}\!\left[\lambda_1^{(\infty)}(\mathcal L_{\alpha,q})\le\tfrac14 M^{1/n}\right]\le e^{-\Omega(n)}.
\]
This event has exponentially small probability and does not affect the overall success probability.

\smallskip
\noindent\emph{(B) Enumeration radius is too small to capture any close vector.}
The algorithm enumerates within \(\mathbf t+R\sqrt m\,\mathbf{B}_2^{m}\), and hence contains
\(\mathbf t+R\mathbf{B}_\infty^{m}\) since \(\mathbf{B}_\infty^{m}\subseteq \sqrt m\,\mathbf{B}_2^{m}\).
If \(d\le R\) then \(\mathbf t+d\mathbf{B}_\infty^{m}\subseteq \mathbf t+R\mathbf{B}_\infty^{m}\), so any solution vector
is certainly enumerated and this failure mode cannot happen.
Thus the only way the algorithm can miss all valid vectors is when \(d>R\) \emph{and} still
\(\mathrm{dist}_\infty(\mathbf t, \mathcal L_{\alpha,q})>R\).

However, Lemma~\ref{lem:disupperb} states that for any constant \(\varepsilon>0\),
\[
\Pr_{\mathbf x}\!\left[\mathrm{dist}_\infty(\mathbf t, \mathcal L_{\alpha,q})\le
\left(\tfrac12+\varepsilon\right)M^{1/n}+1\right]\ge 1-e^{-\Omega(n)}.
\]
Since $M\ge 4^n$, then \((\tfrac12+\varepsilon)M^{1/n}+1\le [M^{1/n}]= R\) holds for suitable $\varepsilon$, so on this
high-probability event we in fact have \(\mathrm{dist}_\infty(\mathbf t, \mathcal L_{\alpha,q})\le R\),
and hence the enumeration must include at least one vector in \(\mathbf t+R\mathbf{B}_\infty^{m}\),
which after filtering yields \(S\neq\emptyset\).
Therefore,
\[
\Pr_{\mathbf x}\!\left[\mathrm{dist}_\infty(\mathbf t, \mathcal L_{\alpha,q})>R\right]\le e^{-\Omega(n)}.
\]

Combining (A) and (B), conditioned on the existence of a solution, the probability that
Algorithm~\ref{alg:gssbyellenum} returns \(\perp\) or outputs \(S=\emptyset\) is at most \(e^{-\Omega(n)}\).
Equivalently, with probability at least \(1-e^{-\Omega(n)}\), it outputs a non-empty set \(S\).

\paragraph{Running time.}
The total running time is the sum of:

\smallskip
\noindent\emph{(1) Computing \(\lambda_1^{(\infty)}(\mathcal L_{\alpha,q})\).}
By Theorem~\ref{thm:svp-infty} in dimension \(m\), this costs
\(\tilde O\bigl((6\sqrt{2\pi e})^{m}\bigr)\). (If the algorithm exits early here, the total time is even smaller.)

\smallskip
\noindent\emph{(2) One call to Lemma~\ref{lem:Ellipsoid-Enum}.}
Lemma~\ref{lem:Ellipsoid-Enum} gives time
\[
\tilde O\!\left(2^{2m}+2^{m}\cdot |U|\right),
\quad\text{where}\quad
U=\mathcal L_{\alpha,q}\cap(\mathbf t+R\sqrt m\,\mathbf{B}_2^{m}).
\]
We bound \(|U|\) via the standard translate-counting routine used in Theorem~\ref{thm:svp-infty}:
\[
|U|\le G(R\sqrt m\,\mathbf{B}_2^{m},\mathcal L_{\alpha,q})
\le N(\sqrt m\,\mathbf{B}_2^{m},\mathbf{B}_\infty^{m})\cdot G(R\mathbf{B}_\infty^{m},\mathcal L_{\alpha,q}).
\]
As in the proof of Theorem~\ref{thm:svp-infty}, the Rogers--Zong volumetric covering bound yields
\[
N(\sqrt m\,\mathbf{B}_2^{m},\mathbf{B}_\infty^{m})=\tilde O\!\left((\sqrt{2\pi e})^{m}\right).
\]
Moreover, on the branch where we do enumerate, we have
\(\lambda_1^{(\infty)}(\mathcal L_{\alpha,q})\ge \tfrac14 M^{1/n}\), and \(R=[ M^{1/n}]\).
By Lemma 4.3 in \cite{dadush2011enumerative},
\[
G(R\mathbf{B}_\infty^{m},\mathcal L_{\alpha,q})\le \left(1+\frac{2R}{\lambda_1^{(\infty)}(\mathcal L_{\alpha,q})}\right)^{m}\le 9^{m}.
\]
Therefore,
\[
|U|\le \tilde O\!\left((\sqrt{2\pi e})^{m}\cdot 9^{m}\right)
=\tilde O\!\left((9\sqrt{2\pi e})^{m}\right).
\]
Plugging into Lemma~\ref{lem:Ellipsoid-Enum},
\[
\tilde O\!\left(2^{2m}+2^{m}\cdot |U|\right)
\le
\tilde O\!\left(2^{2m}+2^{m}(9\sqrt{2\pi e})^{m}\right)
=
\tilde O\!\left((18\sqrt{2\pi e})^{m}\right).
\]
Since \(m=n+1\), this equals \(\tilde O\bigl((18\sqrt{2\pi e})^{n}\bigr)\). The last filtering step only adds polynomial complexity, which does not affect the single-exponential leading term.

Combining all above proves that, with probability at least \(1-e^{-\Omega(n)}\) (conditioned on the existence of a solution),
Algorithm~\ref{alg:gssbyellenum} returns a non-empty set \(S\) and runs in deterministic time
\[
\tilde{O}\!\left(\bigl(18\sqrt{2\pi e}\bigr)^n\right)\approx \tilde{O}(2^{6.217n}).
\qedhere
\]
\end{proof}

\begin{remark}
We have not tried particularly hard to optimize the value of $6.217$, and it seems there is an opportunity to improve it to $4.632$ as the constant in SVP.      
\end{remark}

\begin{theorem}
\label{thm:averagecaseinclude0}
Fixed any $d \in \mathbb{N}$ and let $C = [-d:d]$, there is a deterministic algorithm for average-case Generalized Subset Sum Problems in running time
\[
\tilde{O}\!\left(\bigl(18\sqrt{2\pi e}\,\bigr)^n\right)\approx \tilde{O}(2^{6.217n}).
\]
Given any $M$ and $\tau$ with $|\tau|=o(nM)$, the algorithm succeeds on $(M,\tau,\mathbf{x})$ with probability at least $1-e^{-\Omega(n)}$, over the draw of $\mathbf{x}\sim [0:M-1]^n$.
\end{theorem}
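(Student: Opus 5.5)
The plan is to split according to the size of $M$ relative to $n$ and $d$, and to handle each range with a tool already available in the paper. The target is an algorithm that always runs in time $\tilde{O}((18\sqrt{2\pi e})^n)$. On a $1-e^{-\Omega(n)}$ fraction of inputs $\mathbf{x}$ it must either output a solution or correctly report that none exists; it should never output an invalid vector.

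\emph{Small $M$.} If $M<4^n$, then $\sum_i |c_i x_i|\le dnM = 2^{O(n)}$ for fixed $d$. A standard dynamic program over the attainable partial sums $\sum_{i\le j}c_ix_i$ (one table of size $O(dnM)$ per prefix $j$) then decides feasibility exactly and reconstructs a solution. Its running time is $\operatorname{poly}(n)\cdot 4^n\le\tilde{O}((18\sqrt{2\pi e})^n)$, and it is deterministic and never fails.

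\emph{Large $M$.} If $M\ge |C|^n$, Theorem~\ref{thm:solprobC0} shows that a solution exists with probability at most $|C|^n/M$. This bound is not exponentially small near the threshold, so we do not simply answer ``no''. Instead we always run Algorithm~\ref{alg:gssbyellenum}, which is sound by Lemma~\ref{lem:connsolclosevec}. Its guard and its enumeration already work whenever $M\ge 4^n$, with failure probability $e^{-\Omega(n)}$ coming from Lemma~\ref{lem:lambda1lowerb} alone. If $d\le R=[M^{1/n}]$, which holds automatically once $M\ge (d+1)^n$, then every solution lies in $\mathbf{t}+R\mathbf{B}_\infty^{n+1}$ and is enumerated. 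In that case the output $S$ is exactly the set of all solutions, so ``no solution'' is certified whenever $S=\emptyset$.

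\emph{Middle range.} This is $4^n\le M< (d+1)^n$, in which case $M=2^{O(n)}$ since $d$ is fixed. Here we apply Theorem~\ref{thm:cvpenumalg} directly. If a solution exists, it is found except with probability $e^{-\Omega(n)}$. If none exists, $S$ is empty and we report that none exists, which is correct by soundness.

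Combining the three ranges gives the claimed time and success probability $1-e^{-\Omega(n)}$ for every $M$ and $\tau$ with $|\tau|=o(nM)$.

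The main obstacle is the middle range, and specifically the hypothesis $M=2^{O(n)}$ in Lemma~\ref{lem:disupperb}. That lemma uses Theorem~\ref{thm:solprobC0} with a constant-size $C_1$, so the case split must ensure that $d_1=\Theta(M^{1/n})$ stays bounded by a constant depending only on $d$. One also has to check that the rounding $R=[M^{1/n}]$ still satisfies $(\tfrac12+\varepsilon)M^{1/n}+1\le R$ for $M^{1/n}\ge 4$ with, say, $\varepsilon=1/8$. If that inequality is tight at $M^{1/n}=4$, I would enlarge the enumeration radius to $2R$. This only changes the constant $9$ in the bound on $G$ (it would become $17$), and I would then recheck that the resulting exponent is acceptable. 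Otherwise the argument is an assembly of Theorem~\ref{thm:cvpenumalg}, Lemma~\ref{lem:lambda1lowerb}, and the dynamic programming base case.
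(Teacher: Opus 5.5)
Your argument is correct, but the case split differs from the paper's. The paper uses one threshold, $(2d+1)^{2n}$.
\begin{itemize}
\item Above it, the paper answers ``no solution'' without looking at $\mathbf{x}$. This is justified by the upper-bound half of Theorem~\ref{thm:solprobC0}, since $|C|^n/M\le(2d+1)^{-n}$.
\item Below it, $M=2^{O(n)}$ and the paper invokes Theorem~\ref{thm:cvpenumalg}.
\item The regime $M<4^n$ is covered only by the standing assumption in the preliminaries. Your dynamic program makes that explicit.
\end{itemize}
You instead put the threshold at $(d+1)^n$ and still run Algorithm~\ref{alg:gssbyellenum} above it. Since $R=[M^{1/n}]\ge d$ there, the enumeration is exhaustive. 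The only failure is then the detectable guard event, which Lemma~\ref{lem:lambda1lowerb} bounds using only $M\ge 4^n$. Your implicit claim that the running-time analysis in the proof of Theorem~\ref{thm:cvpenumalg} never uses $M=2^{O(n)}$ is true. This is the ``$d\le R$'' observation from part (B) of that proof, extended past $M=2^{O(n)}$. Your route gives exact answers for all $M\ge(d+1)^n$ whenever $\perp$ is not returned, so there is never a silent wrong ``no''. It also never needs the upper-bound half of Theorem~\ref{thm:solprobC0}. The paper's route is shorter and answers huge-$M$ instances in polynomial time without running the lattice machinery.

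Your ``main obstacle'' is not one, and your fallback would hurt. Doubling the radius gives $17^{m}$ in the bound on $G$, hence base $34\sqrt{2\pi e}$, which exceeds the claimed bound. No fallback is needed. The proof of Lemma~\ref{lem:disupperb} actually gives the integer bound $\mathrm{dist}_\infty(\mathbf{t},\mathcal{L}_{\alpha,q})\le d_1=\lceil(\tfrac12+\varepsilon)M^{1/n}\rceil$. With $\varepsilon=1/8$ we have $\lceil \tfrac58 y\rceil\le\lfloor y\rfloor$ for all $y\ge 4$. The non-integer form you propose to check, $\tfrac58 y+1\le [y]$, is not tight at $y=4$. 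It does fail for $y\in(4.8,5)$ if $[\cdot]$ is the floor, so either use the ceiling or take $\varepsilon\le 1/10$. Since you cite Theorem~\ref{thm:cvpenumalg} as proved, none of this affects your proof.
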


\begin{proof}
We give a deterministic algorithm and analyze its success probability over
$\mathbf{x}\sim[0:M-1]^n$.

If $M>(2d+1)^{2n}$, report no solution.
Otherwise (so $M=2^{O(n)}$), run Algorithm~\ref{alg:gssbyellenum}.
If it outputs a non-empty set $S$, pick any $\mathbf{v}=(v_0,v_1,\dots,v_n)\in S$
and return $\mathbf{c}=(v_1,\dots,v_n)$.
If it outputs $S=\emptyset$, report no solution.

\paragraph{Correctness.}
By construction of $S$ in Algorithm~\ref{alg:gssbyellenum}, every
$\mathbf{v}\in S$ satisfies $\mathbf{v}\in\mathcal{L}_{\alpha,q}$ and
$\|\mathbf{v}-\mathbf{t}\|_\infty\le d$, where $\mathbf{t}=(\alpha\tau,0,\dots,0)$.
Lemma~\ref{lem:connsolclosevec} then implies that $\mathbf{c}=(v_1,\dots,v_n)$
lies in $[-d:d]^n$ and satisfies $\mathbf{x}\cdot\mathbf{c}=\tau$.
Hence whenever the algorithm outputs a vector, it is a correct solution.

\paragraph{Success probability.}
We bound the probability that the algorithm reports no solution while a solution exists.

\smallskip
\noindent\emph{Case 1: $M>(2d+1)^{2n}$.}
By Theorem~\ref{thm:solprobC0}, the probability over $\mathbf{x}$ that there exists
$\mathbf{c}\in[-d:d]^n$ with $\mathbf{x}\cdot\mathbf{c}=\tau$ is at most
\[
\Pr_{\mathbf{x}}[\exists\,\mathbf{c}\in C^n:\ \mathbf{x}\cdot\mathbf{c}=\tau]
\le \frac{|C|^n}{M}=e^{-\Omega(n)}.
\]
Therefore reporting no solution is correct with probability at least $1-e^{-\Omega(n)}$.

\smallskip
\noindent\emph{Case 2: $M\le (2d+1)^{2n}$ (thus $M=2^{O(n)}$).}
In this regime we run Algorithm~\ref{alg:gssbyellenum}.
If a solution exists, by Theorem~\ref{thm:cvpenumalg}, Algorithm~\ref{alg:gssbyellenum} returns $\perp$ or an empty set in probability at most $e^{-\Omega(n)}$.

Hence, with probability at least $1-e^{-\Omega(n)}$, the algorithm returns a correct answer. The running time of the algorithm is bounded by the time of Algorithm~\ref{alg:gssbyellenum}, which is 
\[
\tilde{O}\!\left(\bigl(18\sqrt{2\pi e}\,\bigr)^n\right)\approx \tilde{O}(2^{6.217n}).\qedhere
\]
\end{proof}

\subsection{Generalized Subset Sum with $C=[\pm d]$}
\label{subsec:pmd}

We resolve the worst-case problem in~\cref{subsecgss:pmd} by reducing to approximate \(\mathrm{CVP}_\infty\) algorithm. In the average setting, we adopt a different approach.

We could also prove a bounded distance promise holds in high probability, then we enumerate all lattice vectors closed to the target vector using Lemma~\ref{lem:Ellipsoid-Enum}. Finally we filter out all the vectors that met the conditions.

\begin{theorem}[Theorem 3 in~\cite{chen2022average}]
\label{thm:solprobC}
Let $C=[\pm d]$ for some fixed $d>1$, and fix any constant $\epsilon >0$.
For $\mathbf{x} \sim [0:M-1]^n$ and any integer $\tau$  satisfying $|\tau|=o(Mn)$, we have
\[
\mathop{{\bf Pr}}_{\mathbf{x}}\Big[\exists\hspace{0.05cm} \mathbf{c} \in C^n\setminus\{\mathbf{0}\} : \mathbf{x} \cdot \mathbf{c} = \tau\Big]\hspace{0.1cm}
\begin{cases}
     \ge 1 - o_n(1) & \text{if~} M  \le |C|^{(1-\epsilon)n} \\[0.3ex]
     \le {|C|^n}\big/{M}     & \text{if~} M \geq |C|^n.
\end{cases}
\]    
\end{theorem}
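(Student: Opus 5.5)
The bound splits into two regimes. The upper bound for $M\ge|C|^n$ follows from a union bound. The lower bound for $M\le|C|^{(1-\epsilon)n}$ follows from a second-moment argument on the number of solutions, restricted to a well-behaved subfamily of coefficient vectors.

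\emph{Upper bound.} Fix any $\mathbf{c}\in[\pm d]^n$. Every coordinate is nonzero; take $c_1$. Conditioned on $x_2,\dots,x_n$, the equation $\mathbf{x}\cdot\mathbf{c}=\tau$ fixes $c_1x_1$ to one integer. Since $x_1$ is uniform on $[0:M-1]$, this happens with probability at most $1/M$. A union bound over the $|C|^n$ vectors gives $\Pr[\exists\,\mathbf{c}]\le|C|^n/M$.

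\emph{Lower bound: setup.} Let $N=\#\{\mathbf{c}\in T:\mathbf{x}\cdot\mathbf{c}=\tau\}$. Here $T\subseteq C^n$ is the set of $\mathbf{c}$ whose coordinate sum $\sigma(\mathbf{c})=\sum_i c_i$ satisfies $\bigl|\tfrac{M-1}{2}\sigma(\mathbf{c})-\tau\bigr|\le M\sqrt n$. This makes the mean $\mathbb{E}[\mathbf{x}\cdot\mathbf{c}]$ lie within one standard deviation of $\tau$. Such $\sigma$ exist because $|\tau|=o(Mn)$, so the required $\sigma$ is $o(n)$. By the local CLT for sums of symmetric integer variables, $|T|\ge|C|^n/\mathrm{poly}(n)$.

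\emph{First moment.} For $\mathbf{c}\in T$, the sum $\mathbf{x}\cdot\mathbf{c}$ is a sum of independent variables $c_ix_i$, each uniform on an arithmetic progression of step $|c_i|$ and length $M$. I would use Fourier inversion,
\[
\Pr[\mathbf{x}\cdot\mathbf{c}=\tau]=\int_0^1 e^{-2\pi i\tau\theta}\prod_i\widehat{U_M}(c_i\theta)\,d\theta .
\]
The contribution near $\theta=0$ gives the Gaussian main term $\Theta(1/(M\sqrt n))$. This yields $\mathbb{E}[N]\ge|C|^n/(M\,\mathrm{poly}(n))\ge|C|^{\epsilon n}/\mathrm{poly}(n)\to\infty$.

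\emph{Second moment and the main obstacle.} It remains to show $\mathbb{E}[N^2]\le(1+o(1))\mathbb{E}[N]^2$; Chebyshev then gives $\Pr[N=0]=o_n(1)$. This step carries the real difficulty, on two fronts.
\begin{itemize}
\item \emph{Near-dependent pairs.} Pairs $(\mathbf{c},\mathbf{c}')$ that are equal or nearly proportional (e.g.\ $\mathbf{c}'=\pm\mathbf{c}$ on most coordinates) lose independence. I would count them directly: the number of $\mathbf{c}'$ at Hamming distance $k$ from $\mathbf{c}$ is at most $\binom nk(2d)^k$. For such pairs the joint probability is at most $\Pr[\mathbf{x}\cdot\mathbf{c}=\tau]\cdot O(1/M)$, obtained by fixing a coordinate where they differ. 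The total contribution from small $k$ is then $\mathbb{E}[N]\cdot\mathrm{poly}(n)\,2^{o(n)}/M$, which is negligible.
\item \emph{Generic pairs.} For the remaining pairs I would apply a two-dimensional local limit theorem, again via Fourier analysis over $(\theta,\theta')\in[0,1)^2$. This gives $\Pr[\text{both}]\le(1+o(1))\Pr[\cdot]\Pr[\cdot]$. The key is bounding the integrand away from the origin. The danger is rational $\theta$ with small denominator $q$, where $\prod_i\widehat{U_M}(c_i\theta)$ need not decay; this is arithmetic bias such as all $c_i$ sharing a parity.
\end{itemize}
The assumption $d>1$ enters here: with $d\ge2$, a typical $\mathbf{c}\in T$ has $\Omega(n)$ coordinates in every residue class modulo each small $q$. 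Hence the product decays like $e^{-\Omega(n)}$ at every such $\theta$. The atypical $\mathbf{c}$ are discarded from $T$ at a cost of an $e^{-\Omega(n)}$ fraction.

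Controlling these arithmetic and dependence contributions uniformly only yields $o_n(1)$ rather than $e^{-\Omega(n)}$ in the final failure probability. I expect the minor-arc bound in this second-moment step to be the hardest part.
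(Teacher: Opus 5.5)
The paper gives no proof of this statement. It is quoted from \cite{chen2022average} and used as a black box, so there is no in-paper argument to compare with, and I assess your sketch on its own. Your upper bound is correct and complete: fix a nonzero coordinate, condition on the others, and take a union bound over $|C|^n$ vectors. A second-moment argument is also the natural route to the lower bound, since Chebyshev yields exactly $1-o_n(1)$. However, the mechanisms you propose for the deferred step would not work, for two concrete reasons.

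\emph{First, your error terms are too weak.} Your main terms are $\Theta(1/(M\sqrt n))$ in one dimension and $\Theta(1/(M^2n))$ in two, and $M$ may be as large as $|C|^{(1-\epsilon)n}$. Suppose the integrand away from the origin is bounded pointwise by $e^{-\Omega(n)}$, with the small constant that counting coordinates in residue classes gives. Integrated over a region of area $\Theta(1)$, this leaves an error of $e^{-\Omega(n)}$, which swamps $M^{-2}$. What actually controls the integrand is that each $x_i$ is uniform on a long interval. With $\|\phi\|$ the distance to $\mathbb{Z}$, you have $|\widehat{U_M}(\phi)|\le\min\{1,(2M\|\phi\|)^{-1}\}$, and $|\widehat{U_M}(\phi)|\le 1-c\eta^2$ when $M\|\phi\|\in[\eta,1]$. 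So a factor is small unless its frequency is within $O(1/M)$ of an integer. At frequencies bounded away from the origin the integrand is then $M^{-\Omega(n)}$-small. Integrating over dyadic level sets of $\max_{(a,b)}\|a\theta+b\theta'\|$, taken over the types $(a,b)$ occurring in the pair, gives an error of $O(M^{-2}2^{-\Omega(n)})$. Small-denominator rationals play no special role.

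\emph{Second, your arithmetic condition is placed on the wrong object.} Asking for $\Omega(n)$ coordinates in every residue class modulo each small $q$ is impossible for $q>2d$. It already fails for $d=2$, $q=3$, since $\{\pm1,\pm2\}$ avoids $0 \bmod 3$. More importantly, it concerns a single $\mathbf{c}$, whereas $\mathbb{E}[N^2]$ is governed by the joint types $(c_i,c_i')$. Suppose $c_i\equiv c_i'\pmod 2$ for all $i$. Then $(\mathbf{c}\cdot\mathbf{x},\mathbf{c}'\cdot\mathbf{x})$ lies in an index-$2$ sublattice of $\mathbb{Z}^2$ containing $(\tau,\tau)$. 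The two-dimensional integrand does not decay at $(\tfrac12,\tfrac12)$, and the joint probability is about twice the product. Such pairs are far apart in Hamming distance and consist of individually typical vectors, so neither your near-dependent count nor pruning $T$ removes them.

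The missing ingredient is a crude uniform bound. If $\mathbf{c},\mathbf{c}'$ are linearly independent, pick $i,j$ with $c_ic_j'-c_jc_i'\neq 0$ and condition on the remaining coordinates. This gives $\Pr[\mathbf{c}\cdot\mathbf{x}=\mathbf{c}'\cdot\mathbf{x}=\tau]\le M^{-2}=O(n)\,p_{\mathbf{c}}p_{\mathbf{c}'}$, using $p_{\mathbf{c}}=\Pr[\mathbf{c}\cdot\mathbf{x}=\tau]=\Theta(1/(M\sqrt n))$ on $T$. This bound makes any $e^{-\Omega(n)}$-fraction of pairs negligible, your Hamming-close pairs included. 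You then need the two-dimensional local limit theorem only for pairs whose type counts are near their expectations, so that every type in $[\pm d]^2$ occurs $\Omega(n)$ times.

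This is exactly where $d\ge2$ enters. The types $(1,1)$ and $(2,1)$ alone force $(\theta,\theta')\equiv(0,0)$. For $d=1$, the point $(\tfrac12,\tfrac12)$ survives, and that is the parity obstruction.

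A minor point: $|T|\ge|C|^n/\mathrm{poly}(n)$ is false when $|\tau|\gg M\sqrt n$, because the window for $\sigma(\mathbf{c})$ is then a moderate deviation. However, $|T|\ge|C|^ne^{-o(n)}$ does hold, and it suffices given the $|C|^{\epsilon n}$ slack.
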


The above Theorem states that in the same condition as in Theorem~\ref{thm:solprobC0}, and the only difference is that probability changes from $1-e^{-\Omega(n)}$ to $1 - o_n(1)$.

\begin{theorem}
\label{thm:averagecaseexclude0}
Fixed any $d >1$ and let $C = [\pm d]$, there is a deterministic algorithm for average-case Generalized Subset Sum Problems in running time
\[
\tilde{O}\!\left(\bigl(18\sqrt{2\pi e}\,\bigr)^n\right)\approx \tilde{O}(2^{6.217n}).
\]
Given any $M$ and $\tau$ with $|\tau|=o(nM)$, the algorithm succeeds on $(M,\tau,\mathbf{x})$ with probability at least $1-o_n(1)$, over the draw of $\mathbf{x}\sim [0:M-1]^n$.   
\end{theorem}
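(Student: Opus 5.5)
The plan is to mirror the proof of Theorem~\ref{thm:averagecaseinclude0}, with a modified Algorithm~\ref{alg:gssbyellenum} whose filter keeps only vectors $\mathbf{v}=(v_0,\dots,v_n)$ with $\|\mathbf{v}-\mathbf{t}\|_\infty\le d$ and $v_i\neq 0$ for all $i\ge 1$. The lattice $\mathcal{L}_{\alpha,q}$ and the target $\mathbf{t}=(\alpha\tau,0,\dots,0)$ stay the same, with $\alpha=\max\{d+1,[M^{1/n}]\}$ and $q=\alpha\sum_i|x_i|+|\tau|+1$. Keeping the single unshifted target (rather than the $2^n$ shifted targets of \cref{subsecgss:pmd}) matters, because it preserves the running time. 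The algorithm first handles large $M$: if $M>(2d)^{2n}$, it reports no solution. By the second case of Theorem~\ref{thm:solprobC}, this is wrong with probability at most $|C|^n/M\le (2d)^{-n}=e^{-\Omega(n)}$. Otherwise $M=2^{O(n)}$, and if also $M<4^n$ we can assume the dynamic-programming regime from the parameter assumptions. So we may take $4^n\le M\le (2d)^{2n}$.

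\emph{Soundness.} Since $\alpha>d$ and $q>d\sum_i|x_i|+|\tau|$, Lemma~\ref{lem:connsolclosevec} shows that every $\mathbf{v}$ with $\|\mathbf{v}-\mathbf{t}\|_\infty\le d$ has the form $(\alpha\tau,c_1,\dots,c_n)$ with $\mathbf{c}\cdot\mathbf{x}=\tau$ and $|c_i|\le d$. The extra filter then forces $c_i\in[\pm d]$, so any output is valid.

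\emph{Running time.} This is unchanged from Theorem~\ref{thm:cvpenumalg}. The guard $\lambda_1^{(\infty)}>\tfrac14M^{1/n}$ together with the enumeration radius $R\sqrt{n+1}$, where $R=[M^{1/n}]$, gives $|U|\le\tilde O((9\sqrt{2\pi e})^{n+1})$, and the filtering step costs only polynomial time per enumerated point. The total is therefore $\tilde O((18\sqrt{2\pi e})^n)$.

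\emph{Completeness.} Here the argument differs from the $[-d:d]$ case, and this is the step I expect to be the main obstacle. The guard fails only with probability $e^{-\Omega(n)}$ by Lemma~\ref{lem:lambda1lowerb}, whose proof concerns $\tau=0$ and does not depend on $C$. It remains to show that whenever some $[\pm d]$-solution $\mathbf{c}$ exists, the enumeration actually finds it. In the $[-d:d]$ case, Lemma~\ref{lem:disupperb} only guarantees \emph{some} lattice vector near $\mathbf{t}$, which need not have all coordinates nonzero. The key point for the fix is that the specific vector $(\alpha\tau,\mathbf{c})$ lies within $\ell_\infty$-distance $d$ of $\mathbf{t}$. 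So the enumeration radius should be chosen to cover $\mathbf{t}+d\mathbf{B}_\infty^{n+1}$ directly, i.e.\ enumerate $\mathbf{t}+\max\{d,R\}\sqrt{n+1}\,\mathbf{B}_2^{n+1}$.

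If $d\le R$, nothing changes. If $d>R$, the counting bound from Lemma~4.3 of \cite{dadush2011enumerative} becomes $(1+2d/\lambda_1^{(\infty)})^{n+1}$. Since $\lambda_1^{(\infty)}\ge\tfrac14M^{1/n}\ge 1$, this bound is single exponential only when $d/M^{1/n}=O(1)$. So I would split on the size of $M$. When $M^{1/n}\ge d/2$, the ratio is at most $8$, and the bound $9^{n+1}$ still holds after adjusting $R$ to $\max\{d,[M^{1/n}]\}$. When $M^{1/n}<d/2$, i.e.\ $M<(d/2)^n\le |C|^{(1-\epsilon)n}$, solutions are plentiful by the first case of Theorem~\ref{thm:solprobC}. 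In that regime I would apply Theorem~\ref{thm:solprobC} to a smaller coefficient set $[\pm d_1]$ with $d_1=\Theta(M^{1/n})$, chosen so that $M\le(2d_1)^{(1-\epsilon)n}$. This yields, with probability $1-o_n(1)$, a $[\pm d_1]\subseteq[\pm d]$ solution inside $\mathbf{t}+d_1\mathbf{B}_\infty^{n+1}$, mirroring Lemma~\ref{lem:disupperb}. The filter keeps that solution, and the radius stays at $O(M^{1/n})=O(\lambda_1^{(\infty)})$. This $o_n(1)$ term is exactly what weakens the success probability from $1-e^{-\Omega(n)}$ to $1-o_n(1)$.

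I also need to check $d_1\ge 2$ so that Theorem~\ref{thm:solprobC} applies. This holds because $M\ge 4^n$ gives $M^{1/n}\ge 4$. Combining the cases via a union bound gives success probability $1-o_n(1)$.
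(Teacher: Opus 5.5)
Your route is the paper's. You reject when $M>(2d)^{2n}$. Otherwise you run Algorithm~\ref{alg:gssbyellenum} with the extra filter $v_i\neq 0$, take soundness from Lemma~\ref{lem:connsolclosevec}, the guard from Lemma~\ref{lem:lambda1lowerb}, and completeness from a $[\pm d_1]$-version of Lemma~\ref{lem:disupperb} driven by Theorem~\ref{thm:solprobC}. You correctly note that this analogue only yields \emph{some} $[\pm d_1]$-solution, not the given $\mathbf c$. That is more careful than the paper's write-up, which asserts that $(\alpha\tau,\mathbf c)$ itself lands in the enumeration ball.

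One step fails as written: the intermediate branch $R<d\le 2M^{1/n}$. There you enlarge the radius to $\max\{d,R\}\sqrt{n+1}$ and claim the $9^{n+1}$ bound survives. You only know $\lambda_1^{(\infty)}(\mathcal L_{\alpha,q})>\tfrac14 M^{1/n}$ and $d\le 2M^{1/n}$. So $2d/\lambda_1^{(\infty)}<16$, and Lemma~4.3 of \cite{dadush2011enumerative} gives only $G(d\mathbf B_\infty^{n+1},\mathcal L_{\alpha,q})<17^{n+1}$. The enumeration then costs $\tilde O((34\sqrt{2\pi e})^n)$, which exceeds the claimed $\tilde O((18\sqrt{2\pi e})^n)$. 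The quantity that must be at most $8$ is $2d/\lambda_1^{(\infty)}$, not $d/\lambda_1^{(\infty)}$.

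The enlargement is unnecessary, because your own $d_1$ argument covers every $d>R$. Take $d_1=\lceil(\tfrac12+\varepsilon)M^{1/n}\rceil$ with a small constant $\varepsilon$. Using $M^{1/n}\ge 4$ exactly as in the proof of Theorem~\ref{thm:cvpenumalg}, you get $d_1\le R<d$. The $[\pm d_1]$-solution is then a valid $[\pm d]$-solution inside $\mathbf t+R\mathbf B_\infty^{n+1}$, and it survives both filters. So keep the radius $R\sqrt{n+1}$ of Algorithm~\ref{alg:gssbyellenum} and split only on two cases:
\begin{itemize}
\item $d\le R$: the given $\mathbf c$ is enumerated deterministically.
\item $d>R$: the $d_1$ argument applies, with success probability $1-o_n(1)$.
\end{itemize}
This is the paper's argument, and it keeps the $9^{n+1}$ count and the stated time bound. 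Moving your threshold from $M^{1/n}\ge d/2$ to $M^{1/n}\ge d$ would also restore $2d/\lambda_1^{(\infty)}<8$. But that branch then lies inside the case $d\le R$, since $d$ is an integer.
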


\begin{proof}
Recall that in our notation \(C=[\pm d]=[-d:d]\setminus\{0\}\), and \(|C|=2d\) is a constant.

\paragraph{Algorithm.}
Given \((M,\tau,\mathbf x)\), the algorithm proceeds as follows.

\begin{enumerate}
\item If \(M>|C|^{2n}=(2d)^{2n}\), output ``no solution''.
\item Otherwise (hence \(M=2^{O(n)}\)), run Algorithm~\ref{alg:gssbyellenum} (with the same lattice construction)
and obtain either \(\perp\) or a set \(S\).
If \(S=\emptyset\), output ``no solution''.
\item If \(S\neq\emptyset\), perform one additional filtering step:
\[
S' \leftarrow \bigl\{\mathbf v=(v_0,v_1,\dots,v_n)\in S:\ v_i\neq 0\ \text{for all } i=1,\dots,n\bigr\}.
\]
If \(S'\neq\emptyset\), output any \(\mathbf c=(v_1,\dots,v_n)\) from some \(\mathbf v\in S'\);
otherwise output ``no solution''.
\end{enumerate}

\paragraph{Correctness.}
Whenever the algorithm outputs some \(\mathbf c=(c_1,\dots,c_n)\), it comes from a vector
\(\mathbf v=(\alpha\tau,c_1,\dots,c_n)\in S\subseteq \mathcal L_{\alpha,q}\) satisfying
\(\|\mathbf v-\mathbf t\|_\infty\le d\).
By Lemma~\ref{lem:connsolclosevec}, this implies \(\sum_{i=1}^n c_ix_i=\tau\) and \(|c_i|\le d\).
Moreover, since \(\mathbf v\in S'\), we also have \(c_i\neq 0\) for all \(i\), hence \(\mathbf c\in C^n\).
Therefore the algorithm never outputs an invalid solution.

\paragraph{Success probability over \(\mathbf x\sim[0:M-1]^n\).}
We bound the probability that the algorithm outputs ``no solution'' despite the existence of a solution in \(C^n\).

\smallskip
\noindent\emph{Case 1: \(M>(2d)^{2n}\).}
By Theorem~\ref{thm:solprobC} applied to this set \(C\),
\[
\Pr_{\mathbf x}\bigl[\exists\,\mathbf c\in C^n\setminus\{\mathbf 0\}:\ \mathbf x\cdot\mathbf c=\tau\bigr]
\le \frac{|C|^n}{M}
< \frac{(2d)^n}{(2d)^{2n}}
=(2d)^{-n}
=e^{-\Omega(n)}.
\]
Hence, in this regime, reporting ``no solution'' is correct with probability at least \(1-e^{-\Omega(n)}\).

\smallskip
\noindent\emph{Case 2: \(M\le (2d)^{2n}\) (thus \(M=2^{O(n)}\)).}
Assume there exists \(\mathbf c\in C^n\) with \(\mathbf x\cdot\mathbf c=\tau\). Therefore the corresponding lattice vector \((\alpha\tau,c_1,\dots,c_n)\) lies in
\(\mathcal L_{\alpha,q}\cap(\mathbf t+d\mathbf{B}_\infty^{n+1})\), and all its coordinates \(c_i\) are nonzero.
Like the proof of Lemma~\ref{lem:disupperb}, by replacing Theorem~\ref{thm:solprobC0} with Theorem~\ref{thm:solprobC}, we could obtain that 
\[
(\alpha\tau,\mathbf{c})\in \bigl(\mathbf{t}+[ M^{\frac{1}{n}}]\sqrt{n+1}\,\mathbf{B}_2^{n+1}\bigr)
\]
holds with probability at least \(1-o_n(1)\).

Thus Algorithm~\ref{alg:gssbyellenum} outputs a non-empty set \(S\) that contains this vector with probability at least \(1-o_n(1)\). Since \(c_i\ne 0\), this vector would retain after the filtering step and $S'$ is a non-empty set
with probability at least \(1-o_n(1)\).

Combining both cases, the algorithm is correct on input \((M,\tau,\mathbf x)\) with probability at least
\(1-o_n(1)\) over \(\mathbf x\sim[0:M-1]^n\).

\paragraph{Running time.}
The additional filtering step takes time \(\operatorname{poly}(n)\cdot |S|\le \operatorname{poly}(n)\cdot |U|\)
and therefore does not affect the single-exponential leading term.
The running time is dominated by Algorithm~\ref{alg:gssbyellenum}, hence it is
\[
\tilde{O}\!\left(\bigl(18\sqrt{2\pi e}\bigr)^n\right)\approx \tilde{O}(2^{6.217n}).\qedhere
\]
\end{proof}

\paragraph{Proof of \cref{thm:mainaveragecase}.}
For the case when $C=[-d:d]$, applying \cref{thm:averagecaseinclude0}. For the case when $C=[\pm d]$, we use brute-force enumeration for $d=1$ and $C=\{-1,1\}$, otherwise, we apply \cref{thm:averagecaseexclude0}.\qed

% \section{Conclusion}
% \label{sec:conclusion}
% \input{tex_files/conclusion}
\bibliography{mybib}

\appendix

\section{Proof of~\cref{pro:approxcvp}}
\label{sec:appendixapproxcvp}
\begin{proof}
We combine the deterministic $2$-approximation for $\mathrm{CVP}_\infty$ obtained from Dadush and Kun~\cite{dadush2013lattice} with the covering and boosting framework of Eisenbrand, H\"ahnle, and Niemeier~\cite{eisenbrand2011covering}.

\smallskip
\noindent
\textbf{Step 1: a deterministic $2$-approximation oracle.}
Dadush and Kun (Theorem~1.1 \cite{dadush2013lattice}) give a deterministic algorithm that, for any norm and any $0<\delta\le 1$, computes a $(1+\delta)$-approximate closest vector in time
\[
2^{O(n)}\Bigl(1+\frac{1}{\delta}\Bigr)^{\!n}. 
\]
The $\ell_\infty$ norm is symmetric, so the algorithm applies directly.  Setting $\delta = 1$ yields a deterministic $2$-approximation algorithm for $\mathrm{CVP}_\infty$, which we denote by $\mathcal{A}_2$; its running time is $2^{O(n)}$.

\smallskip
\noindent
\textbf{Step 2: $2$-gap oracle from $\mathcal{A}_2$.}
Given a lattice $\mathcal{L}$, a target $\mathbf{t}$ and a distance $D>0$, we scale the problem by $1/D$ so that $D=1$.  One call to $\mathcal{A}_2$ on $(\mathcal{L}, \mathbf{t})$ returns a vector $\mathbf{v}$ with $\|\mathbf{v}-\mathbf{t}\|_\infty\le 2\cdot\mathrm{dist}_{\infty}(\mathbf{t}, \mathcal{L})$.  
\begin{itemize}
    \item If $\|\mathbf{v}-\mathbf{t}\|_\infty\le 2$, we have found a lattice point in $\mathbf{t}+[-1,1]^n$ (after undoing the scaling).
    \item If $\|\mathbf{v}-\mathbf{t}\|_\infty>2$, then necessarily $\mathrm{dist}_{\infty}(\mathbf{t}, \mathcal{L})>1$, which certifies that $\mathbf{t}+[-1,1]^n$ contains no lattice point.
\end{itemize}
Thus a single call to $\mathcal{A}_2$ implements a deterministic $2$-gap oracle as required in~\cite{eisenbrand2011covering}.

\smallskip
\noindent
\textbf{Step 3: boosting to $(1+\varepsilon)$-approximation.}
Eisenbrand~\textit{et~al.}~\cite{eisenbrand2011covering} show how to boost any $2$-gap oracle to a $(1+\varepsilon)$-approximation using two completely deterministic subroutines.

\begin{enumerate}
    \item \emph{From $2$-gap to $(1+\varepsilon)$-gap.}  
    Theorem~3.2 of~\cite{eisenbrand2011covering} reduces the $(1+\varepsilon)$-gap decision problem to at most
    \[
    2^{n}\Bigl(2+\log\frac{1}{\varepsilon}\Bigr)^{n} = 2^{O(n)}\Bigl(\log\frac{1}{\varepsilon}\Bigr)^{O(n)}
    \]
    calls of the $2$-gap oracle.  The reduction is based on covering the cube $[-1+\delta,1-\delta]^n$ (with $\delta = \varepsilon/(1+\varepsilon)$) by axis-parallel parallelepipeds that, when scaled by $2$, stay inside $[-1,1]^n$ (Theorem~2.6).  The covering is built by a coordinate-wise binary partition of the cube and is therefore deterministic.  Each parallelepiped induces one instance of the $2$-gap problem, and the encoding length of each instance stays polynomial in $n$, $\log(1/\varepsilon)$ and the original input size $b$.

    \item \emph{From $(1+\varepsilon)$-gap to $(1+\varepsilon)$-approximation.}
    Theorem~3.4 of~\cite{eisenbrand2011covering} gives a deterministic binary search that, given a $(1+\delta)$-gap oracle with $\delta = \min\{\varepsilon/5,1/2\}$, outputs a $(1+\varepsilon)$-approximate closest vector.  The search makes
    \[
    O\bigl(\log b + \log n + \log(1/\varepsilon)\bigr)
    \]
    calls to the $(1+\delta)$-gap oracle.
\end{enumerate}

\smallskip
\noindent
\textbf{Total complexity.}
We compose the reductions deterministically:
\begin{itemize}
    \item The $2$-gap oracle is implemented by $\mathcal{A}_2$.
    \item Solving the $(1+\varepsilon)$-gap problem uses $\ell_1 = 2^{O(n)}(\log\frac{1}{\varepsilon})^{O(n)}$ calls to $\mathcal{A}_2$.
    \item The final binary search uses $\ell_2 = O(\log b + \log n + \log\frac{1}{\varepsilon})$ calls to the $(1+\varepsilon)$-gap solver.
\end{itemize}
Hence the overall algorithm calls $\mathcal{A}_2$ at most $\ell_1 \cdot \ell_2 = 2^{O(n)}(\log\frac{1}{\varepsilon})^{O(n)}$ times.  Each call takes $2^{O(n)}\mathrm{poly}(b)$ time and the size of every constructed instance remains $\mathrm{poly}(n,b,\log\frac{1}{\varepsilon})$, so the total running time is
\[
2^{O(n)}\Bigl(\log\frac{1}{\varepsilon}\Bigr)^{O(n)}\mathrm{poly}(b) \;=\; 2^{O(n)}\Bigl(\log\frac{1}{\varepsilon}\Bigr)^{O(n)} .
\]
All components are deterministic, thus the complete procedure is a deterministic $(1+\varepsilon)$-approximation for $\mathrm{CVP}_\infty$.
\end{proof}
\end{document}